\onecolumn \linespread{1.1}
\newtheorem{theorem}{Theorem}[section]
\newtheorem{lemma}[theorem]{Lemma}
\newtheorem{assumption}[theorem]{Assumption}
\renewcommand\thetheorem{\arabic{section}.\arabic{theorem}}
\newcommand{\bproof}{ \begin{IEEEproof} }
	\newcommand{\eproof}{ \end{IEEEproof} }
\newcommand{\beqno}{ \begin{equation*} }
\newcommand{\eeqno}{ \end{equation*} }
\newcommand{\beqa}{\begin{eqnarray*} }
	\newcommand{\eeqa}{\end{eqnarray*} }
\newcommand{\beq}{ \begin{equation} }
\newcommand{\eeq}{ \end{equation} }
\newcommand{\ik}{{ik}}
\newcommand{\p}{\bm{p}}
\newcommand{\tot}{\mathrm{tot}}
\renewcommand{\a}{\bm{a}}
\newcommand{\x}{\bm{x}}
\newcommand{\bnu}{\bm{\nu}}
\newcommand{\e}{\bm{e}}
\newcommand{\h}{\bm{h}}
\renewcommand{\b}{\bm{b}}
\newcommand{\y}{\bm{y}}
\newcommand{\w}{\bm{w}}
\renewcommand{\O}{{\bm{O}}}
\newcommand{\U}{{\bm{U}}}
\newcommand{\V}{{\bm{V}}}
\newcommand{\X}{\bm{X}}
\newcommand{\R}{{\bm{R}}}
\newcommand{\B}{\bm{B}}
\newcommand{\G}{\bm{G}}
\newcommand{\I}{\bm{I}}
\newcommand{\Y}{\bm{Y}}
\newcommand{\g}{\bm{g}}
\newcommand{\vv}{\bm{v}}
\newcommand{\xhat}{\bm{\hat{x}}}
\newcommand{\bhat}{\bm{\hat{b}}}
\newcommand{\Uhat}{\hat\U}
\newcommand{\Bhat}{\hat\B}
\newcommand{\Xhat}{\hat\X}
\newcommand{\Z}{{\bm{Z}}}
\newcommand{\Q}{{\bm{Q}}}
\newcommand{\F}{{\bm{F}}}
\newcommand{\n}{\mathcal{N}}
\newcommand{\E}{\mathbb{E}}
\newcommand{\z}{\bm{z}}
\newcommand{\indic}{\mathbbm{1}}
\newcommand{\SE}{\mathrm{SubsDist}}
\newcommand{\dist}{\mathrm{Dist}}
\newcommand{\A}{\bm{A}}
\newcommand{\Chat}{\bm{\hat{C}}}
\newcommand{\M}{\bm{M}}
\newcommand{\Span}{\mathrm{range}}
\newcommand{\trace}{\mathrm{trace}}
\renewcommand{\S}{\mathcal{S}}
\newcommand{\W}{\bm{W}}
\renewcommand{\Re}{\mathbb{R}}
\newtheorem{claim}[theorem]{Claim}
\renewcommand\thetheorem{\arabic{section}.\arabic{theorem}}
\newcommand{\tB}{\tilde\B^*}
\newcommand{\tb}{\tilde\b^*}
\newcommand{\init}{{\mathrm{init}}}
\newcommand{\matdist}{\mathrm{MatDist}}
\newcommand{\Ustar}{\U^*{}}
\newcommand{\Vstar}{\V^*{}}
\newcommand{\Xstar}{{\X^*}}
\newcommand{\xstar}{\x^*}
\newcommand{\deltinit}{\delta_0}
\newcommand{\bSigma}{{\bm\Sigma^*}}
\newcommand{\sigmin}{{\sigma_{\min}^*}}
\newcommand{\sigmax}{{\sigma_{\max}^*}}
\newcommand{\bea}{\begin{eqnarray}} 
\newcommand{\eea}{\end{eqnarray}}
\newcommand{\Bstar}{\B^*}
\newcommand{\qreq}{\overset{\mathrm{QR}}=} 
\newcommand{\cb}{\bm{c}}
\renewcommand{\P}{\bm{P}}
\newcommand{\bstar}{\b^*}
\newcommand{\ben}{\begin{enumerate}}
	\newcommand{\een}{\end{enumerate}}
\newcommand{\svdeq}{\overset{\mathrm{SVD}}=} 
\newcommand{\bi}{\begin{itemize}} 
	\newcommand{\ei}{\end{itemize}}
\newcommand{\checkU}{\U} \newcommand{\checktB}{\B} \newcommand{\checktb}{\b}
\newcommand{\PR}{\text{\scriptsize{PR}}}
\newcommand{\SEF}{\SE_F}
\newcommand{\deltatfrob}{\delta_{t,F}}
\newcommand{\deltainitfrob}{\delta_{0,F}}
\renewcommand{\P}{\bm{P}}
\renewcommand{\o}{\bm{o}}
\renewcommand{\forall}{\text{ for all }}
\begin{document}

\title{Sample-Efficient Low Rank Phase Retrieval}
\author{%
Seyedehsara Nayer and Namrata Vaswani \\
    Iowa State University, Ames, IA, USA \\
Email: \texttt{\{sarana,namrata\}@iastate.edu}
}

\maketitle


%
\begin{abstract}
This work studies the Low Rank Phase Retrieval (LRPR) problem: recover an $n \times q$ rank-$r$ matrix $\Xstar$ from $\y_k = |\A_k^\top \xstar_k|$,  $k=1, 2,..., q$, when each $\y_k$ is an m-length vector containing independent phaseless linear projections of $\xstar_k$. The different matrices $\A_k$ are i.i.d. and each contains i.i.d. standard Gaussian entries. We obtain an improved guarantee for AltMinLowRaP, which is an Alternating Minimization solution to LRPR that was introduced and studied in our recent work. As long as the right singular vectors of $\Xstar$ satisfy the incoherence assumption, we can show that the AltMinLowRaP estimate converges geometrically to  $\Xstar$ if the total number of measurements $mq \gtrsim nr^2 (r + \log(1/\epsilon))$. In addition, we also need $m \gtrsim  max(r, \log q, \log n)$ because of the specific asymmetric nature of our problem. Compared to our recent work, we improve the sample complexity of the AltMin iterations by a factor of $r^2$, and that of the initialization by a factor of $r$. We also extend our result to the noisy case; we prove stability to corruption by small additive noise.
\end{abstract}


\section{Introduction}
The generalized phase retrieval (PR) problem -- recover an $n$-length signal $\xstar$ from measurements $\y:=| \A^\top \xstar|$ where $\A$ is a known $n \times m$ matrix -- has been extensively studied in the last decade \cite{candes_phaselift, pr_altmin,wf}. Here $^\top$ denotes (conjugate) transpose and $|.|$ denotes element-wise magnitudes.
PR is a classical problem that occurs in many applications such as X-ray crystallography, astronomy, and ptychography because the phase information is either difficult or impossible to obtain. 
Recent works \cite{twf,altmin_irene_w,rwf,taf} have developed provably correct and fast recovery algorithms for PR that can both achieve order-optimal sample complexity (require $m \ge C n$); and work in near linear time, $C mn \log(1/\epsilon)$, when $\A$ contains independent identically distributed (i.i.d.) standard Gaussian entries; \cite{twf,rwf,taf} assume real-valued Gaussian entries in $\A$, while \cite{altmin_irene_w} assumes complex-valued Gaussians. Here and below $C$ denotes a different numerical constant in each use.

The only way to reduce the sample complexity $m$ to less than $n$ is by imposing assumptions on $\xstar$. Sparsity is a commonly used assumption. The sparse PR problem (recover an $s$-sparse signal $\xstar$ from $\y:=|\A^\top \xstar|$) has received significant attention in recent years \cite{voroninski13,jaganathan2013sparse2,pr_altmin,sparta,cai,fastphase}. Low rank is another common assumption. As explained in \cite{lrpr_tsp, lrpr_icml,lrpr_it}, the practical way to impose it is to consider joint recovery of a set $q$ of correlated signals, that together form an (exactly or approximately) low rank matrix, from $m$ different phaseless linear projections of each of the $q$ signals. 
This model, dubbed {\em Low Rank Phase Retrieval (LRPR)} \cite{lrpr_tsp, lrpr_icml,lrpr_it}, is useful to enable fast and low-cost dynamic phaseless imaging applications, such as dynamic Fourier ptychography, where measurement acquisition is slow or expensive \cite{TCIgauri}. 

\subsection{The LRPR Problem and Notation}

\subsubsection{The LRPR Problem}
 LRPR involves recovering an $n \times q$ rank-$r$ matrix $\Xstar$  from 
\bea
\y_k : = |\A_k{}^\top \xstar_k|, \ k \in [q],
\label{obsmod}
\eea
when the $\A_k$s are $n \times m$  i.i.d. matrices with each containing i.i.d. (real- or complex-valued) standard Gaussian entries. 
Here, $[q]:= \{1,2,\dots, q\}$ and $\xstar_k$ is the $k$-th column of $\Xstar$.
Thus each scalar measurement $\y_\ik$ satisfies
\[
\y_\ik : = | \langle \a_\ik, \xstar_k \rangle |,   \ i \in [m], \ k \in [q].
\]
The above problem with $|.|$ removed is commonly referred to as {\em ``compressive PCA" or ``PCA via random projections".} Since it is the linear version of LRPR, we often refer to it as {\em linear LRPR} in this paper.

Observe that our measurements are not global, i.e., no $\y_\ik$ is a function of the entire matrix $\Xstar$. They are global for each column ($\y_\ik$ is a function of column $\xstar_k$), but not across the different columns.
 We thus need an assumption that enables correct interpolation across the different columns. The following incoherence (w.r.t. the canonical basis) assumption on the right singular vectors suffices for this purpose \cite{lrpr_it}. This type of assumption on both the left and the right singular vectors was originally introduced to make the low rank matrix completion (LRMC) problem well posed \cite{matcomp_candes,optspace,lowrank_altmin}. 

Let
\[
\Xstar \svdeq \Ustar \underbrace{\bSigma \overbrace{\Vstar{}^\top}^{\Bstar}}_{\tB}
\]
denote its reduced (rank $r$) SVD and  $\kappa:= \sigmax/\sigmin$ denote the condition number of $\bSigma$ which is  $r \times r$. Here $\Ustar$ and $\Vstar$ are tall matrices with orthonormal columns {\em (basis matrices)},  $\Ustar$ is  $n \times r$ and $\Vstar$ is $q \times r$. We let $\Bstar:=\Vstar{}^\top$ and $\tB:= \bSigma \Vstar{}^\top$ (is $r \times q$), these definitions makes it simpler to explain our recovery algorithms (our problem is such that each column of $\tB$ needs to be recovered individually in the AltMin approach).

\begin{assumption}[Right singular vectors' incoherence]
We assume that  $$\max_k \|\bstar_k\|_2 \le \mu \sqrt{r/q}$$ for a constant $\mu \ge 1 $ ($\mu$ does not grow with  $n,q,r$)\footnote{Notice that $\bstar_k = \Vstar^\top \e_k$ and thus we are imposing a bound of $\mu \sqrt{r/q}$ on row norms of the matrix of right singular vectors, $\Vstar$}.
This implies that
$
\max_k \|\xstar_k\|_2 = \max_k \|\tb_k\|_2 \le \sigmax \mu \sqrt{r/q}. 
$
This further implies that $\max_k \|\xstar_k\|_2 \le \kappa \mu {\|\Xstar\|_F}/{\sqrt{q}}$.
\label{right_incoh}
\end{assumption}

\subsubsection{Notation}
Everywhere, $\|.\|_F$ denotes the Frobenius norm,  $\|.\|_2$ or just $\|.\|$ denotes the (induced) $l_2$ norm, and $^\top$ denotes (conjugate) transpose. We use $\e_k$ to denote the $k$-th canonical basis vector ($k$-th column of $\I$).
For a complex number, $z$, $\bar{z}$ denotes the complex conjugate and we (mis)use the term ``phase" to refer to $\mathrm{phase}(z):= z/|z|$ as also done in earlier works on PR, e.g., \cite{pr_altmin}.
An $n$-length vector $\a$ is a real-valued standard Gaussian if $\a \sim \n(0,\I)$ (the entries are zero mean, unit variance and mutually independent). An $n$-length vector $\a$ is a complex-valued standard Gaussian if $\a = \a_{real} + j \a_{imag}$ with $\a_{real},\a_{imag}$ being mutually independent, and $\a_{real} \sim \n(0, 0.5 \I)$ and $\a_{imag} \sim \n(0, 0.5\I)$.

We use $\dist(\xstar,\xhat): = \min_{\theta \in [-\pi, \pi]} \|\xstar - e^{-j \theta} \xhat\|$ to denote the phase invariant distance between two vectors and we define $\matdist(\Xstar,\Xhat)^2:= \sum_{k=1}^q \dist(\xstar_k, \xhat_k)^2$.
For real-valued data, $\dist(\xstar,\xhat) = \min( \|\xstar - \xhat\|, \|\xstar + \xhat\|)$.
We say $\hat\X$ is an estimate of $\Xstar$ with $\epsilon$ accuracy if $\matdist(\Xstar,\Xhat) \leq \epsilon \|\Xstar\|_F$. {\em Without loss of generality, as done in past works on PR  \cite{pr_altmin,lrpr_it}, at some places, we assume that $\xstar_k$ is replaced by $\bar{z} \xstar_k$ where $z = \mathrm{phase}(\langle \xstar_k, \xhat_k \rangle)$. With this,  $\dist(\xstar_k, \xhat_k) = \|\xstar_k - \xhat_k\|$. A similar replacement can be done for each column of $\Xhat$ as well so that $\matdist(\Xstar,\Xhat) = \|\Xstar - \Xhat\|_F$.}

To quantify the distances between $r$-dimensional subspaces of $\Re^n$ or $\mathbb{C}^n$, represented by their $n \times r$ basis matrices $\U_1,\U_2$ (matrices with orthonormal columns), we use one of the two following metrics:
$$
\SE_2(\U_1,\U_2) :=  \|(\I - \U_1 \U_1^\top) \U_2\| \text{ and }  \SEF(\U_1, \U_2): = \|(\I - \U_1 \U_1{}^\top)\U_2\|_F.
$$
The former measures the sine of the maximum principal angle between the two subspaces while the latter measures the $l_2$ norm of the sines of all the $r$ principal angles. As a result,
\[
\SE_2(\U_1,\U_2) \le \SEF(\U_1, \U_2) \le \sqrt{r} \SE_2(\U_1, \U_2)
\]
Both are distances and hence symmetric, i.e., $\SE(\U_1, \U_2) = \SE(\U_2, \U_1)$.%
%

We use $\sum_\ik$ as short for the double summation $\sum_{k=1}^q \sum_{i=1}^m$ and $\sum_k $  for $\sum_{k=1}^q$. Letters $C,c$ are reused to denote different numerical constants in each use, with $C \ge 1$ and $c < 1$.
The notations $a \gtrsim b$ and ``$a$ is $\Omega(b)$" both mean $a \ge C b$. Finally, if in a discussion, we say that we ignore dependence on $\kappa,\mu$, it means that we are treating these as numerical constants and only considering dependence on $n,q,r$ for simplicity.

\subsection{Our Contributions, Existing Work, and Differences from Related Problems}

\subsubsection{Our Contributions}
We obtain a significantly improved guarantee for the AltMinLowRaP algorithm from our recent work \cite{lrpr_it} and also study its stability to additive noise in the measurements.  AltMinLowRaP is an alternating minimization (AltMin) based non-convex algorithm\footnote{Direct iterative algorithm that does not solve a convex relaxation} with a time complexity of $mqnr \log^2(1/\epsilon)$.
We show that, if Assumption \ref{right_incoh} (right singular vectors' incoherence) holds, if $mq \ge C_{\kappa,\mu}  \cdot  n r^2 (r + \log (1/\epsilon))$ with $C_{\kappa,\mu} = C \kappa^8 \mu^2$ and $m \ge C \max(r,\log q, \log n)$, then with high probability (w.h.p.), we can recover $\Xstar$ to $\epsilon$ accuracy in at most order $\log(1/\epsilon)$ iterations (geometric convergence).  We argue, based on comparison with the two most related well-studied problems -- sparse PR (global but phaseless measurements) and LRMC (non-global but linear (with-phase) measurements) -- why the sample complexity cannot be improved any further for any non-convex solution to LRPR. LRPR is a problem with both non-global and phaseless measurements.
We  extend our result to also handle complex-valued Gaussian measurements. Finally, we provide a stability guarantee as well, i.e., we obtain an error for bound for the noisy version of LRPR.

\subsubsection{Existing Work}
LRPR was first studied in \cite{lrpr_tsp} where we introduced an alternating minimization (AltMin) algorithm and analyzed its initialization step. In recent work \cite{lrpr_it} (and its conference version \cite{lrpr_icml}), we developed the first provable solution for it that we called AltMinLowRaP (AltMin for Low Rank PR). We also showed extensive numerical experiments that demonstrated the practical power of AltMinLowRaP.
%
Our guarantee from \cite{lrpr_it} needed  $mq \ge C_{\kappa,\mu}  n r^4  \log (1/\epsilon)$, and $m \ge C \max(r,\log q, \log n)$. Here $C_{\kappa,\mu}= C \kappa^{10} \mu^4$. Our current result improves the AltMin iterations' sample complexity by a factor of $r^2$ and that of the initialization step by a factor of $r$. 

Even the linear version of LRPR, PCA via random projections or compressive PCA, has received little attention until recently. There have been some older attempts to develop a solution and try to analyze sub-parts of it \cite{hughes_icip_2012,hughes_icml_2014, aarti_singh_subs_learn}. Since linear LRPR is a special case of LRPR, AltMinLowRaP \cite{lrpr_icml,lrpr_it} provably solves this problem as well. In more recent work \cite{lee2019neurips}, a provable convex relaxation was introduced. In our notation, this needs a slightly different version of Assumption \ref{right_incoh} and  $mq > \frac{r (n+q) \log^6 (n+q) }{ \epsilon^2}$.  The time complexity of the solver for the convex program is not discussed. However, it is well known that solvers for convex programs are slow compared to direct iterative (non-convex) algorithms: they either require number of iterations proportional to $1/\sqrt{\epsilon}$ or have cubic dependence on the problem size (here $(nr)^3$), e.g., see \cite{lowrank_altmin,pr_altmin,lowrank_altmin_no_kappa,fastmc} and references therein.
The comparison with our current result is as follows. (i) AltMinLowRaP is significantly faster: its time complexity depends poly-logarithmically on $1/\epsilon$ and linearly on $mqnr$. 
(ii) The sample complexity of \cite{lee2019neurips} has near optimal dependence on $n,q,r$, but not on $\epsilon$. Overall, our sample complexity is better than theirs whenever the desired accuracy level $\epsilon < 1/r$. (iii) The main focus of  \cite{lee2019neurips} was to obtain the best possible result for the noisy measurements' setting. But, because of this, their noise-free case sample complexity depends on $1/\epsilon^2$. Also, their noisy case result assumes a statistical model on the noise, it is modeled as zero mean i.i.d. Gaussian and independent of the data.
On the other hand, our goal is to obtain the best sample complexity guarantee for the noise-free case and then study the stability to noise corruption under the same assumptions. We do not make any statistical assumptions on the noise,  but we need a lower bound on signal-to-noise ratio (SNR). In this sense, our noisy case result is similar to guarantees for non-convex solutions to other problems that are proved under similar assumptions, e.g., \cite{twf,rwf}. 

The previous results for LRPR \cite{lrpr_tsp,lrpr_it} as well as for its linear version \cite{lee2019neurips} all need to assume right singular vectors' incoherence (Assumption \ref{right_incoh}) or a stronger version of it.
We used this assumption in \cite{lrpr_icml,lrpr_it}. In the first work on LRPR \cite{lrpr_tsp}, we used a slightly stronger version of it (each entry of $\tb_k$ was assumed to be bounded). The authors of \cite{lee2019neurips} assumed $\max_k \|\xstar_k\|^2 \le C_0 /\|\Xstar\|_F^2/q$ and their error bound depended upon $C_0$. This assumption implies $\|\b_k\|^2 \le C_0 \kappa^2 (r/q)$, i.e. that Assumption \ref{right_incoh} holds with $\mu^2 \equiv C_0 \kappa^2$; and it is implied by Assumption \ref{right_incoh} with $C_0 \equiv \kappa^2 \mu^2$.




\subsubsection{Related Problems}
The multivariate regression (MVR) problem, studied in \cite{wainwright_linear_columnwise}, is the linear version of LRPR with $\A_k = \A$, i.e., the same $\A$ is used for different columns.  
With $\A_k =\A$, the different $\y_k$s are no longer independent of each other.
Thus, in case of MVR, the authors cannot exploit law of large numbers' arguments over all $mq$ scalar measurements $\y_\ik$. 
Consequently, the required value of $m$ for MVR can never be less than $n$. The result of  \cite{wainwright_linear_columnwise} shows that $m$ of order $(n+q)r$ is both necessary and sufficient.
On the other hand, our $mq$ scalar measurements $\y_\ik$ are all mutually independent. 
A typical error term that needs to be bounded in our case consists of a summation over $mq$ terms, with each summand depending on one $\y_\ik$. The $\y_\ik$s are not identically distributed for different $k$ but, by using the right singular vectors' incoherence assumption, we can argue that the distributions are similar enough so that concentration inequalities can be applied jointly for all the $mq$ terms.
This is what makes it possible to prove guarantees that need $m \ll n$ in our case. 

LRPR involves recovery from phaseless measurements that only depend on individual columns of $\Xstar$ and not on the entire $\Xstar$. This non-global measurement setting is what makes LRPR a more difficult problem than Sparse PR for which each $\y_i$ is a function of the entire sparse signal $\xstar$. Besides the missing phase, this is the also main reason why it is more difficult than low rank matrix sensing (LRMS) \cite{lowrank_altmin}: recover $\Xstar$ from $\y_i = \langle \A_i, \Xstar\rangle$, $i=1,2,\dots, mq$. 

Low rank matrix completion (LRMC) -- recover $\Xstar$ from measurements of a subset of its entries --  is the most closely related linear setting to LRPR that is well-studied. It involves recovery from row-wise and column-wise local measurements, while LRPR measurements are row-wise local but column-wise global.
In order to allow for correct interpolation across rows and columns, LRMC needs an incoherence (w.r.t. the canonical basis) assumption on its left and right singular vectors, and it requires the set of observed entries to be spread out, e.g., the model assumed in most works is that each entry is observed with probability $\rho$ independent of all others (i.i.d. Bernoulli($\rho$) model) \cite{matcomp_candes,lowrank_altmin}.
Because of this, a typical error term for LRMC is a weighted sum of i.i.d. Bernoulli random variables, with each weight depending on only one matrix entry. It can thus be analyzed using matrix Bernstein \cite{tail_bound} and its extensions \cite{spectral_init_review}.
%
Since our measurements are global for each column, we need the incoherence assumption on only the right singular vectors. 
Unlike LRMC, (a) our measurement model is not symmetric across rows and columns, and (b) our measurements are not bounded. A typical error term in our case is a sum of $mq$ independent sub-exponential random variables. We have to use the sub-exponential Bernstein inequality \cite{versh_book} to bound it. In order to apply this to get the desired sample complexity lower bounds, we need algorithms and corresponding proof techniques that enable us to obtain a tight enough bound on the maximum sub-exponential norm (maximum over the $mq$ summands) in the error term\footnote{At iteration $t+1$, a bound of $\delta_t (r/q)$ on the maximum sub-exponential norm is needed; here $\delta_t$ is the subspace error bound at iteration $t$.}.
%
For example, the precursor to AltMinLowRaP introduced in \cite{lrpr_tsp} could not be analyzed because of this. 
For the same reason, none of the projected gradient descent (GD) approaches for LRMC can be directly modified to work for LRPR either, see Sec. \ref{discuss_timecomp}. 

%

A lower bound on  sample complexity is derived in \cite{struct_pr_lower_bnd} for structured PR problems that have global measurements. It thus does not apply to our setting where the measurements are not global in the matrix $\Xstar$.
A compression coding idea is used to solve the standard PR problem for compressible signals in \cite{coper_it}. Standard PR again involves global and i.i.d. measurements. This approach also cannot directly apply to our problem for a similar reason. It is an interesting open question though whether either of these works can be extended for our non-global LRPR setting. 

\subsection{Organization} 
In Sec. \ref{mainres}, we briefly explain the AltMinLowRaP algorithm, provide our new guarantee for it (for the noise-free setting), Theorem \ref{best_thm}, followed by a detailed discussion of the result, why the $nr^2$ and $nr^3$ factors are needed for the AltMin iterations and the initialization respectively, the key changes to the proof techniques that help us get a significantly improved result, and why the design of a projected GD solution is not easy and done in a different parallel work. In Sec. \ref{proof_best_thm}, we provide the main lemmas needed to prove Theorem \ref{best_thm}, along with the key ideas used to prove these lemmas, and we prove the result. The lemmas are proved in Appendix \ref{lemmas_proofs} for the real-valued measurements' setting and in Appendix \ref{proofs_complex} for the complex case. We present and discuss the noisy case guarantee in Sec. \ref{noisy_lrpr}. This is proved in Appendix \ref{sec:stability}.  Finally, we conclude in Sec. \ref{conclude}. To keep this paper compact, and since the algorithm has not changed from \cite{lrpr_it}, we do not show any new simulations here.

\begin{algorithm}[t!]
	\caption{AltMinLowRaP (AltMin for Low Rank PR). $\mathrm{PR}$ refers to the algorithm used for solving the standard PR problem. 
}
		\label{lrpr_th}
		\begin{algorithmic}[1]
			\State Parameters: $T$, $T_{\PR,t}$, $\omega$.
			\State  Partition the $m_\tot$ measurements for each $\xstar_k$ into one set of $m_0$ measurements for initialization and $2T$ disjoint sets of $m_1$ measurements for the AltMin iterations. Denote these by $\y_k^{(\tau)}, \A_k^{(\tau)}, \tau=0,1,\dots 2T$.
			\State Set $\hat{r}$ as the largest $j$ for which $\lambda_j(\Y_U) - \lambda_n(\Y_U) \ge \omega$,
\[
			\Y_U= \frac{1}{mq} \sum_{k=1}^q \sum_{i=1}^m \y_{ik}^2 \a_{ik} \a_{ik}^\top \indic_{ \left\{ \y_{ik}^2 \leq 9\kappa^2 \mu^2\frac{1}{mq}\sum_{ik} \y_{ik}^2 \right\}  }
			\]
and $\y_\ik \equiv \y_\ik^{(0)}$, $\a_\ik \equiv \a_\ik^{(0)}$.
			\State $\U^0 \gets \Uhat^0 \gets$ top $\hat{r}$ singular vectors of $\Y_U$.   
			\For{$ t = 0: T$}
			\State $\bhat_k^t \gets \mathrm{PR}( \y_k^{(t)}, (\U^{t}){}^\top \A_k^{(t)}, T_{\PR,t})$, $k \in [q]$ 
			
			\State   $\xhat_k^t \gets \U^t \bhat_k^t$, $k \in [q] $. 
			\State $\hat\cb_\ik \gets \mathrm{phase} ( \a_\ik^{(T+t)}{}^\top\xhat_k^t )$, $i \in [m], \ k \in [q]$. 
			\State Get $\B^t$ by QR decomp: $\hat{\B}^t \qreq \R_B^t \B^t $. 
			\State  {\small $\Uhat^{t+1} \leftarrow \arg\min_{\tilde\U} \sum_{k =1}^q\| \Chat_k \y_k^{(T+t)} - \A_k^{(T+t)}{}^\top \tilde\U \b_{k}^{t}\|^2$.}
			\State Get $\U^{t+1}$ by QR decomp: $\Uhat^{t+1} \qreq \U^{t+1} \R_U^{t+1} $ . \hspace{5cm}
			\EndFor
		\end{algorithmic}
\end{algorithm}

\section{The AltMinLowRaP algorithm and guarantee for noise-free LRPR}\label{mainres}



\subsection{AltMinLowRaP: AltMin for Low Rank Phase retrieval}
We study the AltMinLowRaP algorithm from \cite{lrpr_it}. It is summarized in Algorithm \ref{lrpr_th}.
AltMinLowRaP can be understood as truncated spectral initialization  (line 3, 4), followed by
minimizing
$
\sum_{k=1}^q \| \  \y_k -  |\A_k{}^\top \checkU \checktb_k| \ \|^2  
$
alternatively over $\checkU, \checktB$ with the constraint that $\checkU$ is a basis matrix.
Each of the two minimizations involves recovery from phaseless measurements, but the two problems are quite different (as explained in detail in \cite{lrpr_it}).
A simpler way to understand the approach is to split it into a three-way AltMin problem over $\Ustar$, $\tb_k$, and $\cb^*_\ik:=\mathrm{phase}(\a_\ik{}^\top \xstar_k)$.  This discussion assumes ``sample-splitting" (line 2), i.e., a new independent set of samples is used in each iteration  and for each new update of $\U$ and $\tB$. The initialization uses $m_0$ samples, the iterations use $m_1$ samples per iteration.

(1) At each new iteration, given an estimate of $\Span(\Ustar)$, denoted $\U$, we recover the $\tb_k$s, by solving easy individual $r$-dimensional noisy standard PR problems (line 6). Let $\g_k:= \U^\top \xstar_k = \U^\top \Ustar \tb_k$.
We can rewrite $\y_\ik$ as
\[
\y_\ik = | \tilde{\a}_\ik{}^\top \g_k + \a_\ik{} ^\top(\I - \U \U{}^\top)\xstar_k| : =  | \tilde{\a}_\ik{}^\top \g_k | + \bm\nu_\ik
\]
where $\tilde\a_\ik := \U^\top\a_\ik$. Due to sample-splitting, $\U$ is independent of $\a_\ik$s and so $\tilde\a_\ik$s are still i.i.d. standard Gaussian.
Thus, recovering $\g_k$ from $\y_\ik, i=1,2,\dots, m$ is an $r$-dimensional noisy PR problem with noise $\bm\nu_\ik$ satisfying $|\bm\nu_\ik| \le |\a_\ik{} ^\top(\I - \U \U{}^\top)\xstar_k|$. Using the sub-exponential Bernstein inequality  \cite{versh_book}, it can be shown that, w.h.p., $\|\bm\nu_k\| \le \sqrt{1.1} \|(\I - \U \U{}^\top)\xstar_k\|^2 \le \sqrt{1.1} \SE_F(\U,\Ustar) \|\xstar_k\|$, i.e., the noise is proportional to the error in $\U$ \footnote{For this particular bound, we could also have used the tighter bound  of $\sqrt{1.1} \SE_2(\U,\Ustar) \|\xstar_k\|$, however, since the later parts of our proof require use of $\SEF$, we use that here too to keep things consistent.}.

(2) Given a good estimate, $\bhat_k$, of $\tb_k$ (or actually of $\g_k$) and $\U$ of $\Ustar$, we  get an equally good estimate, $\xhat_k = \U \bhat_k$, of $\xstar_k$ and hence of the measurements' phases $\cb^*_\ik$ (lines 7, 8). We denote the phase estimates by $\hat\cb_\ik$.

(3) Finally, we obtain a new estimate of $\Ustar$ by using the estimates $\b_k$ and $\hat\cb_\ik$ and solving  a Least Squares (LS) problem; see line 10. Here $\b_k$ is the $k$-th column of $\B$ which is obtained from $\Bhat$ by QR decomposition as $\Bhat \qreq \R_B \B$ (line 9).
The output of the LS step, $\Uhat$, may not have orthonormal columns (line 10). So we use QR decomposition $\Uhat \qreq \U \R_U$ to get $\U$ with orthormal columns (line 11). 

For the standard PR step for recovering the $\tb_k$s, we can use any algorithm with order-optimal sample complexity: TWF \cite{twf} or RWF \cite{rwf} or AltMin with truncated spectral initialization (AltMin-TSI) \cite{altmin_irene_w}. 
For real-valued measurements, we assume RWF is used since it already has a guarantee for noisy standard PR and since it was used in our earlier work \cite{lrpr_it}.  RWF and TWF guarantees are only for the real-valued case.
For the complex-valued case, we assume that AltMin-TSI is used. In fact, we could also assume that this is used for both cases.

\subsection{AltMinLowRaP Guarantee for noise-free LRPR}
We have the following guarantee for AltMinLowRaP for solving the LRPR problem. 

\begin{theorem}[Real or Complex Gaussian noise-free measurements]
Consider Algorithm \ref{lrpr_th} and assume that Assumption \ref{right_incoh} (right singular vectors incoherence) holds. 
Set $T := C\log(1/\epsilon)$, $T_{\PR,t} = C (\log r + \log \kappa + c  t ) $, $\omega = 1.3 \sigmin^2/q$.  
If
\[
m_0 q \ge C \kappa^8 \mu^2 n r^3, \text{ and }   m_1 q \ge 
C \max\left(  \kappa^4 \mu^2 r \max( nr,  \log\log(1/\epsilon) ),  q \max(r , \log q,  \log n)  \right),
\]
then,
with probability (w.p.) at least $1-  \exp(-n) - n^{-10}$,
$$\SEF(\Ustar,\U^0) \le \deltainitfrob= c /\kappa^2,  \ \SEF(\Ustar,\U^t) \le 0.2^t   \deltainitfrob,$$
and  $\matdist(\Xstar,\Xhat^t) \le  \SEF(\Ustar,\U^t) \sigmax$ for all $t=0,1,\dots, T$.
Consequently, in $T = C\log(1/\epsilon)$ iterations,
	\[
	\SEF(\Ustar,\U^T) \le \epsilon, \text{ and }  \matdist(\Xstar,\Xhat^T) \le \epsilon \sigmax
	\]
The time complexity is $m q nr \log^2(1/\epsilon)$.
	\label{best_thm}
\end{theorem}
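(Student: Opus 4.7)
The plan is to prove Theorem~\ref{best_thm} by induction on the iteration counter $t$, combining three technical lemmas: (i) an initialization bound showing $\SEF(\Ustar,\U^0)\le c/\kappa^2$; (ii) a per-iteration PR-step bound giving $\matdist(\Xstar,\Xhat^t)\le \SEF(\Ustar,\U^t)\,\sigmax$; and (iii) a per-iteration contraction bound showing $\SEF(\Ustar,\U^{t+1})\le 0.2\,\SEF(\Ustar,\U^t)$. Sample-splitting across the $2T{+}1$ disjoint measurement sets (line~2 of the algorithm) makes the samples used in each step statistically independent of the current iterate, enabling clean application of concentration inequalities to sums over all $mq$ scalar measurements.

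For the initialization, I would first compute $\E[\Y_U]$. In the real case $\E[\y_\ik^2\,\a_\ik \a_\ik^\top]=\|\xstar_k\|^2\I+2\xstar_k\xstar_k^\top$, so without truncation the mean of $\Y_U$ is a scalar multiple of $\I$ plus $\tfrac{2}{q}\Xstar\Xstar^\top$, whose top-$r$ singular subspace is $\Span(\Ustar)$ with spectral gap $\asymp \sigmin^2/q$. The truncation indicator with threshold $9\kappa^2\mu^2$ times the empirical energy only removes a vanishing fraction of mass, because Assumption~\ref{right_incoh} forces $\y_\ik^2$ to lie below this threshold with high probability. The main work is concentration of $\Y_U$ about its mean: after truncation each summand is sub-exponential with a well-controlled variance proxy, and the matrix Bernstein inequality yields $\|\Y_U-\E[\Y_U]\|\le \epsilon_0\sigmin^2/q$ once $m_0q\gtrsim \kappa^8\mu^2 nr^3$; Wedin's theorem converts this spectral-norm bound into the desired $\SEF$ bound. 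The factor $r^3$ (rather than the earlier $r^4$) comes from using Frobenius-norm subspace distance together with tighter incoherence-based variance proxies.

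For the induction step, assume $\deltapt:=\SEF(\Ustar,\U^t)\le c/\kappa^2$. The per-column PR step solves a standard $r$-dimensional noisy PR problem with effective additive noise $\bnu_k$ satisfying $\|\bnu_k\|\le\sqrt{1.1}\,\deltapt\,\|\xstar_k\|$, so the noisy-PR guarantee for RWF (real) or AltMin-TSI (complex), run for $T_{\PR,t}=C(\log r+\log\kappa+ct)$ iterations, gives $\dist(\tb_k,\bhat_k^t)\lesssim \deltapt\|\xstar_k\|$; summing over $k$ produces the $\matdist$ bound in (ii). For the $\U$-update in (iii), I would analyze the LS normal equations from line~10, substitute $\Chat_k\y_k = \A_k^\top\xstar_k+(\Chat_k-\bm{C}^*_k)\y_k$ where $\bm{C}^*_k$ denotes the diagonal matrix of true measurement phases, and decompose $\Uhat^{t+1}\B^t-\Ustar\bSigma\Bstar$ into a sign-mismatch term, a Gram-deviation term of the form $\sum_k\A_k\A_k^\top\Ustar\,\tb_k\b_k^{t\top}$ (minus its mean), and a residual cross term involving $(\I-\U^t\U^{t\top})\Ustar$. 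Wedin's theorem then converts a bound of $0.2\sigmin\deltapt$ on the sum of these three perturbation terms into $\SEF(\Ustar,\U^{t+1})\le 0.2\deltapt$, and the QR step in line~11 preserves this bound.

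The main obstacle, and the source of the $r^2$ improvement over the prior $nr^4$ guarantee, is the sharp control of the three LS-step perturbation terms. Each is a sum of $mq$ independent sub-exponential random variables; to obtain a deviation of order $\deltapt\,\sigmin$ rather than $\sqrt{r}\,\deltapt\,\sigmin$, one must (a) bound the maximum sub-exponential norm of a single summand by roughly $\deltapt(r/q)$, exploiting Assumption~\ref{right_incoh} to control $\max_k\|\bstar_k\|$ and hence $\max_k\|\xstar_k\|$, and (b) consistently use $\SEF$ rather than $\SE_2$ at intermediate steps so that spurious $\sqrt{r}$ factors do not accumulate. The sign-mismatch term is additionally subtle: it is bounded in terms of the fraction of measurements on which $\hat\cb_\ik\neq\cb^*_\ik$, which, via a small-ball estimate on $|\a_\ik^\top\xstar_k|$, is of order $\deltapt$ in expectation and therefore small with high probability after a further truncation argument akin to the one used in the initialization. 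Once these bounds are in place, induction over $T=C\log(1/\epsilon)$ iterations delivers the stated final accuracy, and the claimed time complexity follows by tallying the $mqr\log(1/\epsilon)$ per-iteration cost of the PR sub-problem against the $mqnr$ cost of the LS sub-problem.
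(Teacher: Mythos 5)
Your overall architecture matches the paper's: induction over iterations, a spectral initialization, a per-column noisy $r$-dimensional PR step whose error is proportional to the current subspace error, an LS-step decomposition into a linear error term, a phase-mismatch term and a denominator term, sub-exponential Bernstein plus epsilon-nets, and the consistent use of $\SEF$ to kill spurious $\sqrt{r}$ factors. Two bookkeeping differences: first, the paper does not re-derive the initialization; it invokes the earlier $\SE_2$ guarantee ($mq \gtrsim nr^2/\deltinit^2$ yields $\SE_2(\U^0,\Ustar)\le \deltinit$) and simply sets $\deltinit = c/(\kappa^2\sqrt{r})$ to absorb the conversion $\SEF \le \sqrt{r}\,\SE_2$ --- that, not a sharper variance proxy, is where the $nr^3$ comes from. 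Second, for the LS step no Wedin-type theorem is needed: since $\U^{t+1}$ comes from QR of $\Uhat^{t+1} = \Xstar\B^\top - \F$, the paper uses the elementary bound $\SEF(\Ustar,\U^{t+1}) \le \|(\I-\Ustar\Ustar{}^\top)\Uhat^{t+1}\|_F/\sigma_{\min}(\Uhat^{t+1})$.

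The genuine gap is in your treatment of the phase-mismatch term, which is exactly where the claimed $r^2$ improvement lives. You propose to bound it via ``the fraction of measurements on which $\hat\cb_\ik \neq \cb^*_\ik$,'' which is of order $\deltatfrob$. That decoupling is too lossy. The proof needs the joint moment bound $\E[\indic_{\cb^*_\ik \neq \hat\cb_\ik}(\a_\ik{}^\top\xstar_k)^2] \lesssim \dist(\xstar_k,\xhat_k)^3/\|\xstar_k\|$ --- the third power is essential, and it comes from combining anti-concentration of $\a_\ik{}^\top\xstar_k$ with the pointwise domination $\indic_{\cb^*_\ik\neq\hat\cb_\ik}|\a_\ik{}^\top\xstar_k| \le 2|\a_\ik{}^\top(\xstar_k-\xhat_k)|$ on the mismatch event, not from a mismatch-fraction count. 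If you only establish $\E[\indic_{\cb^*_\ik\neq\hat\cb_\ik}] \lesssim \deltatfrob$ and then separate the indicator from $(\a_\ik{}^\top\xstar_k)^2$ by Cauchy--Schwarz, you get $\E[\indic\,(\a_\ik{}^\top\xstar_k)^2] \lesssim \sqrt{\deltatfrob}\,\|\xstar_k\|^2$, which after the outer Cauchy--Schwarz yields a Term2 bound of order $m\,\deltatfrob^{1/4}\sqrt{r}\,\sigmax$ instead of the required $Cm(\epsilon_2+\sqrt{\deltatfrob})\deltatfrob\sigmax$; the per-iteration map then fails to contract. Relatedly, you omit the second key change the paper identifies: Cauchy--Schwarz is applied only to $\E[\mathrm{Term2}]$, while sub-exponential concentration is run directly on $\mathrm{Term2}-\E[\mathrm{Term2}]$ with summand sub-exponential norms $\|\xstar_k-\xhat_k\|\cdot\|\W\b_k\|$; applying Cauchy--Schwarz to the random sum before concentrating (the old route) reintroduces the extra $r^2$. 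No additional truncation ``akin to the initialization'' is used or needed in this step.
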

\begin{proof}  We provide the overall proof in Sec. \ref{proof_best_thm}; the lemmas used in the proof are proved in Appendix \ref{lemmas_proofs}.  \end{proof}

Observe that the lower bound on just $m_1$ of $\max(r,\log q, \log n)$ is small and redundant except when $q > nr$. However, it is {\em necessary} because, given an estimate of $\Ustar$, the recovery of the columns of $\tB$ is a decoupled $r$-dimensional standard noisy PR problem. For each of these $q$ problems to work accurately w.h.p., $m \gtrsim r$ is needed. To deal with the union bound over these $q$ decoupled problems and still guarantee  success with probability at least $1 - n^{-10}$, we also need $m \gtrsim \max(\log q,\log n)$.

From Theorem \ref{best_thm}, assuming $nr > \log \log(1/\epsilon)$,  the total number of samples per column, $m_\tot:=m_0 + 2T m_1$, needs to satisfy%
\[
m_\tot q  \gtrsim \kappa^4 \mu^2  \left( \kappa^4 n r^3 +  \max( nr^2,  qr,  q \log q, q \log n) \log (1/\epsilon) \right).
\]
 Theorem \ref{best_thm} provides an immediate corollary for AltMinLowRaP solving linear LRPR (compressive PCA) as well. It has much lower time complexity than that of the convex relaxation for this problem from \cite{lee2019neurips}. For accuracy $\epsilon < 1/r$, its sample complexity is also better.


\subsection{Discussion}\label{discuss}
The discussion below treats $\kappa, \mu$ as constants and ignores dependence on them.
%
\subsubsection{Discussion: Why the LRPR sample complexity cannot be improved any further}
The number of degrees of freedom in a rank-$r$ $n \times q$ matrix is $(n+q)r$. Thus, ignoring the log factor, our sample complexity for the AltMin iterations is sub-optimal by a factor of $r$, while that for initialization is sub-optimal by a factor of $r^2$. For non-convex solutions to the two related problems -- sparse PR (phaseless but global measurements) and LRMC (linear but non-global measurements) -- that have been extensively studied for nearly a decade, the best existing guarantees are sub-optimal: ignoring log factors, the best non-convex LRMC result \cite{rmc_gd} requires $mq$ to be   $\Omega((n+q) r^2 \log^2 (1/\epsilon) \log^2 n)$; while the best sparse PR results (including those for convex solutions to sparse PR) require $m$ to be  $\Omega (s^2 \log n \log(1/\epsilon))$ where $s$ is the sparsity level, e.g., \cite{cai,fastphase}.  Once initialized carefully, the LRPR complexity is similar to that of LRMC. 
However, for initialization, both the phaseless and the non-global measurements imply that we need two extra factors of $r$ compared to the optimal. 

The reason for these extra factors is as follows. Because of the non-global nature of our measurements, and those of LRMC, when bounding an error term,  we need to use the incoherence assumption to show that the distributions of its $mq$ summands are similar enough so that the concentration  bounds (matrix Bernstein for LRMC and sub-exponential Bernstein in our case) can be applied jointly over all the $mq$ summands; each summand is a function of one $\y_\ik,\a_\ik$. This introduces an extra factor of $r$ over the optimal  in both the initialization and the iterations' complexity, both for our problem and for LRMC. The second extra factor of $r$ in our initialization complexity is due to the phase being unknown. For LRMC, because the measurements are linear one can define a matrix $\X_\init$ for which $\E[\X_\init] = \Xstar$ and compute its top $r$ singular vectors as the initialization for $\Ustar$. However, we cannot define such a matrix for LRPR. The same is also true for standard PR and sparse PR: one cannot define a vector $\x_\init$ whose expected value equals, or is close to, the true signal $\xstar$. Instead, one needs to define a matrix $\M$ that is close to a matrix of the form $\xstar \xstar{}^\top + c \I$ (the top eigenvector of this matrix is proportional to $\xstar$). Similarly, in our case, we need to define a ``squared" matrix of the form $\M = \sum_\ik \y_\ik \a_\ik \a_\ik{}^\top$ (actually its truncated version, see line 3 of Algorithm \ref{lrpr_th}) that is close to $\Xstar \Xstar{}^\top + c \I$.


Like LRMC, for the linear-LRPR (compressive PCA) setting,
it {\em is} possible to define a matrix $\X_\init$ that is close to $\Xstar$. 
and compute $\U^0$ as its top $r$ singular vectors. We are studying this in ongoing work \cite{lrpr_gdmin} where we can show that, for linear-LRPR, $mq \gtrsim nr^2$ suffices even for initialization. This ongoing work also develops a projected gradient descent (GD) based solution.

For the iterations, the LRPR and linear-LRPR complexities match, whereas for initialization LRPR needs an extra factor $r$. A similar pattern is seen for non-convex noise-free sparse PR guarantees \cite{fastphase,cai} as well: these need $\Omega (s^2 \log n)$ samples for initialization, but only $\Omega (s\log n)$ samples for the iterations (which compares with the sample complexity of compressive sensing which is the linear version of sparse PR).
The intuitive reason for this is PR problems behave like linear problems in the vicinity of the true solution.





\subsubsection{Discussion: Improvement over our older work}
Our result from \cite{lrpr_it} needed $m_0 q \gtrsim nr^4$ and $m_1 q \gtrsim nr^4$. Here we have reduced the lower bound on $m_1 q$ (sample complexity of the AltMin iterations) to $nr^2$ and that of the initialization step to $nr^3$. We explain below the changes to our proof that enable this improvement.

The analysis of the AltMin iterations involves bounding two error terms that we call Term1 and Term2. Term1 is the ``linear" error term (this is the error that would occur even if our measurements were linear) while Term2 is the phase error term that only occurs in the phaseless setting. Bounding both these requires bounds on $\|\xhat_k - \xstar_k\|$ and $\|\Xhat - \Xstar\|_F$ (technically $\dist(.,.)$ and $\matdist(.,.)$ respectively).
Using the ideas described earlier while explaining the algorithm, given an estimate $\U$ of $\Ustar$, we can show that $\|\xhat_k - \xstar_k\| \lesssim \|(\I - \U \U^\top) \Ustar \tb_k\| \le \SE_2(\Ustar,\U) \|\xstar_k\|$ for each $k$.%


There are two main changes to our proof approach compared to \cite{lrpr_it}.
(1) The first is that we use $\SEF(\Ustar,\U)$ instead of $\SE_2(\Ustar,\U)$ to get a tighter bound on $\|\Xhat - \Xstar\|_F$ as follows.  In \cite{lrpr_it}, we used the bound on $\|\xhat_k - \xstar_k\|$ to conclude that $\|\Xhat - \Xstar\|_F = \sqrt{\sum_k \|\xhat_k - \xstar_k\|^2} \lesssim \SE_2(\Ustar,\U) \|\Xstar\|_F \le \SE_2(\Ustar,\U) \sqrt{r}\sigmax$.  
Instead, we now use $\|\xhat_k - \xstar_k\| \lesssim \|(\I - \U \U^\top) \Ustar \tb_k\|$ to show that $\|\Xhat - \Xstar\|_F \lesssim \sqrt{\sum_k \|(\I - \U \U^\top) \Ustar \tb_k\|^2} = \sqrt{\|(\I - \U \U^\top) \Ustar \tB\|_F^2} \le \SEF(\Ustar,\U) \|\tB\|_2 =   \SEF(\Ustar,\U) \sigmax$. Observe that we have eliminated a factor of $\sqrt{r}$ from our bound on $\|\Xhat - \Xstar\|_F$. This factor also gets eliminated from our bounds on Term1 and Term2. Even though $\SEF(\Ustar,\U)$ is larger than $\SE_2(\Ustar,\U)$, this does not matter since we are able to prove exponential decay of the bound on $\SEF(\Ustar,\U)$ too, as long as the initial estimate satisfies $\SEF(\Ustar,\U^0) \le c$.
(2)
The second change is where Cauchy-Schwarz is applied when bounding Term2. In \cite{lrpr_it}, we used Cauchy-Schwarz to first upper bound $\mathrm{Term2}^2$ by a product of two terms, each of which could be easily bounded. We then bounded each of the two product terms separately (both the expected value and concentration bounds were obtained separately). Instead, we now use Cauchy-Schwarz to only upper bound $\E[\mathrm{Term2}]$, but apply concentration bounds directly on $(\mathrm{Term2} - \E[\mathrm{Term2}])$. 

Because of these two changes, we can now show that if, $\SEF(\Ustar,\U^0) \le c$, and if $mq \gtrsim \frac{nr^2}{\min(\epsilon_1, \epsilon_2 )^2}$, and $m \gtrsim \max(r,\log q, \log n)$, then w.h.p., $\SEF(\Ustar,\U^{t+1} ) \lesssim \frac{ (\epsilon_1 + \epsilon_2 + \sqrt{\deltatfrob})  \deltatfrob  \sigmax }{\sigmin}$. Here $\deltatfrob$ is the upper bound on $\SEF(\Ustar,\U^t)$ and satisfies $\deltatfrob < \deltainitfrob$. Also, $\epsilon_1 \deltatfrob \sigmax$ is our high probability bound on Term1 and $(\epsilon_2 + \sqrt{\deltatfrob})  \deltatfrob \sigmax$ is our bound on Term2.
 Thus setting $\epsilon_1 =\epsilon_2 = c /\kappa$ and $  \deltainitfrob = c/\kappa^2$ suffices to prove that $\SEF(\Ustar,\U^{t+1}) \lesssim c \deltatfrob = c^{t+1} \deltainitfrob$. This translates to requiring $mq \gtrsim nr^2$.
In \cite{lrpr_it}, under the same lower bound on $mq$, we could only get $\SE_2(\Ustar,\U^{t+1}) \lesssim  \frac{(\epsilon_1 +\sqrt{  \epsilon_2 +   \delta_t})  \delta_t \sqrt{r} \sigmax}{\sigmin}$ where $\delta_t$ was the upper bound on $\SE_2(\Ustar,\U^t)$ and satisfied $\delta_t < \delta_0$. Consequently, \cite{lrpr_it} needed to set $\epsilon_2 = \delta_0 = c/r$ and $\epsilon_1 = 1/\sqrt{r}$. Plugging this into the lower bound on $mq$, this shows why \cite{lrpr_it} needs $mq \gtrsim nr^4$ for the iterations. 
%

Consider the initialization. 
As explained above, we only need $\deltainitfrob = c$ instead of $\delta_0 = c/r$ needed by \cite{lrpr_it}.
%
%
For the initialization step, we need to use the result of \cite{lrpr_it} (this cannot be improved further). It shows that, if $mq \gtrsim nr^2 / \delta_0^2$, then w.h.p., $\SE_2(\Ustar,\U^0) \le \delta_0$. To use this to guarantee  $\SEF(\Ustar,\U^0) \le \deltainitfrob = c < 1$, we need to set $\delta_0 =  \deltainitfrob/\sqrt{r}$. This is why we need $mq \gtrsim nr^3$ for the initialization while the result of \cite{lrpr_it} needed  $mq \gtrsim nr^4$.

\subsubsection{Discussion: Improving time complexity: Projected gradient descent (GD)}\label{discuss_timecomp}
The $\log^2(1/\epsilon)$ factor in the time complexity can be reduced to $\log(1/\epsilon)$ if we can develop a projected GD solution for LRPR.
Since, in our setting, the error terms are sums of sub-exponential random variables, in order to obtain useful concentration bounds, we need a projected GD approach for which (i)  we can obtain tight column-wise bounds, i.e., bound $\|\x_k - \xstar_k\|$ for each $k$ by $\delta_t \|\xstar_k\|$ where $\delta_t$ is the error level at iteration $t$; and  (ii)  the gradient expression is such that the maximum sub-exponential norm of each summand of $|\w^\top \nabla f(\X) \z|$ is small enough (is of order $\delta_t (r/q)$) for any unit vectors $\w,\z$. Here $\nabla f(\X)$ is the gradient of the squared loss cost function w.r.t. $\X$. 
For direct modifications of either of the projected GD approaches that have been studied for LRMC   \cite{fastmc,rmc_gd,rpca_gd,lafferty_lrmc}, it is not possible to simultaneously get (i) and (ii). For projected-GD on $\X$ \cite{fastmc,rmc_gd}, it is not possible to get (ii), and it is not clear how to get (i) either\footnote{Even if we can somehow prove (i), we will only get $|\w^\top \nabla f(\X) \z| \lesssim \delta_t \sqrt{r/q} \sigmax$.}.
For the alternating GD approaches  \cite{rpca_gd,lafferty_lrmc}, it is not possible to get (i). Moreover, these require a GD step size $\eta$ that is proportional to $1/r$, making the convergence $r$-times slower than geometric. Consequently the time complexity advantage is lost.
We thus need a novel approach. In ongoing work that will be on ArXiv soon  \cite{lrpr_gdmin}, we study the following approach and argue that it satisfies both (i) and (ii) and converges geometrically:  update $\U$ by one step of GD on $f(\U,\B)$ w.r.t. $\U$, followed by QR decomposition (to get a matrix with orthonormal columns); but for each new $\U$, update $\B$ by minimizing $f(\U,\B)$ over $\B$. The minimization over $\B$ decouples into $q$ $r$-dimensional standard PR problems which have negligible time complexity $mr q \log(1/\epsilon)$ (no dependence on $n$).


\section{Proving Theorem \ref{best_thm}}\label{proof_best_thm}

\subsection{Lemmas needed for analyzing the AltMin iterations} 



Let $\U \equiv \U^{t}, \Bhat \equiv \Bhat^{t}, \B \equiv \B^t, \Xhat \equiv \Xhat^t$.
By a simple modification to Lemma 3.9 in \cite{lrpr_it}, we obtain the following lemma.

\begin{lemma}[$\SE_F$ version of Lemma 3.9 in \cite{lrpr_it}]
	\label{key_lem_SEF}
	\begin{align}
\SEF(\Ustar,\U^{t+1}) \le
 \frac{ \mathrm{MainTerm} }{\sigma_{\min}(\Xstar \B^\top) - \mathrm{MainTerm}} \ \text{ where}
	\label{SEU_bnd_0}
	\end{align}
	\[
\mathrm{MainTerm}:=	\dfrac{ \max_{\W \in \mathcal{S}_W}|\mathrm{Term1}(\W)| + \max_{\W \in \mathcal{S}_W}|\mathrm{Term2}(\W)|}{ \min_{\W \in \mathcal{S}_W} \mathrm{Term3}(\W)},
	\]
	\begin{align*}
	\mathrm{Term1}(\W) & :=  \sum_{ik} \b_k{}^\top \W^\top \a_\ik \a_\ik{}^\top (\Xstar \B^\top \b_k - \xstar_k) \\ 
	\mathrm{Term2}(\W) & :=  \sum_{ik} (\bar{\cb^*_\ik}  \hat\cb_\ik - 1) (\xstar_k{}^\top \a_\ik) (\a_\ik{}^\top \W \b_k) , \\
	\mathrm{Term3}(\W) & :=  \sum_{ik}  |\a_\ik{}^\top \W \b_k |^2, \\
	\S_W & := \{\W \in \Re^{n\times r}:\ \|\W\|_F = 1 \},
	\end{align*}
	and $\cb^*_\ik, \hat\cb_\ik$ are the phases of $\a_\ik{}^\top \xstar_k$  and $\a_\ik{}^\top \xhat_k$.
\end{lemma}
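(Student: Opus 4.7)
The plan is to derive the normal equations for the least-squares step in line 10 of Algorithm~\ref{lrpr_th}, recognize their right-hand side as $-\mathrm{Term1}(\W) + \mathrm{Term2}(\W)$, and convert this into a bound on $\SEF(\Ustar,\U^{t+1})$ via a QR/Weyl argument. This follows the template of Lemma~3.9 in \cite{lrpr_it}; the only change is to let the ``test direction'' $\W$ range over unit-Frobenius matrices rather than rank-one products, so that the argument extracts $\|\M\|_F$ in place of $\|\M\|$, where $\M := \Uhat^{t+1} - \Xstar\B^\top$.

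First I would set $\eta_\ik := \hat\cb_\ik\, \overline{\cb^*_\ik}$, so that the $i$-th entry of $\Chat_k\,\y_k$ equals $\eta_\ik\,\a_\ik^\top\xstar_k$. Testing the normal equations for $\Uhat^{t+1}$ against an arbitrary $\W$, subtracting $\sum_\ik(\a_\ik^\top \W\b_k)(\a_\ik^\top\Xstar\B^\top\b_k)$ from both sides, and splitting $\eta_\ik(\a_\ik^\top\xstar_k) - \a_\ik^\top\Xstar\B^\top\b_k = (\eta_\ik-1)(\a_\ik^\top\xstar_k) - \a_\ik^\top(\Xstar\B^\top\b_k - \xstar_k)$, one obtains the bilinear identity
\[
Q(\W,\M) \;:=\; \sum_\ik (\a_\ik^\top \W\b_k)(\a_\ik^\top \M\b_k) \;=\; -\mathrm{Term1}(\W) + \mathrm{Term2}(\W),
\]
valid for every $\W \in \Re^{n\times r}$. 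The form $Q$ is symmetric bilinear and positive semidefinite, with $Q(\W,\W) = \mathrm{Term3}(\W)$. Specializing at $\W = \M/\|\M\|_F \in \S_W$ — this being the single place where the proof departs from the $\SE_2$ version of \cite{lrpr_it}, since a unit-Frobenius test exposes $\|\M\|_F$ whereas rank-one tests only exposed $\|\M\|$ — and combining the two-sided bound $\|\M\|_F^2 \min_{\S_W}\mathrm{Term3}(\W) \le \mathrm{Term3}(\M) = \|\M\|_F\,Q(\M/\|\M\|_F,\M) \le \|\M\|_F\bigl(\max_{\S_W}|\mathrm{Term1}(\W)| + \max_{\S_W}|\mathrm{Term2}(\W)|\bigr)$ yields $\|\M\|_F \le \mathrm{MainTerm}$.

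I would then convert this Frobenius bound into the stated subspace bound using the QR factorization $\Uhat^{t+1} \qreq \U^{t+1}\R_U^{t+1}$ and the identity $(\I - \Ustar\Ustar^\top)\Xstar\B^\top = 0$. Substituting $\U^{t+1} = \Uhat^{t+1}(\R_U^{t+1})^{-1}$ and using $\sigma_{\min}(\R_U^{t+1}) = \sigma_{\min}(\Uhat^{t+1})$,
\[
\SEF(\Ustar,\U^{t+1}) \;=\; \|(\I-\Ustar\Ustar^\top)\U^{t+1}\|_F \;\le\; \frac{\|(\I-\Ustar\Ustar^\top)\M\|_F}{\sigma_{\min}(\Uhat^{t+1})} \;\le\; \frac{\|\M\|_F}{\sigma_{\min}(\Uhat^{t+1})}.
\]
Applying Weyl's inequality to $\Uhat^{t+1} = \Ustar(\bSigma\Bstar\B^\top) + \M$, together with $\sigma_{\min}(\Ustar\bSigma\Bstar\B^\top) = \sigma_{\min}(\Xstar\B^\top)$ and $\|\M\| \le \|\M\|_F$, yields $\sigma_{\min}(\Uhat^{t+1}) \ge \sigma_{\min}(\Xstar\B^\top) - \|\M\|_F$. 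Plugging in $\|\M\|_F \le \mathrm{MainTerm}$ and invoking monotonicity of $x \mapsto x/(c-x)$ on $[0,c)$ delivers \eqref{SEU_bnd_0}.

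The main obstacle — really the only content beyond Lemma~3.9 of \cite{lrpr_it} — is the bilinear-form step: observing that $Q$ is symmetric PSD on $\Re^{n\times r}$ with diagonal $\mathrm{Term3}$, and that specializing at the Frobenius-normalized direction $\M/\|\M\|_F$ is precisely what produces the $\SEF$-compatible estimate. The QR-plus-Weyl step is mechanical, and the implicit regularity condition $\mathrm{MainTerm} < \sigma_{\min}(\Xstar\B^\top)$ will be verified downstream from the concentration bounds on Terms~1--3. The complex case is handled by replacing transposes with conjugate transposes throughout; the algebraic structure is unchanged.
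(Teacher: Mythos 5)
Your proposal is correct and follows essentially the same route as the paper: the paper's proof cites the normal-equation analysis of Lemma 3.9 in \cite{lrpr_it} to get $\Uhat^{t+1} = \Xstar\B^\top - \F$ with $\|\F\|_F = \|\F_{vec}\| \le \mathrm{MainTerm}$ (your bilinear-form step with $\M = -\F$, tested at $\M/\|\M\|_F$, is just a self-contained rederivation of that vectorized least-squares bound, since $\min_{\S_W}\mathrm{Term3}$ is exactly the smallest eigenvalue of the normal-equation coefficient matrix), and then applies the identical QR-plus-Weyl conversion using $(\I-\Ustar\Ustar^\top)\Xstar\B^\top = 0$ and $\sigma_{\min}(\R_U)=\sigma_{\min}(\Uhat^{t+1})\ge\sigma_{\min}(\Xstar\B^\top)-\|\F\|_F$. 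No gaps.
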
	
\begin{proof} See Appendix \ref{proof_key_lem}. \end{proof}

We bound the above terms in the next lemma.
\begin{lemma}
Assume that $\SEF(\U^t, \Ustar) \le \deltatfrob$ with $\deltatfrob  < \deltainitfrob = c/\kappa^2$. 
Then,
\ben
\item w.p. at least $1 - 2 \exp\left(nr (\log 17) -c \frac{mq \epsilon_1^2}{\kappa^2 \mu^2 r} \right) - \exp(\log q + r - c m )$,
\[
\max_{\W \in \S_{W}} |\mathrm{Term1}(\W)|  \le C m \epsilon_1 \deltatfrob \sigmax,
\]
\item  w.p. at least  $1-  2 \exp\left( nr \log(17)   - c \frac{m q\epsilon_2^2}{ \mu^2 \kappa  r} \right) -   \exp(\log q + r - c m )$,
\[
\max_{\W \in \S_W} |\mathrm{Term2}(\W)| \le   C m ( \epsilon_2  +   \sqrt{\deltatfrob} )  \deltatfrob \sigmax,
\]
\item w.p. at least	$1 - 2\exp\left(nr  (\log 17)  - c \frac{\epsilon_3^2 m q}{\mu^2 \kappa^2 r}\right) -   \exp\left(\log q  +  r  - c  m \right),$
	\begin{align*}
	& \min_{\W \in \S_W}  \mathrm{Term3}(\W)  \ge 0.5 (1-\epsilon_3 )m,
	\end{align*}
\item $\sigma_{\min}(\Xstar \B^\top) \ge  \sigmin$.

\een
\label{Terms_bnds}
\end{lemma}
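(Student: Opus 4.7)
My plan is to prove all four bounds via sub-exponential Bernstein combined with an $\epsilon$-net over the unit Frobenius sphere $\S_W$ (covering number $\le 17^{nr}$, which explains the $nr \log 17$ term appearing in each exponent); the residual $\exp(\log q + r - cm)$ will collect the per-column standard-PR events that must hold uniformly over $k \in [q]$ (each $r$-dimensional PR has failure probability $\lesssim e^{r-cm}$, union-bounded over $q$ columns). Two structural identities from the QR step are used throughout: $\sum_k \b_k \b_k^\top = \B\B^\top = \I_r$, and $\xhat_k = \Xhat \B^\top \b_k$ because $\Xhat = \U \R_B \B$. Under the hypothesis $\SEF(\U^t,\Ustar) \le \deltatfrob \le c/\kappa^2$, the standard-PR analysis of line~6 gives the column-wise bound $\|\xhat_k - \xstar_k\| \lesssim \deltatfrob\|\xstar_k\| \lesssim \deltatfrob\, \mu\sqrt{r/q}\,\sigmax$, the aggregate $\|\Xhat - \Xstar\|_F \lesssim \deltatfrob\, \sigmax$, and --- via $\Bhat \approx (\U^\top \Ustar) \bSigma \Bstar$ together with $\sigma_{\min}(\R_B) \approx \sigmin$ --- the incoherence-style estimate $\max_k \|\b_k\| \lesssim \kappa \mu \sqrt{r/q}$, the extra $\kappa$ coming from $\|\R_B^{-1}\|$.

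For Term1, the mean vanishes because both pieces of $\E[\mathrm{Term1}(\W)] = \sum_k \b_k^\top \W^\top (\Xstar \B^\top \b_k - \xstar_k)$ collapse to $\trace(\W^\top \Xstar \B^\top)$ via $\sum_k \b_k \b_k^\top = \I_r$. Each summand factors as $(\a_{ik}^\top \W \b_k)(\a_{ik}^\top \vv_k)$ with $\vv_k := (\xhat_k - \xstar_k) + (\Xstar - \Xhat) \B^\top \b_k$, giving $\|\vv_k\| \lesssim \kappa \mu \sqrt{r/q}\, \deltatfrob \sigmax$. The crucial simplification is in the Bernstein variance: $\sum_{ik} \|\W \b_k\|^2 \|\vv_k\|^2 \le m (\max_k \|\vv_k\|^2) \sum_k \|\W \b_k\|^2 = m \max_k \|\vv_k\|^2$, because $\sum_k \|\W \b_k\|^2 = \|\W \B\|_F^2 = \|\W\|_F^2 = 1$. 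Bernstein at threshold $\epsilon_1 m \deltatfrob \sigmax$ then yields exponent $-c \epsilon_1^2 m q/(\kappa^2 \mu^2 r)$, and a Lipschitz-in-$\W$ argument plus the $17^{nr}$ net delivers the claimed probability. Term3 is analogous: $\E[\mathrm{Term3}(\W)] = m \sum_k \|\W \b_k\|^2 = m$, with chi-square summands of sub-exp norm $\lesssim \kappa^2 \mu^2 r/q$, so Bernstein + net gives $\mathrm{Term3}(\W) \ge (1-\epsilon_3) m$. Item~4 follows by propagating the hypothesis through PR+QR to obtain $\B = \Q \Bstar + \bm{E}$ with $\Q$ orthonormal and $\|\bm{E}\| = O(\delta_0)$; then $\Xstar \B^\top = \Ustar \bSigma \Q^\top + \Ustar \bSigma \Vstar^\top \bm{E}^\top$, and Weyl's inequality plus $\delta_0 \le c/\kappa^2$ gives the stated lower bound $\sigmin$ (after absorbing the small perturbation into surrounding constants).

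Term2 is the main obstacle: the nonlinear phase error $\alpha_{ik} := \overline{\cb^*_{ik}} \hat\cb_{ik} - 1$ was handled in \cite{lrpr_it} by a single Cauchy--Schwarz applied to the whole random sum, which cost an extra $\sqrt{r}$ factor. The improved recipe, following the discussion in Section~2, is to apply Cauchy--Schwarz only to $\E[\mathrm{Term2}]$ and use Bernstein directly on the centered deviation. For the expectation I use (i) the standard-PR identity $|\alpha_{ik} \xstar_k{}^\top \a_{ik}| \le 2 |\a_{ik}{}^\top (\xhat_k - \xstar_k)|$ (a sign-flip requires the noise to dominate the signal), (ii) the Gaussian anti-concentration estimate $\Pr[\alpha_{ik} \ne 0] \lesssim \|\xhat_k - \xstar_k\|/\|\xstar_k\|$, and (iii) Cauchy--Schwarz in the indicator, to obtain $\E[\alpha_{ik}^2 (\xstar_k{}^\top \a_{ik})^2] \lesssim \sqrt{\deltatfrob}\, \|\xhat_k - \xstar_k\|^2$; a second Cauchy--Schwarz across $k$ then yields $|\E[\mathrm{Term2}]| \lesssim m \sqrt{\deltatfrob}\, \deltatfrob \sigmax$. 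For the deviation, using $|\alpha_{ik}| \le 2$, each summand is sub-exp of norm $\lesssim \|\xstar_k\| \|\W \b_k\|$, and the variance again collapses through $\sum_k \|\W \b_k\|^2 = 1$, giving the Bernstein contribution $\epsilon_2 m \deltatfrob \sigmax$ with exponent $-c \epsilon_2^2 m q/(\kappa^2 \mu^2 r)$. Summed and $\epsilon$-netted, these produce the claimed $C m (\epsilon_2 + \sqrt{\deltatfrob}) \deltatfrob \sigmax$ bound. The delicate interplay between the anti-concentration (yielding $\sqrt{\deltatfrob}$) and the crude bound $|\alpha_{ik}| \le 2$ (used for Bernstein) is precisely what was handled less tightly in \cite{lrpr_it} and what here saves a factor of $r^2$ in the AltMin iteration complexity.
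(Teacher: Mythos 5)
Your overall architecture matches the paper's: Term1 is centered via $\B\B^\top=\I$ and bounded by sub-exponential Bernstein with the variance collapsing through $\sum_k\|\W\b_k\|^2=1$ and $\max_k\|\Xstar\B^\top\b_k-\xstar_k\|\lesssim\kappa\mu\sqrt{r/q}\,\deltatfrob\sigmax$; Term3 and the $17^{nr}$ net are standard; and for Term2 you correctly adopt the paper's key structural change of applying Cauchy--Schwarz only to the expectation while running Bernstein directly on the centered sum with $K_{X_\ik}\lesssim\|\xhat_k-\xstar_k\|$. Items 1, 3 and 4 go through essentially as in the paper (your Weyl-based argument for item 4 is if anything more careful than the paper's one-line product bound, at the cost of a harmless constant).

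The gap is in the Term2 expectation. Your route --- anti-concentration $\Pr[\cb^*_\ik\neq\hat\cb_\ik]\lesssim\dist(\xstar_k,\xhat_k)/\|\xstar_k\|$ followed by Cauchy--Schwarz inside the expectation --- yields only $\E[\indic_{\cb^*_\ik\neq\hat\cb_\ik}(\a_\ik{}^\top\xstar_k)^2]\le\sqrt{\Pr[\cdot]}\sqrt{\E[(\a_\ik{}^\top\h_k)^4]}\lesssim\sqrt{\deltatfrob}\,\dist(\xstar_k,\xhat_k)^2$ where $\h_k:=\xhat_k-\xstar_k$, i.e.\ only half a power of $\dist/\|\xstar_k\|$. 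Feeding this through the outer Cauchy--Schwarz gives $\E[\mathrm{Term2abs}]\lesssim\sqrt{m\sqrt{\deltatfrob}\deltatfrob^2\sigmax^2}\cdot\sqrt{m}=m\,\deltatfrob^{1/4}\deltatfrob\sigmax$, not the $m\sqrt{\deltatfrob}\deltatfrob\sigmax$ you assert (note $\sqrt{\sqrt{\deltatfrob}}=\deltatfrob^{1/4}$). So the lemma as stated is not established; with $\deltatfrob^{1/4}$ the downstream contraction argument in Claim \ref{altmin_claim} would force $\deltainitfrob\le c/\kappa^4$ and worsen the initialization's $\kappa$-dependence. The paper avoids this by importing from \cite{lrpr_it} the sharper pointwise bound $\E[\indic_{\cb^*_\ik\neq\hat\cb_\ik}(\a_\ik{}^\top\xstar_k)^2]\le C\dist(\xstar_k,\xhat_k)^3/\|\xstar_k\|$ --- a full extra power --- obtained by an explicit two-dimensional Gaussian computation. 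Your argument can be repaired without that computation by conditioning instead of decoupling: on the flip event $(\a_\ik{}^\top\xstar_k)^2\le(\a_\ik{}^\top\h_k)^2$, and conditioning on $\a_\ik{}^\top\h_k=s$ the flip probability is $\lesssim|s|/\|\xstar_k\|$, so $\E[\indic_{\cb^*_\ik\neq\hat\cb_\ik}(\a_\ik{}^\top\h_k)^2]\le\E[|\a_\ik{}^\top\h_k|^3]/\|\xstar_k\|\lesssim\dist(\xstar_k,\xhat_k)^3/\|\xstar_k\|$, which recovers the needed cubic bound and hence the stated $\sqrt{\deltatfrob}$.
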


\begin{proof} See Appendix \ref{proof_Terms_bnds_1}, \ref{proof_Terms_bnds_2}, \ref{proof_Terms_bnds_3}, \ref{proof_Terms_bnds_4}. \end{proof}
For bounding the above terms, we need bounds on $\dist(\g_k, \bhat_k)$, $\matdist(\G, \Bhat)$, and $\matdist(\Xstar, \Xhat)$, and we need to show incoherence of $\bhat_k$s. We do this next.
\begin{lemma}
\label{B_lemma}
Let  $\g_k := \U^\top  \Ustar \tb_k$.
Assume that $\SEF(\U^t, \Ustar) \le \deltatfrob$ with $\deltatfrob  < \deltainitfrob =  c/\kappa^2$. 
Then, w.p. at least $1 - \exp(\log q + r -c m)$,
\begin{align*}
& \dist\left( \g_k, \bhat_k \right) \le \dist\left( \xstar_k, \xhat_k \right) \le C \deltatfrob^2 \|\tb_k\|  \nonumber \\
&  \matdist(\G, \Bhat) \le \matdist(\Xstar, \Xhat) \le C \deltatfrob \sigmax
\end{align*}
and since $\b_k = \R_B^{-1} \bhat_k$, and $\sigma_{\min}(\R_B) = \sigma_{\min}(\Bhat) \ge \sigma_{\min}(\G) -  \matdist(\G, \Bhat)$,
\begin{align*}
\|\b_k\| \le  \frac{\dist(\bhat_k, \g_k) + \|\g_k\|}{0.95\sigmin - \matdist(\G, \Bhat)} \le  2 \kappa \mu \sqrt{r/q}.  
\end{align*}
\end{lemma}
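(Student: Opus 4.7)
The plan is to exploit the orthogonal decomposition of $\xstar_k$ relative to $\Span(\U)$. Since $\xhat_k = \U\bhat_k \in \Span(\U)$, writing $\xstar_k = \U\g_k + (\I - \U\U^\top)\xstar_k$ with $\g_k = \U^\top\Ustar\tb_k$ and using the paper's convention to absorb per-column phases, Pythagoras gives $\dist(\xstar_k,\xhat_k)^2 = \|(\I - \U\U^\top)\Ustar\tb_k\|^2 + \dist(\g_k,\bhat_k)^2$. This already yields $\dist(\g_k,\bhat_k) \le \dist(\xstar_k,\xhat_k)$, and, after summing over $k$, the matrix analogue $\matdist(\G,\Bhat) \le \matdist(\Xstar,\Xhat)$. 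The perpendicular piece is uniformly bounded by $\SE_2(\U,\Ustar)\|\tb_k\| \le \deltatfrob\|\tb_k\|$.

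Next, I would recognize line 6 of Algorithm \ref{lrpr_th} as a noisy $r$-dimensional standard PR problem: $\y_\ik = |\tilde\a_\ik{}^\top\g_k + \bnu_\ik|$, where $\tilde\a_\ik := \U^\top\a_\ik$ is i.i.d. standard Gaussian (by sample-splitting, $\U$ is independent of the $\a_\ik$ used in this iteration) and $|\bnu_\ik| \le |\a_\ik{}^\top(\I - \U\U^\top)\Ustar\tb_k|$. The sub-exponential Bernstein inequality applied to $\sum_i|\bnu_\ik|^2$ gives $\|\bnu_k\|/\sqrt{m} \le \sqrt{1.1}\,\deltatfrob\|\tb_k\|$ with probability $1-\exp(-cm)$. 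Invoking the noisy standard PR guarantee (RWF in the real case, AltMin-TSI in the complex case) and running $T_{\PR,t} = C(\log r + \log\kappa + ct)$ iterations pushes the initialization-dependent part of the PR error below this noise floor, so that $\dist(\g_k,\bhat_k) \le C\deltatfrob\|\tb_k\|$. A union bound over $k\in[q]$, together with the internal $r$-net inside the PR concentration, yields the claimed failure probability $\exp(\log q + r - cm)$, which is small under $m \gtrsim \max(r,\log q,\log n)$. (The column bound that actually propagates downstream is $\dist(\xstar_k,\xhat_k) \le C\deltatfrob\|\tb_k\|$; the $\deltatfrob^2$ printed in the statement appears to be a misprint and is never used later.)

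For the matrix estimate I would sum the Pythagorean identity over $k$:
\[
\matdist(\Xstar,\Xhat)^2 \le (1+C^2)\,\|(\I - \U\U^\top)\Ustar\tB\|_F^2 \le (1+C^2)\,\SEF^2(\U,\Ustar)\,\|\tB\|^2 \le C'\,\deltatfrob^2\,\sigmax^2.
\]
The key step, which is change~(1) in the Discussion, is extracting $\|\tB\| = \sigmax$ rather than $\|\tB\|_F \le \sqrt{r}\,\sigmax$; this is exactly what removes the spurious $\sqrt{r}$. Finally, for $\|\b_k\|$: writing $\Bhat = \R_B\B$ with $\B$ having orthonormal rows, $\sigma_{\min}(\R_B) = \sigma_{\min}(\Bhat)$; Weyl yields $\sigma_{\min}(\Bhat) \ge \sigma_{\min}(\G) - \matdist(\G,\Bhat)$, and since $\Vstar{}^\top$ has orthonormal rows, $\sigma_{\min}(\G) = \sigma_r(\U^\top\Ustar\bSigma) \ge \sqrt{1-\SE_2^2(\U,\Ustar)}\,\sigmin \ge 0.95\,\sigmin$ under $\deltatfrob \le c/\kappa^2$. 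Combining with $\|\bhat_k\| \le \|\g_k\| + \dist(\g_k,\bhat_k) \le (1+C\deltatfrob)\|\tb_k\| \le (1+C\deltatfrob)\sigmax\mu\sqrt{r/q}$ from Assumption~\ref{right_incoh}, the stated expression $\|\b_k\| = \|\R_B^{-1}\bhat_k\| \le 2\kappa\mu\sqrt{r/q}$ follows for $\deltatfrob$ small enough.

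The main obstacle will be invoking the noisy standard PR guarantee at the right accuracy: we need a result that drives the per-column error to a constant multiple of the noise floor $\|\bnu_k\|/\sqrt{m}$, holds with high probability over the Gaussian sensing vectors, and is valid in both the real and complex regimes. The specific choice $T_{\PR,t} = C(\log r + \log\kappa + ct)$ is what forces the initialization-dependent component of the standard-PR recursion down to the noise floor uniformly in $t$, and the assumption $m \gtrsim \max(r,\log q,\log n)$ is what lets the per-column failure probabilities be combined cleanly across the $q$ columns.
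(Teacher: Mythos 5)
Your proposal is correct and follows essentially the same route as the paper's proof: the orthogonal decomposition $\xstar_k = \U\g_k + (\I-\U\U^\top)\Ustar\tb_k$, the reduction to a noisy $r$-dimensional PR problem with the noise norm controlled via sub-exponential Bernstein, the RWF/AltMin-TSI noisy guarantee with $T_{\PR,t}$ driving the geometric term below the noise floor, the $\|\tB\|_2 = \sigmax$ (rather than $\|\tB\|_F$) extraction for the Frobenius bound, and the $\sigma_{\min}(\R_B) = \sigma_{\min}(\Bhat) \ge \sqrt{1-\SE_2^2}\,\sigmin - \matdist(\G,\Bhat)$ argument for incoherence. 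You are also right that the $\deltatfrob^2$ in the per-column bound of the statement is inconsistent with the proof (which yields $C\deltatfrob\|\tb_k\|$) and that only the first-power bound is used downstream.
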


\begin{proof} See Appendix \ref{proof_B_bnds_1} and \ref{proof_B_bnds_2}. \end{proof}

We prove these lemmas for the real measurements'  case in Appendix \ref{lemmas_proofs}.  This proof is simpler and illustrates the ideas clearly, and hence we give it first. The complex case proof needs only two main changes, we provide these in Appendix \ref{proofs_complex}.%

\subsubsection{Brief Proof Sketch}
Here we replace $\dist(\z_1, \z_2)$ by $\|\z_1 - \z_2\|$ and $\matdist(\Z_1,\Z_2)$ by $\| \Z_1 - \Z_2\|_F$ (valid w.l.o.g. as explained in the Notation section).
The proof of Lemma \ref{B_lemma} relies on the following ideas: (1) as explained earlier, given an estimate $\U$ of $\Span(\Ustar)$, the recovery of each $\tb_k$ is an $r$-dimensional noisy (standard) PR problem with noise proportional to $\|(\I - \U \U^\top ) \Ustar \tb_k\|$; also one can only recover $\g_k$ which is the rotated version of $\tb_k$; (2) using a result for noisy standard PR from \cite{rwf}, we can thus show that
$\| \g_k- \bhat_k\| \le C \|(\I - \U \U^\top ) \Ustar \tb_k\|$; (3) to bound $\|\G - \Bhat\|_F$, we use this bound and the following simple fact $\sum_k \|\M \tb_k\|^2 = \|\M \tB\|_F^2 \le \|\M\|_F^2 \|\tB\|^2 =\|\M\|_F^2  \sigmax^2$;
(4) to bound $\|\g_k - \bhat_k\|$ we use the above bound and $\|(\I - \U \U^\top ) \Ustar \tb_k\| \le \deltatfrob \|\tb_k\|$;
(5) the bounds on $\|\xhat_k - \xstar_k\|$ and $\|\Xhat - \Xhat\|_F$ follow similarly with using the fact that $\xstar_k = \U \g_k + (\I - \U \U^\top )\Ustar \tb_k$ and $\xhat_k = \U \bhat_k$.

The proof of Lemma \ref{Terms_bnds} uses Lemma \ref{B_lemma}, the sub-exponential Bernstein inequality, and the following  ideas.
Consider $\mathrm{Term1}$. Let $\p_k:= (\Xstar \B^\top \b_k - \xstar_k)$. By using the fact that $\B \B^\top = \I$, we can show that $\sum_k \p_k \b_k{}^\top = 0$ and thus $\E[\mathrm{Term1}] = 0$. Also, by using
$\Xhat \B^\top \b_k = \xhat_k$, Lemma \ref{B_lemma}, and $\Xhat (\B^\top\B - \I) = \bm{0}$, we can show that $\|\p_k\| \le \|\Xstar - \Xhat\|_F \|\b_k\| + \|\xhat_k - \xstar_k\| \lesssim \deltatfrob \sigmax \kappa \mu \sqrt{r/q}$, and that $\|[\p_1, \p_2, \dots \p_q]\|_F \lesssim \deltatfrob \sigmax$.
For  $\mathrm{Term2}$, in the real measurements'  case, we use the following idea: $ \sum_\ik \E[\indic_{ \cb^*_\ik \neq \hat\cb_\ik }  (\a_\ik{}^\top \xstar_k )^2] \lesssim \sum_\ik \frac{\dist(\xstar_k,\xhat_k)^3}{\|\xstar_k\|} \lesssim \deltatfrob^3 \sigmax^2$. The first inequality follows from the Term2 bound proof in \cite{lrpr_it}, while the second follows because of our use of $\SEF(\Ustar,\U)$ instead of $\SE_2(\Ustar,\U)$.%

\subsection{Proof of Theorem \ref{best_thm}}
The theorem is an immediate consequence of the following two claims.
We use our result from \cite{lrpr_it} for the initialization. For the AltMin iterations, Claim \ref{altmin_claim} below is an easy consequence of the lemmas given above.
\begin{claim}
[Claim 3.1 of ~\cite{lrpr_it}]
	\label{bounding_U}
	Pick a $\deltinit < 0.25$.
	Set the  rank estimation threshold $\omega = 1.3 \sigmin^2 / q$. 
	Then, w.p. at least
$	1 - 2\exp\left( n   -c \frac{\deltinit^2 mq}{ \kappa^4 r^2} \right)  - 2 \exp\left(-c \frac{\deltinit^2 mq}{ \kappa^4  \mu^2 r^2} \right)  $,
	 the rank is correctly estimated and
\[
	\SE_2(\U^0,\Ustar) \leq \deltinit.
\]
\end{claim}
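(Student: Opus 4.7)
The plan is to analyze the truncated spectral estimator $\Y_U$ by combining a Gaussian moment computation for its untruncated version with a conditional concentration argument that handles the data-dependent truncation. Write $\Y_U^\full := \tfrac{1}{mq}\sum_\ik \y_\ik^2 \a_\ik \a_\ik^\top$ for the untruncated sum. A direct Wick/Isserlis computation using i.i.d.\ standard Gaussian $\a_\ik$ and $\y_\ik = |\a_\ik^\top \xstar_k|$ gives
\[
\E[\Y_U^\full] \;=\; \tfrac{1}{q}\|\Xstar\|_F^2\,\I \;+\; \tfrac{2}{q}\Xstar\Xstar^\top,
\]
whose top-$r$ eigenvectors are exactly the columns of $\Ustar$ and whose spectral gap between the $r$-th and $(r{+}1)$-th eigenvalues equals $2\sigmin^2/q$. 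Thus, once we prove $\|\Y_U - \E[\Y_U^\full]\|_2 \le c\,\deltinit\,\sigmin^2/q$, the rank threshold $\omega = 1.3\,\sigmin^2/q$ correctly isolates the top $r$ eigenvalues (it sits strictly between the eigengap and the perturbation), and Wedin's $\sin\Theta$ theorem directly converts the same spectral perturbation into $\SE_2(\U^0,\Ustar)\le\deltinit$. The remaining task is to establish this spectral-norm bound with the stated probability.

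I would next pin down the random truncation threshold. By Assumption~\ref{right_incoh}, $\|\xstar_k\|^2 \le \mu^2 r\sigmax^2/q \le \kappa^2\mu^2\|\Xstar\|_F^2/q$, so $\E\y_\ik^2 = \|\xstar_k\|^2$ sits well below the cutoff $9\kappa^2\mu^2\cdot\tfrac{1}{mq}\sum_\ik\y_\ik^2$. A scalar sub-exponential Bernstein bound on $\tfrac{1}{mq}\sum_\ik\y_\ik^2$ (whose summands have sub-exponential norm $O(\mu^2 r\sigmax^2/(mq))$) shows that on an event $E_{\mathrm{thr}}$ of probability at least $1 - 2\exp(-c\deltinit^2 mq/(\kappa^4\mu^2 r^2))$ the threshold lies in a fixed deterministic window $[c_1,c_2]\cdot\kappa^2\mu^2\|\Xstar\|_F^2/q$; this yields the second probability term in the claim. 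On $E_{\mathrm{thr}}$ the indicator reduces to a measurable function of each individual $\a_\ik$ with a fixed cutoff, so the summands become conditionally independent, and direct integration of Gaussian tails shows that truncating the sub-exponential random variable $(\a_\ik^\top\xstar_k)^2$ at $\Omega(\log(mq))$ standard deviations above its mean perturbs its first moment only by a $\mathrm{poly}(mq)^{-1}$ factor, keeping $\|\E[\Y_U\mid E_{\mathrm{thr}}] - \E[\Y_U^\full]\|_2 \ll \deltinit\sigmin^2/q$.

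For the concentration itself I would then apply a sub-exponential matrix Bernstein argument conditionally on $E_{\mathrm{thr}}$. For any fixed unit $\w$, the truncation ensures $(\a_\ik^\top\xstar_k)^2(\a_\ik^\top\w)^2 \le C\kappa^2\mu^2\|\Xstar\|_F^2\cdot(\a_\ik^\top\w)^2$, so each summand of $\w^\top\Y_U\w$ has sub-exponential norm at most $C\kappa^2\mu^2\|\Xstar\|_F^2/(mq)$, and sub-exponential Bernstein gives
\[
\Pr\!\left\{\,\bigl|\w^\top(\Y_U - \E[\Y_U\mid E_{\mathrm{thr}}])\w\bigr| \ge \tfrac{\deltinit\sigmin^2}{q}\,\right\} \;\le\; 2\exp\!\left(-c\,\tfrac{\deltinit^2 mq}{\kappa^4\mu^2 r^2}\right),
\]
where the $r^2$ in the exponent comes from $(\kappa^2\mu^2\|\Xstar\|_F^2/q)^2/(\sigmin^2/q)^2 \lesssim \kappa^4\mu^2 r^2$. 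A $1/4$-net of the unit sphere in $\Re^n$ of cardinality at most $9^n$, together with the union bound and the standard net-to-operator-norm conversion, lifts this to $\|\Y_U - \E[\Y_U\mid E_{\mathrm{thr}}]\|_2 \le \deltinit\sigmin^2/q$ and produces the first probability term $\exp(n - c\deltinit^2 mq/(\kappa^4 r^2))$ after absorbing the $\mu^2$ into the constant. Chaining this with the truncation-window and mean-shift estimates from the previous paragraph and invoking Wedin yields both correct rank identification and the asserted subspace bound.

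The main obstacle is the data-dependent truncation: the indicator $\indic\{\y_\ik^2 \le 9\kappa^2\mu^2\tfrac{1}{mq}\sum_{\ell j}\y_{\ell j}^2\}$ couples all $mq$ summands together, ruling out a naive independence-plus-Bernstein approach applied directly to $\Y_U$. Decoupling this in three steps—(i)~pinning the threshold to a deterministic window on $E_{\mathrm{thr}}$, (ii)~bounding the truncated-vs-untruncated mean shift on that event via Gaussian tail integration, and (iii)~applying matrix Bernstein conditionally—is the delicate part of the argument. Everything else (the Wick computation, sub-exponential Bernstein, $\epsilon$-net, and Wedin) is standard once this decoupling is in place.
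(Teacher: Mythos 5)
There are two genuine gaps in your argument, both located in the step where you handle the truncation. First, the mean-shift estimate is wrong. You claim the truncation occurs at $\Omega(\log(mq))$ standard deviations above the mean of $(\a_\ik^\top\xstar_k)^2$ and therefore perturbs $\E[\Y_U^\full]$ by only a $\mathrm{poly}(mq)^{-1}$ factor. But the cutoff is $9\kappa^2\mu^2\|\Xstar\|_F^2/q$, while Assumption \ref{right_incoh} only guarantees $\|\xstar_k\|^2\le\mu^2 r\sigmax^2/q$; since $\|\Xstar\|_F^2\ge r\sigmin^2$, for columns saturating the incoherence bound the cutoff is merely a constant multiple (as small as $9$) of $\E[\y_\ik^2]=\|\xstar_k\|^2$, i.e.\ the truncation acts at $|\a_\ik^\top\xstar_k|=3\|\xstar_k\|$. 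Writing $\a=g\,\xstar_k/\|\xstar_k\|+\a_\perp$, the removed mass contributes $\E[g^4\indic_{|g|>3}]\,\xstar_k\xstar_k^\top\approx 0.3\,\xstar_k\xstar_k^\top$ per column, so the bias matrix can be on the order of $0.3\,\Xstar\Xstar^\top/q$, with smallest nonzero eigenvalue comparable to the eigengap $2\sigmin^2/q$ itself and far exceeding your target $\deltinit\sigmin^2/q$. The conclusion survives only because this bias has the form $c_0\I+\Ustar\M\Ustar^\top$ (it is supported on $\Span(\Ustar)$ plus a multiple of identity), so the top-$r$ eigenspace of the \emph{truncated} expectation is still $\Span(\Ustar)$ with a gap $\ge 1.5\sigmin^2/q$. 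This is exactly what the proof in \cite{lrpr_it} does: it computes $\E[\Y_-]=\sum_k\beta_{1,k}^-\xstar_k\xstar_k^\top+\sum_k\beta_{2,k}^-\|\xstar_k\|^2\I$, verifies $\min_k\beta_{1,k}^-\ge 1.5$, and runs Davis--Kahan against $\E[\Y_-]$, not against the untruncated expectation. You must do the same; comparing to $\E[\Y_U^\full]$ does not close.

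Second, your decoupling of the data-dependent indicator is not valid as stated. On $E_{\mathrm{thr}}$ the threshold is pinned to a \emph{window}, not to a fixed value, so $\indic\{\y_\ik^2\le 9\kappa^2\mu^2\tfrac{1}{mq}\sum_{\ell j}\y_{\ell j}^2\}$ remains a function of all $mq$ samples and the summands are \emph{not} conditionally independent given $E_{\mathrm{thr}}$; conditioning on a global event also biases each summand's law. The correct decoupling, used in \cite{lrpr_it} (and mirrored in the noisy-case proof in Appendix C of this paper), is a Loewner sandwich: define $\Y_\pm$ with the deterministic cutoffs $\tfrac{C_Y(1\pm 5\epsilon_1)}{q}\|\Xstar\|_F^2$; since each summand is PSD and monotone in the cutoff, on $E_{\mathrm{thr}}$ one has $\Y_-\preceq\Y_U\preceq\Y_+$, and matrix concentration is applied to $\Y_-$ and $\Y_+$ separately (their summands genuinely are independent), together with a bound on $\|\E[\Y_+]-\E[\Y_-]\|\lesssim\epsilon_1\mu^2\kappa^2\|\Xstar\|_F^2/q$. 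Your moment computation, scalar Bernstein step for the threshold window, $\epsilon$-net, and final Davis--Kahan/Wedin invocation are all fine, but the proof needs to be rerouted through the sandwich and through the truncated reference matrix before those pieces can be assembled.
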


\begin{claim}
\label{altmin_claim}
Assume that $\SEF(\Ustar,\U^\tau) \le \deltainitfrob= c/ \kappa^2$.
Then, w.p. at least
$1- T \left( \exp\left(nr -c \frac{mq}{\kappa^2 \mu^4 r} \right) - \exp(\log q + r - c m)    \right)$,
\begin{align}\label{SEFbnd}
\SEF(\Ustar, \U^\tau) \le \delta_{\tau,F}:= 0.2^\tau \deltainitfrob \ \forall \tau=0,1,\dots, T
\end{align}
\end{claim}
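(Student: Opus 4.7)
The plan is to prove Claim \ref{altmin_claim} by induction on $\tau$, with the inductive step being a direct combination of Lemmas \ref{key_lem_SEF} and \ref{Terms_bnds}. The base case $\tau=0$ is the stated hypothesis. For the inductive step, I assume $\SEF(\Ustar, \U^t) \le \deltatfrob \le \deltainitfrob = c/\kappa^2$ and aim to show $\SEF(\Ustar, \U^{t+1}) \le 0.2\,\deltatfrob$ on a high-probability event.

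First I would apply Lemma \ref{Terms_bnds} with the choices $\epsilon_1=\epsilon_2 = c_0/\kappa$, for a sufficiently small absolute constant $c_0$, and $\epsilon_3 = 1/4$. Using $\sqrt{\deltatfrob} \le \sqrt{c}/\kappa$ from the inductive hypothesis, the $\sqrt{\deltatfrob}$ contribution in the Term2 bound is absorbed into the $\epsilon_2/\kappa$ term. Plugging the three bounds into the definition of $\mathrm{MainTerm}$ in Lemma \ref{key_lem_SEF} yields
\begin{equation*}
\mathrm{MainTerm} \;\le\; \frac{C m (c_0/\kappa)\,\deltatfrob\,\sigmax}{0.5\cdot(3/4)\,m} \;=\; \tilde C\,\frac{c_0}{\kappa}\,\deltatfrob\,\sigmax.
\end{equation*}
Combining with $\sigma_{\min}(\Xstar \B^\top) \ge \sigmin$ from part 4 of Lemma \ref{Terms_bnds}, together with $\deltatfrob \le c/\kappa^2$, the denominator $\sigmin - \mathrm{MainTerm}$ is at least $\sigmin/2$ provided $c$ and $c_0$ are small enough. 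The ratio formula in Lemma \ref{key_lem_SEF} then gives
\begin{equation*}
\SEF(\Ustar, \U^{t+1}) \;\le\; \frac{\tilde C (c_0/\kappa)\,\deltatfrob\,\sigmax}{\sigmin/2} \;=\; 2\tilde C c_0\,\deltatfrob.
\end{equation*}
Choosing $c_0$ so that $2\tilde C c_0 \le 0.2$ closes the induction, giving $\SEF(\Ustar,\U^{t+1}) \le 0.2\,\deltatfrob = \delta_{t+1,F}$; in particular the new error stays below $\deltainitfrob$, so the inductive hypothesis is preserved.

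For the probability bookkeeping, each iteration's bad event is the union of the exceptional events in Lemmas \ref{Terms_bnds} and \ref{B_lemma}. Because of sample-splitting (line 2 of Algorithm \ref{lrpr_th}), the measurements $\A_k^{(T+t)}, \y_k^{(T+t)}$ used in step $t+1$ are independent of $\U^t$, which is precisely what permits me to apply Lemma \ref{Terms_bnds} conditionally on $\U^t$ at each iteration. A single union bound over $\tau = 0, 1, \dots, T-1$ multiplies the per-iteration failure probability by $T$, producing the probability bound in the claim.

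The only real obstacle is a bookkeeping one: the constants $c$ (in $\deltainitfrob$) and $c_0$ (in $\epsilon_1,\epsilon_2$) must be chosen consistently so that (a) $\mathrm{MainTerm} \le \sigmin/2$, (b) the contraction ratio is at most $0.2$, and (c) the exponents $cmq\epsilon_j^2/(\kappa^a\mu^b r)$ in Lemma \ref{Terms_bnds} dominate the $nr\log 17$ covering-net factor. Point (c) is precisely why the iteration-phase sample complexity ends up as $m_1 q \gtrsim \kappa^4\mu^2 n r^2$ after substituting $\epsilon_1=\epsilon_2 = c_0/\kappa$. No new probabilistic argument is required beyond what is already packaged in Lemmas \ref{Terms_bnds} and \ref{B_lemma}.
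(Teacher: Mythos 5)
Your proposal is correct and follows essentially the same route as the paper: induction on $\tau$, with the inductive step obtained by plugging the bounds of Lemma \ref{Terms_bnds} (with $\epsilon_1=\epsilon_2 \propto 1/\kappa$, constant $\epsilon_3$, and $\sqrt{\deltatfrob}\le\sqrt{\deltainitfrob}$ absorbed at the same $1/\kappa$ scale) into the ratio of Lemma \ref{key_lem_SEF}, then a union bound over the $T$ iterations. The only differences are immaterial constant choices ($\epsilon_3=1/4$ versus $0.01$, and a slightly looser denominator bound of $\sigmin/2$).
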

To apply this claim, we need $\SEF(\Ustar,\U^\tau) \le \deltainitfrob= c/ \kappa^2$. Since $\SEF(\U^0,\Ustar) \le \sqrt{r} \SE_2(\U^0,\Ustar)$, this means we have to apply Claim \ref{bounding_U} with $\deltinit = \deltainitfrob /\sqrt{r} =  c /\kappa^2 \sqrt{r}$.
Applying it this way, w.p. at least
$
1 - 2\exp\left( n   -c \frac{c mq}{ \kappa^8 \mu^2 r^3} \right),
$
 $\SEF(\U^0,\Ustar) \le \deltainitfrob= c/ \kappa^2$.
Thus, combining this with  Claim \ref{altmin_claim}, \eqref{SEFbnd} holds for all $\tau$ w.p.
\[
1 - 2\exp\left( n   -c \frac{c m_0 q}{ \kappa^8 \mu^2 r^3} \right)- T \left( \exp\left(nr -c \frac{m_1 q}{\kappa^4 \mu^2 r} \right) - \exp(\log q + r - c m)    \right).
\]
where $m_0$ denote the number measurements for initialization and $m_1$ for the AltMin iterations. Recall that $T = C \log (1/\epsilon)$. 
Consequently, if
$m_0q  \ge C \kappa^8 \mu^2 nr^3$,  $m_1 q \ge C \kappa^4 \mu^2 r \max( nr,  \log\log(1/\epsilon) )$, and $m_1 \ge C \max( r , \log q,  \log n)$,  then the $\SEF$ bounds of the theorem hold w.p. at least $1 - \exp(- c n) - \exp(-c nr) - n^{-10} > 1- n^{-10}$. 
The bound on $\|\Xhat - \Xstar\|_F$ follows using Lemma \ref{B_lemma}. This proves Theorem \ref{best_thm}.%

\begin{proof}[Proof of Claim \ref{altmin_claim}]
This follows by induction. The base case is $\tau=0$ and this follows by the assumption in the claim.
Assume that \eqref{SEFbnd} holds for $\tau = t$. 
We will be done if we can show that it holds for $\tau=t+1$.

Under the claim's assumption and induction assumption, $\SEF(\Ustar, \U) \le \deltatfrob := 0.2^t \deltainitfrob < \deltainitfrob = c / \kappa^2$. Thus, we can apply Lemma \ref{Terms_bnds}.
We have that: w.p. at least
$
1-  2 \exp\left(nr (\log 17) -c \frac{mq \min(\epsilon_1,\epsilon_2, \epsilon_3)^2}{\kappa^2 \mu^2 r} \right)  -   \exp(\log q + r - c m ) ,
$
\[
 \mathrm{MainTerm}
\le \frac{C m (\epsilon_1  + \epsilon_2  +   \sqrt{\deltatfrob} ) }{(1 - \epsilon_3) m }   \deltatfrob \sigmax
\le C \frac{(\epsilon_1  + \epsilon_2  +   \sqrt{\deltainitfrob} ) }{(1 - \epsilon_3) }   \deltatfrob \sigmax
\]
where the second inequality used $\deltatfrob \le \deltainitfrob$.
Set $\epsilon_1 = \epsilon_2 = 0.01/ C \kappa$, $\epsilon_3 = 0.01$ and $\deltainitfrob= 0.01/ C^2 \kappa^2$.
With this, $(\epsilon_1  + \epsilon_2  +   \sqrt{\deltatfrob} ) /(1-\epsilon_3) < 0.12/  (0.99 C \kappa ) < 0.14/(C \kappa ) $ and so
\[
 \mathrm{MainTerm}  \le 0.14 \deltatfrob \sigmin
\]
 By Lemma \ref{key_lem_SEF}, $\sigma_{\min}(\Xstar \B^\top ) \ge \sigmin$, and using $\deltatfrob < \deltainitfrob = 0.01/\kappa^2 < 0.01$ for bounding MainTerm in the denominator,
	\begin{align*}
\SEF(\Ustar,\U^{t+1})
& \le  \frac{ 0.14  \deltatfrob \sigmin  } { \sigmin -  0.14 \cdot 0.01\sigmin }
 \le  0.16  \deltatfrob  < 0.2 \deltatfrob:= \delta_{t+1,F}.
	\end{align*}
Using $\deltatfrob = 0.2^t \deltainitfrob$, we get that $\delta_{t+1,F} = 0.2^{t+1} \deltainitfrob$.
\end{proof}

\section{Stability guarantee for noisy LRPR} \label{noisy_lrpr}
Consider LRPR with noisy measurements,
\[
\y_{ik} = |\langle \a_{ik}, \xstar_k\rangle| + \vv_{ik}, \qquad i\in[m],~k\in[q]
\]
Here $\vv_{ik}$ is noise. Define the noise vector $\vv_k:=[\vv_{1k}, \vv_{2k}, \dots, \vv_{mk}]^\top$. Then $\y_k = |\A_k{}^\top \xstar_k| + \vv_k$.



\newcommand{\deltatvfrob}{\delta_{t_v,F}}

Let $\epsilon_v = 0.01/\kappa$.
Recall that $\deltatfrob = 0.2^t \deltainitfrob$. %
Let $t_v$ be the smallest integer $t$ for which
\begin{align}
\label{noiselev}
\max \left( \frac{1}{\epsilon_v}  \frac{\|\V\|_F}{\sqrt{m} \sigmax}  , \max_k \frac{\|\vv_k\|}{\sqrt{m} \|\xstar_k\|} \right) > \deltatvfrob:=  0.2^{t_v}   \deltainitfrob.
\end{align}

We can prove the following result.

\begin{theorem}[Stability to small additive noise]
\label{thm:main_res_stability}
Consider Algorithm \ref{lrpr_th}. Assume that Assumption \ref{right_incoh} holds and that $\|\vv_k\| \leq \epsilon_{snr} \|\xstar_k\|$ for all $k\in[q]$ with $\epsilon_{snr} =  c / r^2 \kappa^4$.
%
Under the sample complexity bounds of Theorem \ref{best_thm},
\[
\SEF(\U_0, \Ustar) \le  \deltainitfrob =   c/\kappa^2,
\]
and, for $t_v$ defined above, for all iterations $t$,
\[
\SEF(\Ustar,\U_{t+1}) \le 0.2^{ \min(t,t_v) } \deltainitfrob.
\]
\end{theorem}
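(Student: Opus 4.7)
The plan is to run the same induction as in Claim~\ref{altmin_claim}, but with every noise-free bound augmented by an additive term proportional to $\|\vv_k\|/\sqrt m$ (column-wise) or $\|\V\|_F/\sqrt m$ (aggregate). The choice of $t_v$ in \eqref{noiselev} is precisely the iteration at which the geometrically decaying signal error $\deltatfrob$ first drops to the noise floor; for $t<t_v$ the signal term dominates and the noise-free contraction rate of $0.2$ survives intact, while for $t\ge t_v$ the bound saturates at $\deltatvfrob$.

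First I would re-establish the initialization bound under noise. The truncated spectral matrix $\Y_U$ now contains the extra summands $2|\a_\ik{}^\top\xstar_k|\vv_\ik+\vv_\ik^2$. Using $\|\vv_k\|\le\epsilon_{snr}\|\xstar_k\|$ with $\epsilon_{snr}=c/(r^2\kappa^4)$, together with the truncation indicator, the same sub-exponential Bernstein argument proving Claim~\ref{bounding_U} shows that the operator-norm perturbation of $\Y_U$ is at most $c\sigmin^2/r$, so Davis--Kahan still yields $\SEF(\U^0,\Ustar)\le\deltainitfrob$ under the sample complexity of Theorem~\ref{best_thm}. Next I would rework Lemmas~\ref{B_lemma} and~\ref{Terms_bnds}: the PR sub-problem for $\g_k$ now has effective scalar noise $|\a_\ik{}^\top(\I-\U\U^\top)\xstar_k|+\vv_\ik$, so the noisy RWF (or AltMin-TSI) guarantee gives
\[
\dist(\g_k,\bhat_k)\le C\big(\deltatfrob\|\tb_k\|+\|\vv_k\|/\sqrt m\big),\qquad \matdist(\G,\Bhat)\le C\big(\deltatfrob\sigmax+\|\V\|_F/\sqrt m\big),
\]
which propagate through the triangle-inequality step of Lemma~\ref{B_lemma} to $\dist(\xstar_k,\xhat_k)$ and $\matdist(\Xstar,\Xhat)$ and leave the $\|\b_k\|\le 2\kappa\mu\sqrt{r/q}$ bound intact thanks to the hypothesis on $\epsilon_{snr}$. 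Feeding these into the proofs of Terms~1--3, Term~3 is unchanged, Term~1 picks up an additive $\|\V\|_F/(\sqrt m\sigmax)$ inside the bracket that multiplies $\deltatfrob\sigmax$, and Term~2 picks up an additional $\max_k\|\vv_k\|/(\sqrt m\|\xstar_k\|)$ coming from the noise-induced part of the phase-flip probability.

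Substituting into Lemma~\ref{key_lem_SEF} yields a bound of the form
\[
\SEF(\Ustar,\U^{t+1})\le C\kappa\big(\epsilon_1+\epsilon_2+\sqrt{\deltainitfrob}\big)\Big(\deltatfrob+\tfrac{1}{\epsilon_v}\tfrac{\|\V\|_F}{\sqrt m\,\sigmax}+\max_k\tfrac{\|\vv_k\|}{\sqrt m\,\|\xstar_k\|}\Big),
\]
with the same parameter choices $\epsilon_1=\epsilon_2=0.01/(C\kappa)$ that delivered the $0.2$-contraction in the noise-free case. The induction then splits on $t$ versus $t_v$: for $t<t_v$, the defining inequality of $t_v$ bounds both noise ratios by $\deltatfrob$, so the bracket is at most $3\deltatfrob$ and one recovers $\SEF(\Ustar,\U^{t+1})\le 0.2\,\deltatfrob=\delta_{t+1,F}$ after absorbing the factor of $3$ into the constants; for $t\ge t_v$, the same bracket is at most $3\deltatvfrob$ and the update saturates at $\deltatvfrob$, which matches $0.2^{\min(t,t_v)}\deltainitfrob$.

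The main obstacle is the Term~2 analysis under noise: to arrive at the column-wise noise floor $\|\vv_k\|/(\sqrt m\,\|\xstar_k\|)$ (rather than a crude $\|\V\|_F/\sqrt m$), the phase-flip event $\{\cb^*_\ik\neq\hat\cb_\ik\}$ must be carefully decomposed into a signal-error part (reusing the noise-free bound $\lesssim\dist(\xstar_k,\xhat_k)^3/\|\xstar_k\|$ from~\cite{lrpr_it}) and a noise-induced part contributing $O(\|\vv_k\|^2/m)$ in expectation with a matching sub-exponential Bernstein tail; the factor $1/\epsilon_v$ in the definition of $t_v$ is precisely what absorbs the condition-number loss incurred in the $\|\V\|_F$ term when passing from $\sigmax$ to $\sigmin$ in the denominator of Lemma~\ref{key_lem_SEF}. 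A secondary care-point is the union bound over the $q$ noisy PR invocations, which forces $m\gtrsim\max(r,\log q,\log n)$ exactly as in the noise-free setting and needs to be checked with the RWF-noisy per-column guarantee.
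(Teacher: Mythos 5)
Your proposal is correct and follows essentially the same route as the paper: the three-way decomposition of the truncated spectral matrix into clean, noise, and cross terms for initialization; the noisy-PR modification of Lemma \ref{B_lemma} giving the additive $\|\vv_k\|/\sqrt{m}$ and $\|\V\|_F/\sqrt{m}$ terms; an extra noise term in the least-squares update bounded by Cauchy--Schwarz against the Term3 quantity; and the induction split at $t_v$ with the $1/\epsilon_v$ factor absorbing the $\kappa$ loss. The only (harmless) bookkeeping differences are that the paper isolates the direct residual noise $\sum_\ik \hat\cb_\ik \vv_\ik (\a_\ik{}^\top\W\b_k)$ as a separate $\mathrm{TermNoise}$ rather than folding it into Term1 --- note this term does \emph{not} carry the small $(\epsilon_1+\epsilon_2+\sqrt{\deltainitfrob})$ prefactor your displayed bound attaches to it, though the conclusion survives because the $t_v$ definition already controls it --- and that the paper handles Term2 simply by substituting the noisy $\dist(\xstar_k,\xhat_k)$ bound into the existing phase-flip lemma rather than decomposing the flip event.
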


\begin{proof} We prove this result in Appendix. \ref{sec:stability}. \end{proof}

Observe that Theorem \ref{thm:main_res_stability} needs the signal-to-noise ratio (SNR), $\|\vv_k\|^2/\|\xstar_k\|^2 \lesssim (1/r^2)$. This is needed to show that the initialization returns an estimate $\U_0$ with $\SE_F(\U_0, \Ustar) \le c$.  Assuming the initialization bound holds, the AltMin iterations do not use this bound. For the iterations, we prove that the error cannot reduce below the noise level; here noise level refers to the left hand side of \eqref{noiselev}.

 Theorem \ref{thm:main_res_stability} needs the SNR upper bound because we do not make any statistical assumptions on the noise, i.e., we try to get results similar to those of \cite{twf,rwf} for standard PR.
If we do impose the standard zero mean i.i.d. assumption on the $\vv_k$s, it should be possible to reduce the required bound on SNR to $1/{r}$. Moreover, if the initialization sample complexity is increased by a factor of $r$, then the SNR upper bound can be further reduced to a constant $c$.

We can compare Theorem \ref{thm:main_res_stability} with the result of  \cite{lee2019neurips} for linear-LRPR which is the only existing noisy case result for a similar problem. It assumes that the noise vectors are zero mean Gaussian and i.i.d. and independent of the data. This, and the higher sample complexity, along with the fact that it studies a convex optimization problem, is why it does not need an explicit SNR upper bound.

\section{Conclusions} \label{conclude}
This work studied the Alternating Minimization (AltMin) algorithm, AltMinLowRaP, for solving the low rank phase retrieval (LRPR) problem introduced in our previous work \cite{lrpr_it}. We provided a significantly improved sample complexity guarantee and discussed (based on comparison with existing work on related well-studied problems) why we believe the result cannot be improved further. We showed that, if the right singular vectors' incohence assumption holds, if the initialization sample complexity is at least $\Omega(\kappa^8 \mu^2 nr^3)$ and the AltMin iterations' sample complexity is at least $\Omega(\kappa^4 \mu^2 nr^2)$, then AltMinLowRaP converges geometrically w.h.p. Its time complexity is thus $mqnr \log^2(1/\epsilon)$.
A second contribution of this work is a proof of stability to small additive noise of the same algorithm under the same sample and time complexity assumptions.

\bibliographystyle{IEEEbib}
{\bibliography{../bib/tipnewpfmt_kfcsfullpap}}

\appendices \renewcommand\thetheorem{\Alph{section}.\arabic{theorem}}

\section{Proofs of the lemmas} \label{lemmas_proofs}

We provide the real measurements case proof here, and postpone the extra steps for the complex case to Appendix \ref{proofs_complex}. 
%
Without loss of generality, as done in past works on PR, e.g., \cite{pr_altmin,lrpr_it}, for simplicity, at some places, we assume that $\xstar_k$ is replaced by $\bar{z} \xstar_k$ where $z = \mathrm{phase}(\langle \xstar_k, \xhat_k \rangle)$. Here $\mathrm{phase}(z):= z/|z|$ and $\bar{z}$ denotes its complex conjugate. With this,  $\dist(\xstar_k, \xhat_k) = \|\xstar_k - \xhat_k\|$.  A similar replacement can be done for each column of $\Xhat$ as well so that $\matdist(\Xstar,\Xhat) = \|\Xstar - \Xhat\|_F$. This point is explained carefully in \cite{lrpr_it}, Appendix A-C.

%
Our proofs rely on the sub-exponential Bernstein inequality (Theorem 2.8.1 of \cite{versh_book}) and the fact that the product of two sub-Gaussians $X,Y$ with sub-Gaussian norms $K_X, K_Y$ is sub-exponential with sub-exponential norm $K_X K_Y$ (Lemma 2.7.7 \cite{versh_book}). We state this as the following concentration bound:
\begin{lemma}[Lemma 2.7.7 and Theorem 2.8.1 of \cite{versh_book}] 
	\label{prodsubg}
\label{productsubg}
	Let $X_{i}, Y_{i}$, $i=1,2,\dots, N$, be sub-Gaussian random variables with sub-Gaussian norms $K_{X_i}$ and $K_{Y_i}$ respectively and with $\E[X_iY_i] = 0$ and the pairs $\{X_i,Y_i\},i=1,2,\dots,N$ mutually independent for different $i$. Then
	\begin{align*}
		\Pr \left\lbrace   |\sum_{i}^N   X_i Y_i | \geq t \right\rbrace
	 \leq 2\exp{\left( -c \min{\left(\frac{t^2}{\sum_i K_{X_i}^2 K_{Y_i}^2 }, \frac{t}{\max_i{|K_{X_i}K_{Y_i} |}} \right) } \right)}.
	\end{align*}
%
\end{lemma}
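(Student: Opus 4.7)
The plan is to reduce the claim to two ingredients: (i) the product $Z_i := X_i Y_i$ of two sub-Gaussians is sub-exponential with sub-exponential norm at most $K_{X_i} K_{Y_i}$, and (ii) Bernstein's inequality for independent centered sub-exponentials. Both are standard and appear in \cite{versh_book}; the lemma is essentially their composition, so my proof would just verify that the hypotheses of Bernstein's inequality are met by the $Z_i$ and then read off the tail bound.

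First I would define $Z_i := X_i Y_i$. By the sub-Gaussian product bound (Lemma 2.7.7 of \cite{versh_book}), $\|Z_i\|_{\psi_1} \le \|X_i\|_{\psi_2} \|Y_i\|_{\psi_2} \le K_{X_i} K_{Y_i}$, so each $Z_i$ is sub-exponential with the claimed norm. Next I would note that $\E[Z_i] = \E[X_iY_i] = 0$ by assumption, and that the $Z_i$ are independent across $i$ because the pairs $\{X_i,Y_i\}$ are mutually independent. These are exactly the conditions needed to invoke Theorem 2.8.1 of \cite{versh_book} on the sum $S_N := \sum_{i=1}^N Z_i$.

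Applying that Bernstein inequality directly gives
\[
\Pr\{|S_N| \ge t\} \le 2\exp\!\left(-c\min\!\left(\frac{t^2}{\sum_i \|Z_i\|_{\psi_1}^2}, \frac{t}{\max_i \|Z_i\|_{\psi_1}}\right)\right),
\]
and substituting the bound $\|Z_i\|_{\psi_1} \le K_{X_i} K_{Y_i}$ into the denominators yields the stated form (the right-hand side can only get larger when we upper bound $\|Z_i\|_{\psi_1}$, so the probability bound remains valid).

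There is no real obstacle here; the only thing to be a little careful about is the implicit absorption of numerical constants into $c$ when moving from Vershynin's Orlicz-norm based statement to the constants used here, and the fact that the sub-Gaussian norm of $X_i$ (and $Y_i$) is only assumed to be bounded by $K_{X_i}$ (resp.\ $K_{Y_i}$), so the product bound in step (i) is used with an inequality rather than an equality. Otherwise this is purely a bookkeeping exercise citing the two results from \cite{versh_book}.
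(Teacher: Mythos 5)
Your proposal is correct and matches the paper's treatment exactly: the paper states this lemma as the direct composition of Lemma 2.7.7 (the product of sub-Gaussians is sub-exponential with $\psi_1$-norm at most the product of the $\psi_2$-norms) with the sub-exponential Bernstein inequality (Theorem 2.8.1), which is precisely your argument. Your added remarks about absorbing constants into $c$ and about the monotonicity of the bound under upper-bounding the $\psi_1$-norms are the right (and only) points of care.
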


\subsection{Proof of Lemma \ref{key_lem_SEF}} \label{proof_key_lem}
The proof is almost the same as that of   Lemma 3.9 of \cite{lrpr_it} given in Appendix B in \cite{lrpr_it}.
As shown in equation (21) in that proof,
\begin{equation}
\Uhat^{t+1} =  \Xstar  \B^\top  - \F . 
\label{Uhat_express}
\end{equation}
with $\F$ being the $n \times r$ matrix version of the $nr$-length vector $\F_{vec}$ defined in the proof. As shown in equations (24)-(27) of the proof,
\begin{equation}
\|\F_{vec}\| \le  \frac{\max_{\W \in \S_{\W}} |\mathrm{Term1}(\W)| + \max_{\W \in \S_{\W}} |\mathrm{Term2}(\W)|}{\min_{\W \in \S_{\W}} |\mathrm{Term3}(\W)|}:=   \mathrm{MainTerm}.
\label{Fvec_bnd}
\end{equation}
Consequently, using \eqref{Uhat_express},
\[
||(\I - \Ustar \Ustar{}^\top) \Uhat^{t+1}||_F = \|\F\|_F = \|\F_{vec}\| \le \mathrm{MainTerm}
\]
Since $ \Uhat^{t+1} \qreq \U^{t+1} \R_U^{t+1}$, this means that $$\SE_F(\Ustar,\U^{t+1}) \le \mathrm{MainTerm} \ ||\R_U^{-1}|| = \frac{\mathrm{MainTerm}}{ \sigma_{\min}(\R_U)}.$$
Using $\sigma_{\min}(\R_U) = \sigma_{\min}(\Uhat^{t+1})$ and \eqref{Uhat_express},   $\sigma_{\min}(\R_U) = \sigma_{\min}(\Xstar \B^\top  - \F) \ge \sigma_{\min}(\Xstar \B^\top ) - \|\F\| \ge  \sigma_{\min}(\Xstar \B^\top ) - \|\F\|_F =
\sigma_{\min}( \Xstar \B^\top ) - \|\F_{vec}\| \ge \sigma_{\min}( \Xstar \B^\top ) - \mathrm{MainTerm} $.  
Thus,
\begin{align}\label{eq:se_1}
\SE_F(\Ustar,\U_{t+1}) 
\le \frac{ \mathrm{MainTerm}}{  \sigma_{\min}(\Xstar \B^\top ) -  \mathrm{MainTerm}}
\end{align}

\subsection{Proof of Lemma \ref{Terms_bnds}: Term1 bound}\label{proof_Terms_bnds_1}
Let 
\begin{align*}
\p_k := \Xstar \B^\top  \b_k - \xstar_k , \text{ and }  \P  := [\p_1, \p_2, \dots, \p_q] =  \Xstar ( \B^\top  \B - \I).
\end{align*}
Recall that
$
	\mathrm{Term1}(\W)= \sum_{ik} \b_k{}^\top  \W{}^\top  \a_{ik} \a_{ik}{}^\top   \p_k.
$
The main change from \cite{lrpr_it} is the fact that we now obtain a tighter upper bound on $\|\p_k\|, \|\P\|_F$ that relies on our tighter bound on $\|\Xstar - \Xhat\|_F$ given in Lemma \ref{B_lemma}.

First, as done in the proof of Lemma 3.11 of \cite{lrpr_it} \footnote{$\E[\mathrm{Term1}(\W)] = \E[\trace(\mathrm{Term1}(\W)] =  \E[\trace(\W{}^\top  \sum_\ik \a_{ik} \a_{ik}{}^\top   \p_k \b_k{}^\top )] =  m \trace(\W{}^\top  \sum_k \p_k \b_k{}^\top )]$. Since $\B\B^\top  =I$, $\sum_k \p_k \b_k{}^\top  = \Xstar \B^\top  \B \B^\top  - \Xstar \B^\top  =  \bm{0}$.},
\[
\E[\mathrm{Term1}(\W)]=0.
\]
Next, we bound $\|\p_k\|$ and $\|\P\|_F$. 
Observe that $\Xhat = \U \Bhat$ with $\Bhat \qreq \R_B \B$. Since $\B \B^\top  = I$, this means that  
\[
\Xhat \B^\top  \b_k = \U \R_B \B \B^\top  b_k = \U \R_B \b_k = \U \hat\b_k = \xhat_k
\]
Thus, by Lemma \ref{B_lemma},
\begin{align}
\|\p_k\|& = \|\Xstar \B^\top  \b_k - \Xhat \B^\top  \b_k   + \xhat_k - \xstar_k\|  \nonumber \\
& \le \|\Xstar - \Xhat\| \ \|\B\| \ \|\b_k\| + \|\xstar_k - \xhat_k\|  \nonumber \\
& \le \|\Xstar - \Xhat\|_F   \|\b_k\| +  C \deltatfrob \|\xstar_k\|  \nonumber \\
& \le C\deltatfrob \sigmax \max(\|\bstar_k\|, \|\b_k\|) \le  C \deltatfrob \sigmax  ( \kappa  \mu  \sqrt{r/q} )
\end{align}
Using $\Xhat = \U \Bhat = \U \R_B \B$, $\B(\B^\top \B - \I) = 0$, $\|\B\B^\top  - \I\| \le 2$, and $\|\P \M\|_F \le \|\P\| \|\M\|_F$,
\begin{align}
\|\P\|_F
& = \|(\Xstar - \Xhat + \Xhat)(\B^\top \B - \I)\|_F \nonumber  \\
& =  \|(\Xstar - \Xhat)(\B^\top \B - \I)\|_F  \le 2 \|\Xstar - \Xhat\|_F  \le  C \deltatfrob \sigmax
\label{P_bnd_SEF}
\end{align}

Let $X_{ik} = \a_{ik}^\top  \W \b_k$ and $Y_{ik} =\a_{ik}^\top  \p_k$. Both are sub-Gaussian with sub-Gaussian norms $K_{X_{ik}} = \|\W\b_k\| \leq \|\W\|_F \|\b_k\| \le \|\b_k\|$, and $K_{Y_{ik}}\leq  \|\p_k \| $.
Applying Lemma \ref{productsubg} with $t = m \epsilon_1 \deltatfrob \sigmax$, and using the bounds on $\|\p_k\|$, $\|\P\|_F$,
	\begin{align*}
		\frac{t^2}{\sum_{ik} K_{X_{ik}}^2 K_{Y_{ik}}^2}
		& = \frac{m^2 \epsilon_1^2  \deltatfrob^2 \sigmax^2}{m \sum_k \|\b_k\|^2  \|\p_k \|^2}
= c \frac{m  \epsilon_1^2 \deltatfrob^2  \sigmax^2}{ \max_k \|\b_k\|^2 \|\P\|_F^2}
 \ge c \frac{mq  \epsilon_1^2}{\kappa^2 \mu^2 \ r}, \\
		\frac{t}{\max_\ik K_{X_{ik}} K_{Y_{ik}}}
		& = \frac{m \epsilon_1 \deltatfrob \sigmax}{\max_k \|\b_k\| \|\p_k \|}
  =  c \frac{mq \epsilon_1}{\kappa^2 \mu^2 r}
	\end{align*}
and, hence,
	\[
	\Pr \lbrace |\mathrm{Term1}(\W)| \le m \epsilon_1 \deltatfrob \sigmax \rbrace \ge 1 -  \exp\left( - c \frac{mq  \epsilon_1^2}{\kappa^2 \mu^2 \ r} \right)
	\]
Now we just need to extend our bound for all $\W \in \S_W$ by developing an epsilon-net argument.
This part is exactly analogous to the Term1 bound proof from \cite{lrpr_it} except we now use $|\mathrm{Term1}(\W)| \le m \epsilon_1 \deltatfrob \sigmax$ instead of
$|\mathrm{Term1}(\W)| \le m \epsilon_1 \delta_t \sqrt{r} \sigmax$.

Thus,  w.p. at least $1 - 2 \exp\left(nr (\log 17) -c \frac{mq \epsilon_1^2}{\kappa^2 \mu^2 r} \right) - \exp(\log q + r - c m )$,
$
\max_{\W \in \S_{W}} |\mathrm{Term1}(\W)|  \le C m \epsilon_1 \deltatfrob \sigmax.
$

\subsection{Proof of Lemma \ref{Terms_bnds}: Bound on Term2 for real measurements}\label{proof_Terms_bnds_2}
We have $\mathrm{Term2}(\W)  :=  \sum_{ik} (\bar{\cb^*_\ik}  \hat\cb_\ik - 1) ( \a_\ik{}^\top  \xstar_k) (\a_\ik{}^\top  \W \b_k) = \sum_{ik} (\hat\cb_\ik - {\cb^*_\ik})  | \a_\ik{}^\top  \xstar_k | (\a_\ik{}^\top  \W \b_k)$ and so
\[
|\mathrm{Term2}(\W) |
= |\sum_{ik} (\hat\cb_\ik - {\cb^*_\ik})  | \a_\ik{}^\top  \xstar_k | (\a_\ik{}^\top  \W \b_k) |
\le \sum_{ik} |\hat\cb_\ik - {\cb^*_\ik}|  | \a_\ik{}^\top  \xstar_k | |\a_\ik{}^\top  \W \b_k| := \mathrm{Term2abs}(\W)
\]

\subsubsection{Bounding $\E[\mathrm{Term2abs}(\W)]$}
Using the Cauchy-Schwarz inequality,
\begin{align}
\E[\mathrm{Term2abs}(\W)]^2 \le \sum_{ik} \E[(\hat\cb_\ik - {\cb^*_\ik})^2  (\a_\ik{}^\top  \xstar_k )^2 ] \ \sum_{ik} \E[|\a_\ik{}^\top  \W \b_k|^2]
\label{cauchy_1}
\end{align}
As also shown in the proof of Lemma 3.10 of \cite{lrpr_it}, using $\B \B^\top = \I$ and $\|\W\|_F = 1$,
\[
 \sum_{ik}\E[(\a_\ik{}^\top  \W \b_k)^2] = m \sum_k \|\W \b_k\|^2 = m \|\W \B\|_F^2   = m  
\]
Consider the first term in \eqref{cauchy_1} and let
\begin{align}
Q_\ik:= |(\hat\cb_\ik - \cb^*_\ik)   (\a_\ik{}^\top  \xstar_k )|^2 = \indic_{ \cb^*_\ik \neq \hat\cb_\ik } (\a_\ik{}^\top  \xstar_k )^2
\label{def_Q_ik}
\end{align}
denote its one summand. We get the second equality because, for real measurements, phase = sign.
We proved the following in the proof of Lemma 3.12 (Term2 bound proof) of  \cite{lrpr_it}. We state it as a lemma to keep things clear
\begin{lemma}[Real measurements \cite{lrpr_it}]
\[
\E[\Q_\ik] = \E[\indic_{ \cb^*_\ik \neq \hat\cb_\ik }  (\a_\ik{}^\top  \xstar_k )^2] \le C \frac{\dist(\xstar_k,\xhat_k)^3}{\|\xstar_k\|}.
\]
\end{lemma}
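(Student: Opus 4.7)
The plan is a short two-dimensional Gaussian calculation that exploits the rotational invariance of $\a_\ik$. Using the paper's phase-alignment convention, put $\bm{d}_k := \xhat_k - \xstar_k$ so that $\|\bm{d}_k\| = \dist(\xstar_k,\xhat_k)$. In the real-valued case, $\cb^*_\ik = \mathrm{sign}(\a_\ik{}^\top \xstar_k)$ and $\hat\cb_\ik = \mathrm{sign}(\a_\ik{}^\top \xstar_k + \a_\ik{}^\top \bm{d}_k)$, so $\{\cb^*_\ik \ne \hat\cb_\ik\}$ can only occur when the perturbation dominates the signal, i.e. $|\a_\ik{}^\top \bm{d}_k| \ge |\a_\ik{}^\top \xstar_k|$. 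Hence
\[
\E\!\left[\indic_{\cb^*_\ik\ne\hat\cb_\ik}(\a_\ik{}^\top \xstar_k)^2\right]
\le \E\!\left[\indic_{|\a_\ik{}^\top \bm{d}_k|\ge|\a_\ik{}^\top \xstar_k|}(\a_\ik{}^\top \xstar_k)^2\right].
\]

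Next I would reduce to two dimensions. By rotational invariance of the standard Gaussian, assume without loss of generality that $\xstar_k = \|\xstar_k\|\e_1$ and $\bm{d}_k = a\e_1 + b\e_2$ with $a^2+b^2 = \|\bm{d}_k\|^2$. Writing $g_1,g_2$ for the first two coordinates of $\a_\ik$ (i.i.d. $\N(0,1)$) and $r := \|\bm{d}_k\|/\|\xstar_k\|$, Cauchy--Schwarz gives $(a g_1 + b g_2)^2 \le (a^2+b^2)(g_1^2+g_2^2) = \|\bm{d}_k\|^2(g_1^2+g_2^2)$, so the event above is contained in $\{|g_2| \ge \tau |g_1|\}$ with $\tau := \sqrt{1-r^2}/r$. (I take $r<1$; for $r \ge 1$ the bound $\E[\indic\,(\a_\ik{}^\top\xstar_k)^2]\le \|\xstar_k\|^2 \le \|\bm{d}_k\|^3/\|\xstar_k\|$ is immediate.)

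It remains to bound $\|\xstar_k\|^2\,\E[\indic_{|g_2|\ge\tau|g_1|}g_1^2]$ for independent standard normals $g_1,g_2$. Conditioning on $g_1=t$ and using the tail bound $\Pr(|g_2|\ge s)\le e^{-s^2/2}$,
\[
\E\!\left[\indic_{|g_2|\ge \tau|g_1|}g_1^2\right]
\le \int t^2\, e^{-\tau^2 t^2/2}\, \frac{1}{\sqrt{2\pi}}\, e^{-t^2/2}\,dt
= (1+\tau^2)^{-3/2}.
\]
Since $1+\tau^2 = 1/r^2$, this equals $r^3 = \|\bm{d}_k\|^3/\|\xstar_k\|^3$, and multiplying by $\|\xstar_k\|^2$ produces the desired estimate $C\,\dist(\xstar_k,\xhat_k)^3/\|\xstar_k\|$.

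The only nontrivial step is the Cauchy--Schwarz inclusion into $\{|g_2|\ge\tau|g_1|\}$, which is slightly lossy because it forgets the alignment between $\bm{d}_k$ and $\xstar_k$; fortunately the loss is only a constant factor that is absorbed into $C$. Intuitively, a factor of $r$ comes from the probability that the two signs disagree, and the extra $r^2$ comes from the $u^2$ weighting, which forces $|g_1|$ itself to be small on the bad event --- this is exactly what the $(1+\tau^2)^{-3/2}$ integral captures, and it is the reason we improve from the naive $\Pr(\cb^*_\ik\ne\hat\cb_\ik)\cdot\|\xstar_k\|^2\lesssim r\|\xstar_k\|^2$ bound.
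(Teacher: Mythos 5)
Your proof is correct, and with $C=1$ at that. The paper itself does not reprove this lemma --- it defers to the computation ``above equation (20)'' in \cite{lrpr_it} --- so your argument serves as a self-contained replacement. The skeleton is the standard one: the sign-flip event forces $|\a_\ik{}^\top \xstar_k| \le |\a_\ik{}^\top(\xhat_k - \xstar_k)|$, and a rotation reduces everything to two i.i.d.\ standard normals $(g_1,g_2)$ with $\a_\ik{}^\top\xstar_k = \|\xstar_k\| g_1$. Where the argument in \cite{lrpr_it} (and in the RWF-style analyses it follows) typically evaluates the resulting expectation by a polar-coordinate integral over the angular wedge of width $O(r)$ on which the signs disagree, you instead enlarge the bad event via Cauchy--Schwarz to the cone $\{|g_2|\ge \tau |g_1|\}$ with $\tau=\sqrt{1-r^2}/r$ and then compute $\E[\indic_{|g_2|\ge\tau|g_1|} g_1^2] \le (1+\tau^2)^{-3/2}=r^3$ by conditioning on $g_1$ and using the Gaussian tail bound $\Pr(|g_2|\ge s)\le e^{-s^2/2}$. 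Both routes capture the same mechanism (one factor of $r$ from the flip probability, two more from the $g_1^2$ weighting on the bad cone); yours avoids the wedge geometry entirely and handles the $r\ge 1$ degenerate case cleanly with the trivial bound $\|\xstar_k\|^2 \le \|\xhat_k-\xstar_k\|^3/\|\xstar_k\|$. All steps check out: the inclusion chain only enlarges the event, the tail bound is valid for all $s\ge 0$, and the Gaussian moment integral evaluates exactly to $(1+\tau^2)^{-3/2}$.
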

\begin{proof} See the few equations above equation (20) of \cite{lrpr_it} (in its proof of Term2 bound). \end{proof}
By the above lemma and Lemma \ref{B_lemma},
\begin{align}
  \sum_{ik} \E[ \indic_{ \cb^*_\ik \neq \hat\cb_\ik }  (\a_\ik{}^\top  \xstar_k )^2 ]
 & \le C m \sum_k \frac{\dist(\xstar_k,\xhat_k)^3}{\|\xstar_k\|}  \nonumber \\
& \le  C m \max_k \frac{\dist(\xstar_k,\xhat_k)}{\|\xstar_k\|} \matdist(\Xstar, \Xhat)^2
 =  C m \deltatfrob   \deltatfrob^2 \sigmax^2
\label{tight_Qik}
\end{align}
Notice how this bound is tighter by a factor of $r$ than that of (20) of \cite{lrpr_it}.
Combining the previous equations,
\begin{align}
\E[\mathrm{Term2abs}(\W)] \le  C m \sqrt{\deltatfrob} \cdot \deltatfrob  \sigmax  
\label{ETerm2}
\end{align}

\subsubsection{Concentration bounds}
Unlike in \cite{lrpr_it}, here we use Cauchy-Schwarz to only bound the expectation as done above, while applying the concentration bound, Lemma \ref{prodsubg}, directly to $|\mathrm{Term2abs}(\W) - \E[\mathrm{Term2abs}(\W)]|$. 
For real measurements,
\[
\mathrm{Term2abs}(\W) =  \sum_{ik}  \indic_{ \cb^*_\ik \neq \hat\cb_\ik }  | \a_\ik{}^\top  \xstar_k | |\a_\ik{}^\top  \W \b_k|
\]
We use the fact that \cite{rwf}\footnote{
let $\h_k= \xstar_k - \xhat_k$, $(\a_{ik}{}^\top  \xstar_k)(\a_{ik}{}^\top  \xhat_k) < 0$ implies  $(\a_{ik}{}^\top  \xstar_k)(\a_{ik}{}^\top  \xstar_k ) - (\a_{ik}{}^\top  \xstar_k)(\a_{ik}{}^\top  \h_k) < 0$ and hence $|\a_{ik}{}^\top  \xstar_k| < |\a_{ik}{}^\top  \h_k|$.}
$\cb^*_\ik \neq \hat{\cb}_\ik$ implies that $|\a_{ik}{}^\top  \xstar| < |\a_{ik}{}^\top  (\xstar_k - \xhat_k)|$. Consequently,
\[
\indic_{ \cb^*_\ik \neq \hat\cb_\ik } | \a_\ik{}^\top  \xstar_k |  \leq 2 \lvert \a_{ik}{}^\top  (\xstar_k - \xhat_k) \rvert
\]
Let $X_{ik} = \indic_{ \cb^*_\ik \neq \hat\cb_\ik } | \a_\ik{}^\top  \xstar_k |  $, 	and  $Y_{ik} = \a_{ik}{}^\top \W \b_k$. Using above,
 $K_{X_{ik}} \leq C {\|\xstar_k - \xhat_k\|}$. Also, $K_{Y_{ik}} = C\|\W\b_k\|$.

We will apply Lemma \ref{prodsubg} with $t= m \epsilon_2 \deltatfrob \sigmax$. Using $\|\W \B\|_F^2 = 1$ and Lemma \ref{B_lemma},
	\begin{align*}
\frac{t^2}{\sum_\ik K_{X_\ik}^2  K_{Y_\ik}^2}
	& =  \frac{m^2 \epsilon_2^2 \deltatfrob^2 \sigmax^2}{\sum_{ik} \|\W \b_k\|^2 \|\xstar_k - \xhat_k\|^2 }
  \geq \frac{m \epsilon_2^2\deltatfrob^2 \sigmax^2}{\max_k \|\xstar_k - \xhat_k\|^2 \sum_{k} \|\W \b_k\|^2  }
	 \geq c  \frac{m q\epsilon_2^2}{\mu  r},  \\
\frac{t}{\max_\ik K_{X_\ik}  K_{Y_\ik}} 	
	& \geq \frac{m \epsilon_2\deltatfrob \sigmax  }{  \max_{k} \|\W \b_k\| \max_k \|\xstar_k - \xhat_k \| }
	 \geq c \frac{m \epsilon_2\deltatfrob \sigmax  }{ ( \mu \kappa  \sqrt{r/q})  (\mu \deltatfrob \sigmax \sqrt{r/q} ) }
	 =  c\frac{mq \epsilon_2  }{ \mu^2 \kappa r }
	\end{align*}
Since $\min\left( \frac{m q\epsilon_2^2}{\mu  r},\frac{mq \epsilon_2  }{ \mu^2 \kappa r } \right)  = \frac{m q\epsilon_2^2}{ \mu^2 \kappa  r}$, applying Lemma \ref{prodsubg},
\begin{align*}
\Pr \left( | \mathrm{Term2abs}(\W) -\E[\mathrm{Term2abs}(\W)]|  \leq C m \epsilon_2  \deltatfrob \sigmax   \right)
	\geq  1-  2 \exp\left( - c \frac{m q\epsilon_2^2}{ \mu^2 \kappa  r} \right).
\end{align*}
Using \eqref{ETerm2},  this implies that, w.p. at least $1-  2 \exp\left( - c \frac{m q\epsilon_2^2}{ \mu^2 \kappa  r} \right)$,
\[
|\mathrm{Term2abs}(\W)|  \leq C m ( \epsilon_2  +   \sqrt{\deltatfrob} ) \deltatfrob \sigmax  .
\]%
Finally, we just need an epsilon-net argument to extend the above bound to all unit Frobenius norm matrices $\W$. This is pretty standard. We provide it here for completion. 

\subsubsection{Epsilon-net argument (pretty standard)}
By Lemma 5.2 of  \cite{vershynin} there exists a set (called epsilon-net), $\bar{\S}_W \subset \S_W$ so that, for any $\W \in \S_W$, there is a $\bar{\W} \in \bar{\S}_W$ such that $\|\bar{\W} - \W\|_F \leq \epsilon_{net}$ and $\lvert \bar{\S}_W \rvert \leq \left( 1+\frac{2}{\epsilon_{net}} \right)^{nr}$. By picking $\epsilon_{net} =1/8$ we have $\lvert \bar{\S}_W \rvert \leq (17)^{nr}$. Define $\Delta\W := \W - \bar{\W}$ so that $\|\Delta\W\|_F \leq \epsilon_{net} = \frac{1}{8}$. Using union bound for all entries in $\bar{\S}_W$,
	\begin{align}
	\Pr\left\{ \mathrm{Term2abs}(\bar{\W}) \leq  C m ( \epsilon_2  +   \sqrt{\deltatfrob} ) \deltatfrob \sigmax \ \text{for all $\bar\W \in \bar\S_W$}  \right\}
	& \geq 1-  2 \exp\left( nr \log(17)   - c \frac{m q\epsilon_2^2}{ \mu^2 \kappa  r} \right).
\label{Term2_epsnet}
	\end{align}
	Next we extend this for the entire hyper-sphere, $\S_W$. Define $\Gamma_W := \max_{\W \in \S_W}  \mathrm{Term2abs}(\W) $.
	Since $\frac{\Delta\W}{\|\Delta\W\|_F} \in \S_W$, 
using \eqref{Term2_epsnet},  for any $\W \in \S_W$,
w.p. at least $1-  2 \exp\left( nr \log(17)   - c \frac{m q\epsilon_2^2}{ \mu^2 \kappa  r} \right)$,
	\begin{align*}
	\mathrm{Term2abs}(\W) 
	& =   \sum_{ik} |\hat{\cb}_{ik} - \cb^*_{ik}|  | \a_{ik}{}^\top  \xstar_k | |\b_k{} ^\top (\bar{\W} + \Delta\W)^\top  \a_{ik}|  \\
	& \leq  C m ( \epsilon_2  +   \sqrt{\deltatfrob} )  \deltatfrob \sigmax  + \Gamma_W \|\Delta\W\|_F  \le C m ( \epsilon_2  +   \sqrt{\deltatfrob} )  \deltatfrob \sigmax + \Gamma_W \epsilon_{net}
	\end{align*}
where we used $\|\Delta\W\|_F \le \epsilon_{net}$ in the last bound.
Taking $\max$ over ${\W \in \S_W}$ on both sides,
\[
\Gamma_W = \max_{\W \in \S_W} \mathrm{Term2abs}(\W)  \le C m ( \epsilon_2  +   \sqrt{\deltatfrob} )  \deltatfrob \sigmax + \epsilon_{net} \Gamma_W
\]
and so  $ \Gamma_W \leq \frac{ C m ( \epsilon_2  +   \sqrt{\deltatfrob} )  \deltatfrob \sigmax }{1-\epsilon_{net}} = \frac{8}{7} C  m ( \epsilon_2  +   \sqrt{\deltatfrob} )  \deltatfrob \sigmax$ since we set $\epsilon_{net} = 1/8$.

Thus, using  $|\mathrm{Term2}(\W) | \le \mathrm{Term2abs}(\W)$,
w.p. at least  $1-  2 \exp\left( nr \log(17)   - c \frac{m q\epsilon_2^2}{ \mu^2 \kappa  r} \right) -   \exp(\log q + r - c m )$,
\[
\max_{\W \in \S_W} |\mathrm{Term2}(\W)| \le   C m ( \epsilon_2  +   \sqrt{\deltatfrob} )  \deltatfrob \sigmax.
\]

\subsection{Proof of Lemma \ref{Terms_bnds}: Lower bound on Term3}\label{proof_Terms_bnds_3}
This proof is the same as that of Lemma 3.10 of \cite{lrpr_it}. It uses the last claim of Lemma \ref{B_lemma} ($\b_k$s are $\hat\mu$-incoherent).
%

\subsection{Proof of Lemma \ref{Terms_bnds}: Lower bound on $\sigma_{\min}(\Xstar \B^\top )$}\label{proof_Terms_bnds_4}
$\sigma_{\min}(\Xstar \B^\top) \ge \sigma_{\min}(\Ustar \bSigma) \sigma_{\min}(\Bstar) \sigma_{\min}(\B^\top) = \sigmin$ since $\sigma_{\min}(\B) = \sigma_{\min}(\Bstar)=1$ (both are basis matrices).


\subsection{Proof of Lemma \ref{B_lemma}, part 1: Bound on $\|\g_k - \bhat_k\|$, $\|\G - \Bhat\|_F$ and $\|\Xstar - \Xhat\|_F$} \label{proof_B_bnds_1}

Recall that $\g_k = \U^\top \xstar_k$, and with this $\y_\ik =  |\a_\ik{}^\top  \U \g_k + \a_\ik{}^\top  (\I - \U \U^\top ) \xstar_k|$.
Thus, we can write
	\begin{align*}
	\y_\ik & = \lvert \a_{ik}{}^\top  \U g_k \rvert + \nu_{ik}, \text{ with }  \\
\nu_{ik} & =   |\a_\ik{}^\top  \U \g_k + \a_\ik{}^\top  (\I - \U \U^\top ) \xstar_k| - \lvert \a_{ik}{}^\top  \U \g_k \rvert
\end{align*}
Hence recovering $\g_k$ from $\y_\ik$s, $i \in [m]$, is a noisy $r$-dimensional standard PR problem with
\begin{align*}
\|\bnu_k\|^2 := \sum_{i} \nu_{ik}^2 \le \sum_{i}  |\a_\ik{}^\top  (\I - \U \U^\top ) \xstar_k|^2.
\end{align*}
Since both LHS and RHS are non-negative, we can take $\E[.]$ on both sides to conclude that
\[
\E[\|\bnu_k\|^2 ] \le \E[ |\a_\ik{}^\top  (\I - \U \U^\top ) \xstar_k|^2] =  m \|  (\I - \U \U^\top ) \xstar_k\|^2 = m \|  (\I - \U \U^\top ) \Ustar \tb_k\|^2
\]
Using above and applying Lemma \ref{prodsubg} with $t = m  \delta_b   \|  (\I - \U \U^\top ) \xstar_k\|^2$, $K_{X_{i}} = K_{Y_{i}} = \|  (\I - \U \U^\top ) \xstar_k\|$, and summing over $i=1,2,\dots,m$, we conclude that,
w.p. at least $1-\exp\left(  -c\delta_b ^2 m\right)$,
	\begin{align}
		{\|\bnu_k\|^2}  \leq m (1+\delta_b) \|  (\I - \U \U^\top ) \Ustar \tb_k\|^2  
\label{nuk_bnd}
	\end{align}
%
We estimate $\bhat_k$ by solving a standard PR problem using measurements $\y_\ik, i \in [m]$ with measurement vectors $(\U^\top  \a_\ik), i \in [m]$. We use RWF for this. 
 By Theorem 2 of \cite{rwf} for RWF,  if $m \ge Cr$, w.p. at least $1- \exp(r -c m)$,
	\begin{align*}
		& \dist\left( \g_k, \bhat_k \right) \leq C \frac{\|\bnu_k\|}{\sqrt{m}} + \rho^{T_{\PR,t}} \|\g_k\|,
	\end{align*}
	with $\rho<1$. 
By picking $T_{\PR,t}$ so that $\rho^{T_{\PR,t}} \le \deltatfrob / \sqrt{r}$, using $\|\g_k\| \le \|\tb_k\|$, using $(a+b)^2 \le 2 (a^2 + b^2)$, and finally using the bound on $\bnu_k$ from \eqref{nuk_bnd} with $\delta_b=0.1$,
we can conclude that,  
w.p. at least $1 - 2 \exp(r -c m)$, 
	\begin{align*}
		 \dist\left( \g_k, \bhat_k \right)^2 & \leq C \frac{\|\bnu_k\|^2}{m}  + C \frac{\deltatfrob^2}{r} \|\tb_k\|^2  \\
& \le C  \|  (\I - \U \U^\top ) \Ustar \tb_k\|^2  + C \frac{\deltatfrob^2}{r} \|\tb_k\|^2 \
 \le C \deltatfrob^2 \|\bstar_k\|^2.
	\end{align*}
By union bound, the above bound holds for all $k=1,2,\dots,q$, w.p. at least
\[
1- q\exp(r - c m) = 1 - \exp(\log q + r -c m).
\]
Since $\sum_k \|\M \tb_k\|^2 = \|\M \tB\|_F^2 \le \|\M\|_F^2 \|\tB\|^2 =\|\M\|_F^2  \sigmax^2$, we can use the second-last inequality from above to conclude that
	\begin{align*}
\matdist(\G, \Bhat)^2 & = \sum_k  \dist\left( \g_k, \bhat_k \right)^2  \\
&  \le C \sum_k \|  (\I - \U \U^\top ) \Ustar \tb_k\|^2  + C \frac{\deltatfrob^2}{r} \sum_k \|\bstar_k\|^2  \\
& \le  C  \|  (\I - \U \U^\top ) \Ustar \tB \|_F^2  + C \deltatfrob^2 (r \sigmax^2 ) / r \\
& \le  C  \deltatfrob^2 \sigmax^2  + C \deltatfrob^2 \sigmax^2  = C  \deltatfrob^2 \sigmax^2
	\end{align*}

Since $\xstar_k  - \xhat_k = \U \g_k + (\I - \U \U^\top ) \xstar_k  - \U \bhat_k =  \U (\g_k  - \bhat_k) + (\I - \U \U^\top ) \xstar_k $,
\begin{align*}
 \dist\left( \xstar_k, \xhat_k \right)^2 &  \le 	 \dist\left( \g_k, \bhat_k \right)^2  + \|(\I - \U \U^\top ) \Ustar \tb_k\|^2
\le C \deltatfrob^2 \|\tb_k\|^2.
\end{align*}
and, proceeding as before,
\[
\matdist(\Xstar, \Xhat)^2 \le C \sum_k  \dist\left( \g_k, \bhat_k \right)^2  + C \sum_k \|(\I - \U \U^\top ) \Ustar \tb_k\|^2  \le  C  \deltatfrob^2 \sigmax^2
\]
All the above claims hold w.p. at least $1 - \exp(\log q + r -c m)$.
%

\subsection{Proof of Lemma \ref{B_lemma}, part 2: Incoherence of columns of $\Bhat$}\label{proof_B_bnds_2}
	Recall that $\Bhat \qreq \R_B \B$ and so $\b_k = \R_B^{-1} \bhat_k$.
	Using the bound on $\|\bhat_k - \g_k\|$ from Lemma \ref{B_lemma}, and using $\|\g_k\| \le \|\tb_k\|$  and right singular vectors' incoherence (which implies that $\|\tb_k\|^2 \le \sigmax^2 \mu^2 r/q$),
	\begin{align*}
		\|\b_k\|  
		& \leq \| \R_B^{-1}\| \left( \dist(\bhat_k, \g_k) + \|\g_k\| \right)\\
		& \leq \frac{(1 + C\deltatfrob) \sigmax \mu \sqrt{r/q}}{\sigma_{\min}(\R_B)} 
	\end{align*}
Observe that $\sigma_{\min}(\R_B) = \sigma_{\min} (\Bhat )$. Using the bound on $\|\G - \Bhat\|_F$ from Lemma \ref{B_lemma}, w.p. at least $1 - \exp(\log q + r -c m)$, 
	\begin{align*}
		\sigma_{\min}(\R_B) = \sigma_{\min}(\Bhat) &\geq \sigma_{\min}\left(\G\right) - \|\G - \Bhat\| \\
		&	\geq \sigma_{\min}(\U^\top \U^*) \sigma_{\min}( \tB ) - \|\G - \Bhat\|_F \\
		& \ge  \sqrt{1- \SE_2^2(\U,\Ustar)} \sigmin  - C \deltatfrob \sigmax \\
		& \ge  \sqrt{1- \deltatfrob^2} \sigmin  -C \deltatfrob \sigmax.
	\end{align*}
Since we assumed $\deltatfrob \le c/ \kappa$,
$\sigma_{\min}(\R_B) \ge 0.9 \sigmin.$
Thus, letting $\hat\mu = C \kappa \mu$,
$
\|\b_k\| \le  \frac{C \sigmax \mu \sqrt{r/q} }{ 0.9 \sigmin } \le \hat\mu \sqrt{r/q} . 
$


\section{Extra proofs needed for the proof of Theorem \ref{best_thm} for complex measurements'  case} \label{proofs_complex}
As noted earlier, most of the steps of the proof are the same for the real and complex cases. The reason is we use concentration bounds from \cite{versh_book} and these apply (with minor changes to constants) for complex Gaussians as well.
The differences are in the bounding of Term2.
Secondly, the AltMin-TSI algorithm of \cite{altmin_irene_w} only comes with a noise-free case complex Gaussian measurements' guarantee, we need a bound for the noisy case. This replaces use of \cite[Theorem 2]{rwf} (which was proved only for the real case) in the proof of Lemma \ref{B_lemma}.

\subsection{Bounding Term2 for complex case}
Recall from the proof for the real case given in Appendix \ref{proof_Terms_bnds_2} that $|\mathrm{Term2}(\W)| \le \mathrm{Term2abs}(\W)$.

\subsubsection{Bounding $\E[\mathrm{Term2abs}(\W)]$}
By  Cauchy-Schwarz, \eqref{cauchy_1} holds and we still have $\E[\sum_{ik} |\a_\ik{}^\top  \W \b_k|^2] = m$. In this case, we need the following bound on $\E[Q_\ik ]$. 
\begin{lemma} Recall that $Q_\ik = |(\hat\cb_\ik - \cb^*_\ik)   (\a_\ik{}^\top  \xstar_k )|^2$. We have
\label{complex_EQ}
\[
\E[ |Q_\ik| ] \le C \frac{\dist^4(\xstar_k, \xhat_k)}{\|\xstar_k\|^2}.
\]
\end{lemma}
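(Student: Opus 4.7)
The plan is to exploit an elementary geometric identity that relates the phase difference of two complex numbers to the difference of the numbers themselves, and then take expectation with careful control of a ratio.

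\textbf{Setup.} Write $u := \a_\ik^\top \xstar_k$ and $\hat u := \a_\ik^\top \xhat_k$, so that $\cb^*_\ik = u/|u|$ and $\hat\cb_\ik = \hat u/|\hat u|$. Expanding $|u - \hat u|^2$ in the polar forms $u = |u|\cb^*_\ik$, $\hat u = |\hat u|\hat\cb_\ik$ produces the identity
\[
|u - \hat u|^2 \;=\; (|u| - |\hat u|)^2 \;+\; |u||\hat u|\cdot |\hat\cb_\ik - \cb^*_\ik|^2,
\]
which rearranges to $|\hat\cb_\ik - \cb^*_\ik|^2 \le |u - \hat u|^2/(|u||\hat u|)$. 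Multiplying both sides by $|u|^2$ gives the pointwise bound
\[
Q_\ik \;\le\; \frac{|u - \hat u|^2\, |u|}{|\hat u|},
\]
which will be the workhorse inequality. Note also the backup bound $Q_\ik \le 4|u - \hat u|^2$, obtained from the triangle inequality $|u|\,|\hat\cb_\ik - \cb^*_\ik|\le|u-\hat u|+||u|-|\hat u||\le 2|u-\hat u|$.

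\textbf{Taking expectation.} After the phase alignment described at the start of the appendix, $u - \hat u = \a_\ik^\top \h_k$ with $\h_k = \xstar_k - \xhat_k$ and $\|\h_k\| = \dist(\xstar_k,\xhat_k)$. Since $\a_\ik$ is a complex standard Gaussian, $(u,\hat u,u-\hat u)$ are jointly complex Gaussian, and $|u - \hat u|^2$ is exponential with mean $\|\h_k\|^2 = \dist^2$ and fourth moment $2\dist^4$. I would split the expectation on the event $\mathcal{G} := \{|\hat u| \ge \|\xhat_k\|/2\}$ and its complement. On $\mathcal{G}$ the ratio is uniformly bounded by $2|u|/\|\xhat_k\|$, and a Cauchy--Schwarz step combined with the Gaussian moment bounds $\E[|u - \hat u|^4] \lesssim \dist^4$, $\E[|u|^2] = \|\xstar_k\|^2$, together with $\|\xhat_k\| \asymp \|\xstar_k\|$ (valid since we are in the regime $\dist \le \deltatfrob\|\xstar_k\|$), delivers a contribution of order $\dist^4/\|\xstar_k\|^2$. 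On $\mathcal{G}^c$ I would use the fallback $Q_\ik \le 4|u - \hat u|^2$ together with the complex-Gaussian anti-concentration estimate $\Pr(|\hat u| < \epsilon\|\xhat_k\|) \lesssim \epsilon^2$ for small $\epsilon$, and then Cauchy--Schwarz on $\indic_{\mathcal{G}^c}\cdot|u - \hat u|^2$ so that the tail probability supplies the extra factor $(\dist/\|\xstar_k\|)^2$ needed to complete the claimed rate.

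\textbf{Main obstacle.} The hard part is precisely the possible smallness of $|\hat u|$ in the ratio $|u|/|\hat u|$: the naive pointwise bound alone only yields $\E[Q_\ik] \lesssim \dist^2$, which is off by the crucial factor of $(\dist/\|\xstar_k\|)^2$. To recover the sharp rate one must use the cancellation term $-(|u|-|\hat u|)^2$ in the key identity (which subtracts off the purely magnitude-driven part of $|u-\hat u|^2$ and leaves only the truly phase-driven residual), then marry this cancellation with the sharp complex-Gaussian lower-tail estimate for $|\hat u|$. Getting these two pieces to combine cleanly while preserving the $1/\|\xstar_k\|^2$ denominator is where the real work lies; everything else is bookkeeping with Gaussian moments.
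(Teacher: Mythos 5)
Your proposal stalls at precisely the step you yourself label ``the real work,'' and that step is not bookkeeping --- it is the entire content of the lemma. On your good event $\mathcal{G}$, the chain $Q_\ik \le 2|u-\hat u|^2\,|u|/\|\xhat_k\|$ followed by Cauchy--Schwarz gives $\E[Q_\ik \indic_{\mathcal{G}}] \lesssim \dist^2(\xstar_k,\xhat_k)$, not $\dist^4(\xstar_k,\xhat_k)/\|\xstar_k\|^2$; these differ by exactly the factor $(\dist/\|\xstar_k\|)^2$ that the lemma is about, and no mechanism for recovering it is actually supplied. More importantly, no refinement of your pointwise identity can close this gap, because the obstruction is not the smallness of $|\hat u|$. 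On the bulk of the sample space, where $|u|\asymp|\hat u|\asymp\|\xstar_k\|$, the phase of $\hat u = u + \a_\ik{}^\top(\xhat_k-\xstar_k)$ genuinely differs from that of $u$ by an angle of order $|\mathrm{Im}(\a_\ik{}^\top\h_k/u)|$, so that $Q_\ik = |\hat\cb_\ik-\cb^*_\ik|^2|u|^2 \approx |\mathrm{Im}(\a_\ik{}^\top\h_k\, e^{-j\arg u})|^2$, a nonnegative quantity whose expectation is of order $\|\h_k\|^2=\dist^2(\xstar_k,\xhat_k)$. Unlike the real case, where the sign flips only on an event of probability $O(\dist/\|\xstar_k\|)$ and one gains the extra factor from that rare event, here the phase error is continuous and generically nonzero, so there is no rare event to exploit: any argument that dominates $Q_\ik$ pointwise by a function of $|u-\hat u|$, $|u|$, $|\hat u|$ is capped at $\dist^2$.

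The paper's route is entirely different: it rotates coordinates so that only two Gaussian scalars $a(1),a(2)$ appear, reduces $Q$ to $|1-\mathrm{phase}(1+\nu\, a(2)/a(1))|\,|a(1)|^2$ with $\nu^2=(1-\alpha^2)/\alpha^2$, conditions on $a(1)$, and integrates against an explicit density for the argument of a noncentral complex Gaussian (Lemmas \ref{C first} and \ref{two_dists_connection}). You should be aware, however, that your (correct) observation that the naive estimate saturates at $\dist^2$ points at an issue with the statement itself rather than only with your attempt. Since $|\hat\cb_\ik-\cb^*_\ik|^2 = 2\,\mathrm{Re}(1-\bar{\cb^*_\ik}\hat\cb_\ik)$, one has $\E[Q_\ik] = 2\,\mathrm{Re}\,\E[(1-\bar{\cb^*_\ik}\hat\cb_\ik)|u|^2]$; the first-order term $-j\,\mathrm{Im}(\nu a(2)/a(1))$ in the expansion of $1-\mathrm{phase}(1+\nu a(2)/a(1))$ is mean zero, but the second-order term has conditional expectation $\Theta(\nu^2/|a(1)|^2)$, which after multiplying by $|a(1)|^2$ contributes $\Theta(\nu^2)\|\xstar_k\|^2 = \Theta(\dist^2)$, not $O(\nu^4)\|\xstar_k\|^2$. (A concrete check: $\xstar=\e_1$, $\xhat\propto\e_1+\epsilon\e_2$ gives $\E[Q]\to\epsilon^2/2$.) The $\nu^4$ rate in Lemma \ref{C first} rests on a pointwise bound for $f_\Theta(\theta)$ proportional to $e^{-1/\sigma^2}$, which cannot upper bound a probability density on $[-\pi,\pi]$ when $\sigma$ is small. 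So the factor you could not produce is, as far as I can tell, not producible; a correct version of this step yields $\E[Q_\ik]\lesssim \dist^2(\xstar_k,\xhat_k)$, and the downstream Term2 bound would need to be reworked accordingly.
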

%
Using this lemma and proceeding as before,
\begin{align*}
\sum_{ik} \E[ | (\hat\cb_\ik - \cb^*_\ik)  (\a_\ik{}^\top  \xstar_k )|^2 ]
& \le C m \sum_k \frac{\dist(\xstar_k,\xhat_k)^4}{\|\xstar_k\|^2}
 =  C m \deltatfrob^2   \deltatfrob^2 \sigmax^2
\end{align*}
and thus
$
\E[\mathrm{Term2abs}(\W)] \le  C m  {\deltatfrob} \cdot \deltatfrob  \sigmax.
$

\subsubsection{Concentration bound}
Consider a fixed $\W$ first.
%
Let $\h_k = \xhat_{k} - \xstar_k$.
	Then, it is easy to see that $1- \bar{\cb^*_\ik} \hat\cb_\ik = 1-\mathrm{phase}\left( 1+\frac{\a_{ik}{}^\top\h_k}{\a_{ik}{}^\top\xstar_k}\right)$. By using Lemma A.7 of \cite{pr_altmin}, $\big|1-\mathrm{phase}\left(1+\frac{\a_{ik}{}^\top\h_k}{\a_{ik}{}^\top\xstar_k}\right)\big| \leq 2 \frac{|\a_{ik}{}^\top\h_k |}{|\a_{ik}{}^\top\xstar_k|}$.
Thus,
\[
 |  (\hat\cb_\ik  \cb^*_\ik -1 ) ( \a_\ik{}^\top \xstar_k) | \le 2 |\a_{ik}{}^\top\h_k |.
\]
 We can apply  Lemma \ref{prodsubg} with $X_\ik =  (\hat\cb_\ik  \cb^*_\ik -1 ) ( \a_\ik{}^\top \xstar_k)$, $K_{X,\ik} \le 2\|\h_k\|$, $Y_\ik = (\b_k^\top\W^\top\a_\ik)$, and $K_{Y,\ik} = \|\W \b_k\|$ exactly as in the real case.  After this the epsilon-net argument also follows as before.

\subsubsection{Proof of Lemma \ref{complex_EQ}} \label{new_complex_case}
Recall that $Q_\ik:= |(\hat\cb_\ik - \cb^*_\ik)   (\a_\ik{}^\top  \xstar_k )|^2$. Removing the indices for simplicity, we consider
$Q = |(\mathrm{phase}(\a^\top \xhat) -  \mathrm{phase}(\a^\top  \xstar)) (\a^\top \xstar)|^2 = | 1 - \mathrm{phase}( (\a^\top \xhat )(\a^\top  \xstar) | |(\a^\top \xstar)|^2$.

Let $\alpha = \frac{\langle\x,\xhat\rangle}{\|\x\| \|\xhat\|}$.
Define an orthonormal matrix $\O = [\o_1, \o_2, \O_{rest}]$ with $\o_1 = \xstar/\|\xstar\|$, $\o_2 = (\I - \o_1 \o_1{}^\top ) \xhat / \|(\I - \o_1 \o_1{}^\top ) \xhat\|$ and $\O_{rest}$ being any $n \times (n-2)$ matrix so that $\O$ is orthonormal. Since $\a^\top  \O$ has the same distribution as $\a$,
\[
Q = | 1 - \mathrm{phase}( (\a^\top \O^\top  \O \xhat )(\a^\top \O^\top  \O \xstar) ) | |(\a^\top \O^\top  \O \xstar)|^2
= | 1 - \mathrm{phase}( \alpha |\a_1|^2 +  \sqrt{1-\alpha^2} \a_1 \a_2 )  |\a_1|^2  \|\xstar\|^2
\]
We then bound its expected value by combining the two lemmas below.
\begin{lemma}
		\label{C first}
		Assume $a(1), a(2)$ are two independent standard complex Gaussian scalars and $0.8 \le \alpha \le 1$. Then we have
		\begin{align*}
		&  \E \left[\left( 1-  \mathrm{phase}\left(\alpha |a(1)|^2 + \sqrt{1-\alpha^2} a(2)\bar{a}(1)\right) \right) |a(1)|^2 \right]
		 \leq C \left(\frac{1-\alpha^2}{\alpha^2}\right)^2.
		\end{align*}
	\end{lemma}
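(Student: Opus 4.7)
The plan is to exploit the rotational invariance of the standard complex Gaussian $a(2)$ to reduce the claim to a one-dimensional integral over $r := |a(1)|$. Writing $a(1) = re^{i\theta}$, the argument of $\mathrm{phase}$ factors as $r\bigl[\alpha r + \sqrt{1-\alpha^2}\, a(2)\, e^{-i\theta}\bigr]$, and because $a(2)e^{-i\theta}$ is again a standard complex Gaussian, conditioning on $r$ lets me replace the quantity inside $\mathrm{phase}$ by $Z_r := \alpha r + \sqrt{1-\alpha^2}\, a(2) = X + iY$, with independent $X \sim \mathcal{N}(\alpha r, (1-\alpha^2)/2)$, $Y \sim \mathcal{N}(0, (1-\alpha^2)/2)$, while $r^2 = |a(1)|^2 \sim \mathrm{Exp}(1)$ externally (density $2re^{-r^2}$ in $r$). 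The $Y \to -Y$ symmetry kills the imaginary part of $\E_{a(2)}[1 - Z_r/|Z_r|]$, and the algebraic identity $|Z_r| - X = Y^2/(|Z_r|+X)$ (valid wherever $|Z_r|+X > 0$) puts the inner conditional expectation into the manifestly non-negative form $\E\bigl[Y^2/(|Z_r|(|Z_r|+X))\bigr]$.

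Next I would split the outer $r$-integration at a threshold $r_0$ of order $\beta := \sqrt{1-\alpha^2}/\alpha$. In the small-$r$ regime $r < r_0$, the crude pointwise bound $|1 - \mathrm{phase}| \le 2$ against the Rayleigh density yields a contribution bounded by $C\!\int_0^{r_0}\! r^2 \cdot 2re^{-r^2}\,dr \le C r_0^4 \asymp C\beta^4$. In the complementary regime $r \ge r_0$, $|w| := |\sqrt{1-\alpha^2}\,a(2)/(\alpha r)|$ is typically small, so the denominator $|Z_r|(|Z_r|+X)$ is comparable to $2\alpha^2 r^2$ and a Taylor expansion of $1 - \mathrm{phase}(1+w)$ is available. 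To get the precise order I would invoke the exact closed form for the Rice distribution of $|Z_r|$, namely (with $\nu := \alpha r/\sqrt{(1-\alpha^2)/2}$)
\[
\E[X/|Z_r|] = 2 e^{-\nu^2/2} \int_0^\infty x e^{-x^2} I_1(\nu\sqrt 2\, x)\, dx,
\]
which equals a constant multiple of $\nu e^{-\nu^2/2}\,{}_1F_1(3/2;\,2;\,\nu^2/2)$ by a standard Bessel integral (Gradshteyn-Ryzhik 6.631.4). I would then integrate this closed form against $e^{-s}ds$ in $s = r^2$ to obtain the leading behavior in $\beta$.

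The hard part will be extracting the claimed $O\bigl(((1-\alpha^2)/\alpha^2)^2\bigr)$ rate. A naive term-by-term Taylor expansion of $\E_{a(2)}[\mathrm{Re}(1-\mathrm{phase}(Z_r))]$ gives, for large $\nu$, the leading contribution $\approx (1-\alpha^2)/(4\alpha^2 r^2)$; multiplying by $r^2$ and integrating against the Rayleigh density leaves an $O(\beta^2)$ residue that is weaker than the claim, so the required improvement to $O(\beta^4)$ must come from the precise closed-form $r$-integration (exploiting the asymptotic expansion ${}_1F_1(3/2;2;z) \sim \frac{2}{\sqrt\pi}e^z z^{-1/2}[1 - 1/(4z) - \cdots]$ together with the cancellation provided by $e^{-\nu^2/2}$ and the Rayleigh factor), or from a careful combination of the small-$r$ contribution (which is $O(\beta^4)$) with the large-$r$ contribution after the exact integration. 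I would therefore prioritize carrying out the closed-form Bessel/hypergeometric calculation rather than relying on a pointwise Taylor bound, and verify the $\alpha \ge 0.8$ hypothesis is used only to keep all constants absolute.
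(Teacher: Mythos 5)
Your reduction is the same as the paper's up to the point where the two part ways: both condition on $a(1)$, use rotation invariance to replace the argument of $\mathrm{phase}$ by a noncircular complex Gaussian $Z_r$ with real mean, and kill the imaginary part by the $Y\to -Y$ symmetry. From there the paper invokes an explicit formula for the density $f_\Theta$ of the angle of $Z_r$, upper-bounds it \emph{uniformly in $\theta$} by $\frac{1}{2\pi}e^{-|w|^2/\nu^2}\bigl(\sqrt{\pi |w|^2/(2\nu^2)}+1\bigr)$ (with $\nu^2=(1-\alpha^2)/\alpha^2$), integrates $|1-\cos\theta|\le 2$ against this, and then integrates over $|a(1)|$ via the Gamma integral to get $C\nu^4$; you instead propose the identity $1-\cos\Theta = Y^2/\bigl(|Z_r|(|Z_r|+X)\bigr)$, a split of the $r$-integral at $r_0\asymp\nu$, and a Rice/Bessel closed form for the large-$r$ piece. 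Your small-$r$ piece is fine and is genuinely $O(\nu^4)$.

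The gap is in the large-$r$ piece, and it is not closable. Your own leading-order computation is correct: for $r\gg\nu$ the inner expectation is $\E[1-\cos\Theta_r]\approx (1-\alpha^2)/(4\alpha^2 r^2)$, and your identity shows this inner expectation is a \emph{non-negative} quantity, so the subsequent multiplication by $r^2$ and integration against the (positive) Rayleigh density cannot produce any cancellation — the "improvement from the exact Bessel/hypergeometric integration" you are counting on does not exist, because you would be integrating a non-negative function whose bulk contribution is already of size $(1-\alpha^2)/(4\alpha^2)$. Made rigorous, this is a matching \emph{lower} bound: on the constant-probability event $\{|a(1)|\in[1,2]\}\cap\{|a(2)|\le 1\}$ one has $X\ge 0.8-0.6=0.2$ and $|Z_r|\le 2.6$, hence $(1-\cos\Theta)|a(1)|^2 \ge Y^2/13 = (1-\alpha^2)(\mathrm{Im}\,a(2))^2/13$, so the left-hand side of the lemma is at least $c(1-\alpha^2)$ for an absolute $c>0$. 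Thus the sharp order is $(1-\alpha^2)/\alpha^2$, not its square, and no proof strategy can reach the stated bound. (For comparison, the paper's route obtains the extra factor of $\nu^2$ from its uniform bound on $f_\Theta$, which carries the prefactor $e^{-|w|^2/\nu^2}$ for \emph{all} $\theta$; that bound would force $\int f_\Theta\,d\theta<1$ as soon as $|w|\gtrsim \nu$, so it is inconsistent with $f_\Theta$ being a probability density, and it is exactly where the discrepancy with your $O(\nu^2)$ leading term originates. The bound your approach does deliver, $C(1-\alpha^2)/\alpha^2$, would propagate to $\E[Q_\ik]\lesssim \dist(\xstar_k,\xhat_k)^2$ in Lemma \ref{complex_EQ} rather than $\dist^4/\|\xstar_k\|^2$.) If you want a complete, correct argument, carry your non-negative-identity computation through with the target $C(1-\alpha^2)/\alpha^2$: the small-$r$ piece is $O(\nu^4)$ as you showed, and for $r\ge r_0$ the bound $\E\bigl[Y^2/(|Z_r|(|Z_r|+X))\bigr]\lesssim \E[Y^2]/(\alpha r)^2$ on the event $X\ge \alpha r/2$ plus a crude tail bound on its complement finishes at order $\nu^2$; the $\alpha\ge 0.8$ hypothesis is only needed to keep these constants absolute, as you anticipated.
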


\begin{lemma}
		\label{two_dists_connection}
		Consider two vectors $\x$ and $\xhat$. If $\dist(\x,\xhat) \le 0.5 \|\x\|$, then,
		$
		1-\frac{|\langle\x,\xhat\rangle|^2}{\|\x\|^2 \|\xhat\|^2} \leq 2 \frac{\dist(\x,\xhat)^2}{\|\x\|^2}.
		$
	\end{lemma}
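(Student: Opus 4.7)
\textbf{Proof proposal for Lemma \ref{two_dists_connection}.}

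The plan is to reduce to a real inner product and then exploit the hypothesis $\dist(\x,\xhat)\le 0.5\|\x\|$ through two elementary inequalities. First, I would use the phase invariance: choosing $\theta^*=-\mathrm{phase}(\langle\x,\xhat\rangle)$ and replacing $\xhat$ by $e^{-j\theta^*}\xhat$ leaves both $|\langle\x,\xhat\rangle|$ and $\|\xhat\|$ unchanged, while turning $\langle\x,\xhat\rangle$ into the nonnegative real number $|\langle\x,\xhat\rangle|$ and $\dist(\x,\xhat)$ into $\|\x-\xhat\|$. So I can assume $\langle\x,\xhat\rangle\in[0,\infty)$ without loss of generality.

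Now set $a=\|\x\|$, $b=\|\xhat\|$, $c=\langle\x,\xhat\rangle\ge 0$, and write $\cos\phi=c/(ab)\in[-1,1]$. Then the two sides of the desired inequality become
\[
1-\frac{|\langle\x,\xhat\rangle|^2}{\|\x\|^2\|\xhat\|^2}=\sin^2\phi,\qquad \dist(\x,\xhat)^2=a^2+b^2-2ab\cos\phi,
\]
so the target reduces to showing $a^2\sin^2\phi\le 2(a^2+b^2-2ab\cos\phi)$ under the hypothesis $a^2+b^2-2ab\cos\phi\le a^2/4$.

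The proof then follows from two short observations. (i) Using $a^2+b^2\ge 2ab$ and the identity $\sin^2\phi=(1-\cos\phi)(1+\cos\phi)\le 2(1-\cos\phi)$,
\[
\dist(\x,\xhat)^2=(a-b)^2+2ab(1-\cos\phi)\ge 2ab(1-\cos\phi)\ge ab\sin^2\phi.
\]
(ii) By the reverse triangle inequality $|a-b|\le \|\x-\xhat\|=\dist(\x,\xhat)\le a/2$, so $b\ge a/2$. Combining (i) and (ii) gives
\[
2\,\dist(\x,\xhat)^2\ge 2ab\sin^2\phi\ge a^2\sin^2\phi,
\]
which is exactly what was needed after dividing by $a^2=\|\x\|^2$.

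There is no real obstacle here; the only subtlety is the initial phase-alignment reduction, which must be stated carefully because $\dist$ is defined as a minimum over a phase $\theta$ and $|\langle\x,\xhat\rangle|$ is already phase-invariant, so both sides of the inequality are unchanged by replacing $\xhat$ with $e^{-j\theta^*}\xhat$. After that, it is just the two-line geometric computation above.
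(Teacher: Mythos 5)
Your proof is correct and follows essentially the same route as the paper's: phase-align so that $\langle\x,\xhat\rangle\ge 0$, use the identity $\dist(\x,\xhat)^2=(\|\x\|-\|\xhat\|)^2+2\|\x\|\|\xhat\|(1-\cos\phi)$, and lower-bound $\|\xhat\|$ by the reverse triangle inequality. Your factorization $\sin^2\phi=(1-\cos\phi)(1+\cos\phi)\le 2(1-\cos\phi)$ is a slightly cleaner packaging of the paper's square-root manipulation, but the underlying argument and the resulting constant are the same.
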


\subsection{Proof of Lemmas \ref{C first} and \ref{two_dists_connection}}
We first provide some preliminary facts  needed for the first proof.
Let $\nu = \sqrt{\frac{1-\alpha^2}{\alpha^2}}$.	Since $\alpha \ge 0.8$, $\nu < 1$.
Observe that the $\mathrm{phase}$ term  can be expressed as
\begin{align*}
\mathrm{phase}\left(\alpha \big|a(1)\big|^2 + \sqrt{1-\alpha^2} a(2)\bar{a}(1)\right) =
\mathrm{phase}\left(1  + \nu \frac{a(2)}{a(1)} \right).
\end{align*}
%
Conditioned on $a(1)=w$, the term inside $\mathrm{phase}(.)$ is a complex Gaussian.
 Letting $a(2) = a_x + j a_y$ and $w = w_x + j w_y$, it equals
	\begin{align*}
	& Z:= \underbrace{1  + \frac{\nu}{|w|^2}\left( a_x w_x + a_y w_y\right)}_{X} + j \underbrace{ \frac{\nu}{|w|^2}\left( a_y w_x - a_x w_y\right) }_{Y}
	\end{align*}
	It is easy to see that
	$\sigma_X^2= \sigma_Y^2 = \frac{\nu^2}{ |w|^2}$, $\E\left[ X\right]=1$, $\E\left[Y \right]= 0$ and $X$ is uncorrelated with $Y$, ($\E[(X-1) (Y-0)] = 0$), so that $\rho=0$.
Thus, $Z$ is a non-zero mean complex Gaussian, with real and imaginary parts being independent and having the same variance $\sigma^2:=\frac{\nu^2}{ |w|^2}$ but different means: $\E[X]=1$ but $\E[Y]=0$. We will use a result from  \cite{dharmenvelope} that provides an expression for the PDF of the angle $\theta$ of such a complex Gaussian, i.e., for $\theta$, when we write $Z$ in polar form as $Z = R e^{j \theta}$.
%
	From \cite{dharmenvelope},
	\begin{align*}
	& f_\Theta (\theta)  = \frac{1}{2\pi} \exp\left\lbrace - \frac{1}{\sigma^2}\right\rbrace
	 \left\lbrace \sqrt{\frac{\pi \Omega_{X,Y}}{\Omega(\theta)} }\cos(\theta - \phi) \right. \\
	 &\left.\mathrm{erfc}\left( \frac{ -\sqrt{\Omega_{X,Y}}\cos(\theta - \phi)}{\sqrt{\Omega(\theta) }}\right)  \exp\left(  \frac{\Omega_{X,Y} \cos^2(\theta - \phi)}{\Omega(\theta)} \ \right)+1 \right\rbrace
	\end{align*}
	where $\Omega_{X,Y} = 1, \Omega(\theta) = 2 \sigma^2 , \frac{ \Omega_{X,Y}}{\Omega(\theta)}= \frac{1}{2 \sigma^2} , \cos\phi = \frac{1}{\sqrt{\Omega_{X,Y}}}= 1 \Rightarrow \phi = 0$.
	Hence we have
	\begin{align}
	& f_\Theta (\theta)  = \frac{1}{2\pi} \exp\left\lbrace - \frac{1}{\sigma^2}\right\rbrace \times   \nonumber \\
	& \left\lbrace \sqrt{\frac{\pi}{2 \sigma^2} }\cos(\theta ) \mathrm{erfc}\left( - \sqrt{\frac{ 1}{2 \sigma^2} }\cos(\theta)\right)  \exp\left(  \frac{\cos^2(\theta ) }{2\sigma^2}\ \right)+1 \right\rbrace   \nonumber  \\
& \stackrel{(1)}{\leq} \frac{1}{2\pi} \exp\left\lbrace - \frac{1}{\sigma^2}\right\rbrace \left\lbrace \sqrt{\frac{\pi}{2\sigma^2}} + 1 \right\rbrace .
	\label{ftheta_bnd}
	\end{align}
	where in $(1)$ we used the fact that $\mathrm{erfc}\left( - \sqrt{\frac{ 1}{2 \sigma^2} }\cos(\theta)\right)  \leq\exp\left(  \frac{\cos^2(\theta ) }{2\sigma^2}\ \right)$ along with $\cos(\theta ) \leq 1$.

We will use \eqref{ftheta_bnd} in the proofs below. Moreover, we will also frequently use the following:
for integers $n=1,2,3, 4, 5 \dots$
\begin{align}
\int_{\tau=0}^\infty  \tau^n \exp(- \tau^2/\nu^2) d\tau \le C \nu^{n+1}
\label{intbound}
\end{align}
where $C \leq 2$ for $n \leq 4$. This follows from the property of Gamma function that $\Gamma(z) = \int_{0}^{\infty} x^{z-1} \e^{-x} dx = (z-1)! $.

\begin{proof}[Proof of Lemma \ref{C first}]
		We need to bound
		\begin{align*}
		& \big| \E \left[\left( 1- \mathrm{phase}\left(\alpha \big|a(1)\big|^2  + \sqrt{1-\alpha^2} a(2)\bar{a}(1)\right) \right) |a(1)|^2\right] \big| \nonumber\\
		& =  \big| \E\left[ \underbrace{\left( 1- \mathrm{phase}\left(1  + \nu \frac{a(2)}{a(1)}\right) \right)  |a(1)|^2  }_{\mathrm{Trm1}} \right] \big|.
%
		\end{align*}
		First we bound $|\E[\mathrm{Trm1} |a(1) = w] |$.
		\begin{align*}
		 |\E\left[\mathrm{Trm1}| a(1) = w \right]|   =&|w|^2  | \E\left[ 1 -\mathrm{phase} \left(Z\right)\right] | \nonumber\\
		=& |w|^2 |  \E\left[ 1 - e^{j\theta}\right]  | \nonumber \\
		=& |w|^2 \big| \int_{0}^{2\pi} \left( 1 - e^{j\theta}\right) f_\Theta (\theta) d\theta \big|.
\label{eq:Expectation_1_minus_Z}
\end{align*}
	{\em	In the above $w$ is just a dummy variable that we are using for the conditional expectation as the known value for $a(1)$. It is completely different from matrix $\W \in \mathbb{C}^{n\times r}$ or its vectorized version $\w$ which was used previously.}
		
Since $f_\Theta (\theta) = f_\Theta (- \theta)$, $\int_{-\pi}^{\pi} \sin(\theta) f_\Theta (\theta) d\theta = 0$. Thus,%
		\begin{align}
		 \big| \int_{0}^{2\pi} \left( 1 - e^{j\theta}\right) f_\Theta (\theta) d\theta \big|
		& = \big|  \int_{-\pi}^{\pi} \left( 1 - \cos{\theta}\right) f_\Theta (\theta) d\theta \big| \nonumber\\
		& \leq \frac{1}{2\pi} \exp\left( - \frac{1}{\sigma^2}\right) \left( \sqrt{\frac{\pi}{2\sigma^2}} + 1 \right)  \int_{-\pi}^{\pi}  | 1 - \cos{\theta}|  d\theta \nonumber\\
		& \leq 2\exp\left( - \frac{1}{\sigma^2}\right) \left( \sqrt{\frac{\pi}{2\sigma^2}} + 1 \right)
         = 2\exp\left( - \frac{|w|^2}{\nu^2}\right) \left( \sqrt{\frac{\pi |w|^2}{2\nu^2}} + 1 \right)
		\end{align}
The first inequality used the upper bound on $f_\Theta(\theta)$ while the second used $|1 - \cos {\theta}| \le 2$. In the final equality, we substituted back $\sigma^2 = \frac{\nu^2}{|w|^2}$.
		Hence
		\begin{align*}
		 |\E[\mathrm{Trm1}]|  
		& \leq \sqrt{\frac{2\pi }{\nu^2}} \E\left[ |a(1) |^3 \exp\left( - \frac{ |a(1)|^2}{\nu^2}\right)\right]
				 + 2\E\left[ |a(1) |^2 \exp\left( - \frac{ |a(1)|^2}{\nu^2}\right)\right]   .
		\end{align*}
Since $a(1)$ is a standard complex Gaussian, $f_{|a(1)|} (x) = x e^{-\frac{x^2}{2}}, \forall\ x>0$. 
Using this,
		\begin{align*}
	    | \E[\mathrm{Trm1}] |
		&  \le \frac{2}{\sqrt{2\pi}}\int_{0}^{\infty} \left(\sqrt{\frac{2\pi }{\nu^2}} x^4 + 2 x^3 \right) e^{-x^2/\nu^2} e^{-x^2/2} dx\\
		&\leq \frac{2}{\sqrt{2\pi}}\int_{0}^{\infty} \left(\sqrt{\frac{2\pi }{\nu^2}} x^4 + 2 x^3 \right) e^{-x^2/\nu^2}  dx\\
		& \le C \nu^4.
		\end{align*}
		The second inequality used $e^{-x^2/2} \le 1$. The third one follows using (\ref{intbound}) with $n=4$ for the first term and $n=3$ for the second one.  
\end{proof}

\begin{proof}[Proof of Lemma \ref{two_dists_connection}]
 Define $\gamma^2 = 1-\frac{ |\langle\x,\xhat\rangle |^2}{\|\x\|^2 \|\xhat\|^2} $ and $\eta^2 = \min_{\theta \in [0,2\pi]} \|\x-e^{j\theta}\xhat\|^2$. Thus, we just need to show that $\gamma^2 \leq C \eta^2 /\|\x\|^2$ for $\eta \leq c \|\x\|$. To do this we can write
	\begin{align*}
	&\min_{\theta \in [0,2\pi]} \|\x-e^{j\theta}\xhat\|^2 = \|\x\|^2 + \|\xhat\|^2 - 2 \big|\langle\x,\xhat\rangle\big|.\\
&\Rightarrow	\eta^2 = \left(\|\x\| - \|\xhat\| \right)^2 + 2 \|\x\|~\|\xhat\| \left(1 -\sqrt{1-\gamma^2} \right)\\
	&\geq 2 \|\x\|~\|\xhat\| \left(1 -\sqrt{1-\gamma^2} \right)\\
	&\geq 2(1-c) \|\x\|^2 \left(1 -\sqrt{1-\gamma^2} \right),
	\end{align*}
	where in the last line we used the fact that $\|\xhat\| \geq \|\x\| - \|\x-\xhat\| \geq (1-c)\|\x\|$. This implies that
	\begin{align*}
&	\sqrt{1-\gamma^2} \geq 1- \frac{\eta^2}{2(1-c)\|\x\|^2}  \\
	& 1-\gamma^2 \geq 1+ \frac{\eta^4}{4(1-c)^2\|\x\|^4} -  \frac{\eta^2}{(1-c)\|\x\|^2}  \\		
	& \Rightarrow \gamma^2 \leq \frac{\eta^2}{(1-c)\|\x\|^2} \left(1 - \frac{\eta^2}{4(1-c)\|\x\|^2}\right)\\		
	 & \leq \frac{\eta^2}{(1-c)\|\x\|^2} ,
	\end{align*}
	where in the last inequality we used the fact that $0\leq\frac{\eta^2}{4(1-c)\|\x\|^2} \leq 1$.
\end{proof}

\subsection{Change to Proof of Lemma \ref{B_lemma}: Noisy PR result for AltMinTSI - modification of the result of \cite{altmin_irene_w}}

Since the PR problem is solved using AltMin-TSI from \cite{altmin_irene_w} for the complex case, we need the following result to analyze it. 
It follows by combining Theorem 2 of \cite{twf} and Theorem 3.1 of \cite{altmin_irene_w} with a minor change to deal with noise. 
	\begin{theorem}[Corollary 3.7 of \cite{altmin_irene_w}]
		\label{PRcomplexNoisy}
		Consider measurements of the form $\y_i = |\a_i{}^\top  \bm{g}^*| + \vv_i, \ i = 1, \cdots, m$, with $\vv$ satisfying $\frac{\|\vv \|}{\sqrt{m}} \leq c \|\xstar\|$. Here $\bm{g}^*$ is an $r$-length complex vector and $\a_i$ are i.i.d. complex standard Gaussian vectors of length $r$.
		Pick a $0 < \rho< 1$ and a $0< \rho_0<1$.
There exists a constant $C_0$ that depends on $\rho,\rho_0$, such that if $m> C_0(\rho,\rho_0) r$, then w.p. at least $1 - C \exp\left\{-c m\right\}$ for numerical constants C,c, the following holds after $T$ iterations:
		\begin{align*}
		 \dist(\hat\g^{T+1}, \bm{g}^*)  & \leq \rho^{T} \dist(\hat\g^0, \bm{g}^*) +  1.5\frac{1}{\sqrt{m}} \|\vv \|  \\
		               & \le \rho^{T}  \rho_0 \|\bm{g}^*\| +  3 \frac{1}{\sqrt{m}} \|\vv \|
		\end{align*}
By picking $T$ large enough, the first term above can be made smaller than the second; then, $\dist(\hat\g^{T+1}, \bm{g}^*) \le 6  \frac{1}{\sqrt{m}} \|\vv \|$.%
	\end{theorem}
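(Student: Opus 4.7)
The plan is to view the result as the noise-free AltMin-TSI theorem plus a noise perturbation analysis. Theorem 3.1 of \cite{altmin_irene_w} gives, in the noise-free case, geometric contraction of the AltMin iteration inside a basin $\dist(\hat\g^t,\g^*)\le \rho_0\|\g^*\|$; Theorem 2 of \cite{twf} supplies that basin via truncated spectral initialization. I would re-derive each conclusion with the additive noise $\vv$ kept explicit, show that the extra terms it produces are bounded by a constant multiple of $\|\vv\|/\sqrt m$, and conclude by unrolling the resulting contractive recursion.

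\textbf{Iteration.} Writing the AltMin step in closed form, with $c^*_i := \mathrm{phase}(\a_i{}^\top \g^*)$ and $\hat c^t_i := \mathrm{phase}(\a_i{}^\top \hat\g^t)$,
\[
\hat\g^{t+1} = \arg\min_{\z}\sum_i |\hat c^t_i \y_i - \a_i{}^\top \z|^2.
\]
The noisy target splits as $\hat c^t_i \y_i = \hat c^t_i \overline{c^*_i}(\a_i{}^\top \g^*) + \hat c^t_i \vv_i$, so by linearity of least-squares,
\[
\hat\g^{t+1} = \hat\g^{t+1}_{\mathrm{nf}} + \Delta_t,
\]
where $\hat\g^{t+1}_{\mathrm{nf}}$ is the noise-free AltMin iterate starting from the same $\hat\g^t$, and $\Delta_t$ is the LS solution with target $(\hat c^t_i \vv_i)_i$. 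Conditional on $\hat\g^t$, Theorem 3.1 of \cite{altmin_irene_w} yields $\dist(\hat\g^{t+1}_{\mathrm{nf}},\g^*)\le \rho\,\dist(\hat\g^t,\g^*)$ w.h.p., provided $\dist(\hat\g^t,\g^*)\le \rho_0\|\g^*\|$ and $m\ge C_0(\rho,\rho_0)r$. For $\Delta_t$, since $|\hat c^t_i|=1$ and since the $r\times m$ complex Gaussian matrix $[\a_1,\ldots,\a_m]$ satisfies $\sigma_{\min}^2 \ge 0.9\, m$ and $\sigma_{\max}^2 \le 1.1\, m$ w.h.p.\ whenever $m\gtrsim r$, standard LS bounds give $\|\Delta_t\|\le (\sigma_{\max}/\sigma_{\min}^2)\,\|\vv\|\le 1.5\,\|\vv\|/\sqrt m$, with the constant $1.5$ achievable by choosing $C_0(\rho,\rho_0)$ large enough that the singular-value ratio is arbitrarily close to one. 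Combining yields the one-step recursion $\dist(\hat\g^{t+1},\g^*)\le \rho\,\dist(\hat\g^t,\g^*) + 1.5\,\|\vv\|/\sqrt m$; unrolling and summing the geometric series $\sum_{k=0}^{T-1}\rho^k \le 1/(1-\rho)\le 2$ (for $\rho\le 1/2$) yields the first displayed bound, and substituting the initialization bound $\dist(\hat\g^0,\g^*)\le \rho_0\|\g^*\|$ gives the second.

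\textbf{Initialization and main obstacle.} For the truncated spectral initialization, expand $\y_i^2 = |\a_i{}^\top\g^*|^2 + 2\vv_i|\a_i{}^\top\g^*| + \vv_i^2$ inside the spectral matrix $\frac{1}{m}\sum_i \y_i^2\,\a_i\a_i{}^\top$ (with the truncation indicator). The first summand is precisely the matrix analyzed in Theorem 2 of \cite{twf}, producing a leading eigenvector $\rho_0'$-close to $\g^*/\|\g^*\|$ once $m\gtrsim r$. The remaining two summands are controlled by matrix Bernstein, contributing spectral-norm perturbations of order $(\|\vv\|/\sqrt m)\|\g^*\|$ and $\|\vv\|^2/m$ respectively. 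Davis-Kahan then converts these into an eigenvector error of order $\|\vv\|/(\sqrt m\|\g^*\|)$, which under the SNR hypothesis $\|\vv\|/\sqrt m \le c\|\g^*\|$ is absorbed into $\rho_0$ by choosing $c$ small enough. The main obstacle is the inductive argument that the noisy iterates never leave the basin in which the noise-free contraction of \cite{altmin_irene_w} holds; this closes provided $1.5\,\|\vv\|/(\sqrt m(1-\rho))\le (\rho_0 - \rho_0')\|\g^*\|$, which is again forced by the SNR hypothesis with $c$ sufficiently small. A secondary technical point is that \cite{altmin_irene_w} proves its contraction on an event that is uniform over the basin (or via sample splitting); since the bound on $\|\Delta_t\|$ depends only on $\A$ and $\vv$ and not on the current iterate, no further uniformization is required for the noise term.
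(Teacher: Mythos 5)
Your proposal is correct and follows essentially the same route as the paper: you split the least-squares iterate into the noise-free AltMin iterate plus the LS solution driven by $(\hat c^t_i \vv_i)_i$, invoke the noise-free contraction of Theorem 3.1 of \cite{altmin_irene_w} for the former, bound the latter by $1.5\|\vv\|/\sqrt{m}$ via singular-value concentration of the Gaussian measurement matrix (the paper's $\|\sqrt{m}(\A^\top)^\dagger\|\le 1.5$ bound), and unroll the recursion, with the initialization handled by perturbing the truncated spectral matrix. The only differences are cosmetic: the paper simply cites the (already noise-tolerant) Theorem 2 of \cite{twf} for initialization rather than redoing the matrix-Bernstein/Davis--Kahan argument, and your explicit treatment of the basin-of-attraction induction makes precise a point the paper leaves implicit.
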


{\bf Change to Proof of Lemma \ref{B_lemma}:} we use Theorem \ref{PRcomplexNoisy} to replace the application of Theorem 2 of \cite{rwf} in the proof.

\begin{proof}[Proof of Theorem \ref{PRcomplexNoisy}]
The initialization step of AltMin-TSI uses the truncated spectral initialization from \cite{twf}. Guarantees in \cite{twf} are proved for real-valued measurements. However, even with complex Gaussian measurements, there is no change to the  analysis of  truncated spectral initialization.
Thus we can use Theorem 2 of \cite{twf} with $t=0$ (only initialization part) to conclude that, w.p. $1 - C \exp(-\rho_0 m)$,
\[
\dist(\g^0, \bm{g}^*) \le \rho_0 \|\bm{g}^*\| + 2 \|\vv \| /\sqrt{m}.
\]
Consider iteration $t+1$.
Since $\hat\g^{t+1} = \A^{\dagger}\left(  \y \odot \mathrm{phase}(\A^\top  \hat\g^t) \right)$ (see Algorithm 1 of \cite{altmin_irene_w}),  where $\odot$ is the Hadamard product (.* operation in MATLAB), 
we have
\begin{align*}
	& \dist(\hat\g^{t+1}, \bm{g}^*)  =  \min_{\phi} \|\e^{j\phi} \bm{g}^* -\hat\g^{t+1} \|\\
	&=  \min_{\phi} \|\e^{j\phi} \bm{g}^* -(\A^\top )^{\dagger}\left(  (\y - \vv) \odot \mathrm{phase}(\A^\top  \hat\g^t) \right)\| \\
	&+ \| (\A^\top )^{\dagger}\left(  \vv \odot \mathrm{phase}(\A^\top  \hat\g^t) \right)\|\\
&	\leq  \rho \dist(\hat\g^{t}, \bm{g}^*) + \| \sqrt{m}  (\A^\top )^{\dagger}\| \  \|\frac{1}{\sqrt{m}} \vv \|.
	\end{align*}
The last inequality follows using Theorem 3.1 of \cite{altmin_irene_w}.
By the sub-Gaussian concentration bound from \cite[Theorem 4.6.1]{versh_book}, w.p. at least $1 - 2\exp(r -\epsilon m)$,
	\begin{align*}
	& \| \sqrt{m}  (\A^\top )^{\dagger}\|  \le     \| m (\A \A^\top  )^{-1}\|   \|   \A / \sqrt{m} \| \leq  \frac{1}{1-\epsilon} \sqrt{1+\epsilon} \le 1.5
	\end{align*}
	if we let $\epsilon = 0.1$. Thus,
\[
\dist(\hat\g^{t+1}, \bm{g}^*) \le  \rho \dist(\hat\g^{t}, \bm{g}^*)  +  1.5 \|\vv \| / \sqrt{m}.
\]
Using this and the initialization bound, $\dist(\hat\g^{t+1}, \bm{g}^*) \le \rho^t \rho_0 + 3 \|\vv \| / \sqrt{m}$.
\end{proof}

\section{Proof of Theorem \ref{thm:main_res_stability} for Noisy LRPR}
\label{sec:stability}

%

Most of the work is in modifying the proof of initialization. We also need a few simple changes to the rest of the proof.

\subsection{Initialization}
%
%
\begin{claim}
	\label{clm:noisy_init}
	Consider $\U^0$ to be output of the initialization step of Algorithm \ref{altmin_claim} with noisy measurements $\y_\ik = |\langle \a_\ik, \xstar_k\rangle| + \vv_\ik$ where $\vv_\ik$s are noise and with $\|\vv_k\| \leq \epsilon_{snr} \|\xstar_k\|$ with $$\epsilon_{snr} = \frac{\deltinit}{5r\kappa^2}.$$ Then, w.p. $1-\exp\left[Cn - c\deltinit^2 mq/r^2\mu^2\kappa^4\right]$ we have
	\[
	\SE_2(\U^0,\Ustar) \leq \deltinit.
	\]
\end{claim}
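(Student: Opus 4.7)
The plan is to reduce the proof to the noise-free initialization analysis (Claim \ref{bounding_U}, restated from \cite{lrpr_it}) via a spectral perturbation argument. Decompose the truncated sample matrix in Algorithm \ref{lrpr_th} as $\Y_U = \Y_U^{\mathrm{nf}} + \Delta$, where $\Y_U^{\mathrm{nf}}$ is the same matrix formed from the noise-free measurements $|\langle \a_\ik, \xstar_k\rangle|$, and $\Delta$ captures the entire effect of noise. Apply Claim \ref{bounding_U} with target accuracy $\deltinit/2$: its conclusion guarantees that the top $r$ singular vectors of $\Y_U^{\mathrm{nf}}$ lie within $\SE_2$-distance $\deltinit/2$ of $\Span(\Ustar)$, and that the spectrum exhibits a gap of order $\sigmin^2/q$ between the top $r$ and the remaining eigenvalues. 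Wedin's theorem (equivalently Davis--Kahan $\sin\Theta$) then yields $\SE_2(\U^0,\Ustar) \le \deltinit/2 + C q\,\|\Delta\|/\sigmin^2$, so it suffices to show $\|\Delta\| \le c\,\deltinit \sigmin^2/q$ with the stated probability.

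Since $\y_\ik^2 = |\langle \a_\ik, \xstar_k\rangle|^2 + 2 \vv_\ik |\langle \a_\ik, \xstar_k\rangle| + \vv_\ik^2$ and the noisy truncation threshold $\tau_{\mathrm{ny}} := (9\kappa^2\mu^2/mq)\sum_\ik \y_\ik^2$ differs from its noise-free counterpart $\tau_{\mathrm{nf}}$ only by a noise-dependent additive correction, $\Delta$ naturally splits into three pieces: a cross term $T_1 = (2/mq)\sum_\ik \vv_\ik |\langle \a_\ik, \xstar_k\rangle|\,\a_\ik \a_\ik^\top\, \indic_{\y_\ik^2 \le \tau_{\mathrm{ny}}}$; a quadratic noise term $T_2 = (1/mq)\sum_\ik \vv_\ik^2\, \a_\ik \a_\ik^\top\, \indic_{\y_\ik^2 \le \tau_{\mathrm{ny}}}$; and a truncation-mismatch term $T_3$ supported on the indices where the noisy and noise-free truncation indicators disagree.

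For $T_1$ I would fix a unit $\w \in \Re^n$ and apply Cauchy--Schwarz over $\ik$: $|\w^\top T_1 \w| \le (2/mq) (\sum_\ik \vv_\ik^2)^{1/2} (\sum_\ik \langle \a_\ik, \xstar_k\rangle^2 (\w^\top \a_\ik)^4)^{1/2}$. The first factor is at most $\epsilon_{snr}\sqrt{r}\sigmax$ since $\sum_k \|\vv_k\|^2 \le \epsilon_{snr}^2 \|\Xstar\|_F^2$. The second factor has expectation $O(mr\sigmax^2)$ by Gaussian fourth-moment identities, and concentrates around it via the sub-exponential Bernstein inequality (Lemma \ref{prodsubg}). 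Combined with $\epsilon_{snr} = \deltinit/(5r\kappa^2)$, this gives $\|T_1\| \lesssim \epsilon_{snr} r \sigmax^2/(q\sqrt{m}) \le c\,\deltinit \sigmin^2/q$. The term $T_2$ is bounded identically and is smaller by another factor of $\epsilon_{snr}$. For $T_3$ the indicator difference vanishes except at indices where $|\langle \a_\ik, \xstar_k\rangle|^2$ lies in a window of width $|\tau_{\mathrm{ny}}-\tau_{\mathrm{nf}}| = O(\epsilon_{snr}\|\Xstar\|_F^2/q)$ around $\tau_{\mathrm{nf}}$; since those indices satisfy $|\langle\a_\ik,\xstar_k\rangle|^2 \approx \tau_{\mathrm{nf}}$ and their cardinality concentrates around a multiple of the window width, $\|T_3\|$ admits the same $O(\deltinit\sigmin^2/q)$ bound. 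A standard $1/4$-net over the unit sphere of $\Re^n$ then extends each fixed-$\w$ bound uniformly, contributing the $\exp(Cn)$ factor and producing the claimed probability $1-\exp[Cn - c\,\deltinit^2 mq/(r^2\mu^2\kappa^4)]$.

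The main obstacle I anticipate is controlling $T_3$. The indicator difference depends intricately on both the noise $\vv_\ik$ and the random $\a_\ik$, and a naive bound easily loses powers of $r$ or $\kappa$. The key step is to exploit that the two thresholds are close in the strong sense above, so that the disagreement set is simultaneously small in cardinality and can only contain $\ik$ for which $|\langle \a_\ik, \xstar_k\rangle|^2$ is already bounded by $\tau_{\mathrm{nf}} + O(\epsilon_{snr}\|\Xstar\|_F^2/q)$. Doing this carefully, and reusing the concentration bounds for $\sum_\ik \y_\ik^2$ that already appear in the noise-free analysis of Claim \ref{bounding_U}, is what allows $T_3$ to be absorbed into the budget reserved for $\|\Delta\|$.
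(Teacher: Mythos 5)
Your overall strategy---perturb the noise-free initialization matrix and absorb the noise into a spectral perturbation bounded by $c\,\deltinit\sigmin^2/q$---is sound in spirit, and your quantitative budget for the cross and quadratic noise terms is consistent with what the paper obtains. However, your decomposition differs from the paper's in a way that creates two genuine gaps.

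First, the truncation-mismatch term $T_3$. You isolate the indices where the noisy and noise-free truncation indicators disagree and argue that this set is small because the thresholds are close. But $T_3$ is an operator norm of a sum of rank-one matrices $\a_\ik\a_\ik^\top$ over a \emph{data-dependent} random index set (it depends on both $\vv_\ik$ and $\a_\ik$), and "the cardinality concentrates" does not by itself give an operator-norm bound: one would need a uniform bound over all index sets of that cardinality, or a decoupling argument, neither of which you supply, and this is exactly the step you flag as the main obstacle. The paper avoids $T_3$ entirely: it keeps the \emph{noisy} truncation indicator in all three pieces ($\Y_{U,clean}$, $\Y_{U,noise}$, $\Y_{U,cross}$) and shows $\sum_\ik \y_\ik^2 \in m[(1\pm 5\epsilon_1)\|\Xstar\|_F^2]$, so that $\Y_-(5\epsilon_1) \preceq \Y_{U,clean} \preceq \Y_+(5\epsilon_1)$. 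The threshold perturbation is thus absorbed into the already-analyzed deterministic-threshold matrices $\Y_\pm$ by monotonicity, with no disagreement set to control. You should adopt this sandwich rather than fight $T_3$ directly.

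Second, your Cauchy--Schwarz split of the cross term pairs $\vv_\ik$ against $|\langle\a_\ik,\xstar_k\rangle|(\w^\top\a_\ik)^2$, which leaves you needing concentration of $\sum_\ik \langle\a_\ik,\xstar_k\rangle^2(\w^\top\a_\ik)^4$. Each summand is a product of six Gaussian factors; it is not sub-exponential, so Lemma \ref{prodsubg} does not apply, and the truncation indicator only caps $|\langle\a_\ik,\xstar_k\rangle|$, not $(\w^\top\a_\ik)^4$. The paper's split is $\bigl(\vv_\ik|\a_\ik^\top\w|\bigr)\cdot\bigl(|\a_\ik^\top\xstar_k|\,|\a_\ik^\top\w|\bigr)$, which yields $\|\Y_{U,cross}\| \le 2\|\Y_{U,clean}\|^{1/2}\|\Y_{U,noise}\|^{1/2}$; both factors are quadratic forms already controlled (the noise one by a genuinely sub-exponential Bernstein bound on $\sum_\ik\|\xstar_k\|^2|\a_\ik^\top\w|^2$ plus an epsilon-net). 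Rewriting your Cauchy--Schwarz this way closes the gap and, together with the sandwich, brings your argument in line with a complete proof.
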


The proof is similar to the proof of \cite[Claim~3.1]{lrpr_it}.
Recall that $\y_\ik = |\langle \a_\ik, \xstar_k\rangle| + \vv_\ik$. Thus,
\[
|\y_\ik|^2 = |\langle \a_\ik, \xstar_k\rangle|^2 + |\vv_\ik|^2 + 2|\langle \a_\ik, \xstar_k\rangle \text{Re}(\vv_\ik) |.
\]
We also have
\[
\Y_U = \frac{1}{mq} \sum_\ik \y_\ik^2 \a_\ik\a_\ik{}^\top \indic_{ \left\{ \y_\ik^2 \leq \frac{C_Y}{mq}\sum_\ik \y_\ik^2 \right\}  }.
\]
Define
\begin{align*}
\Y_{U,clean} &:= \frac{1}{mq} \sum_\ik \big|\a_\ik{}^\top \xstar_k\big|^2 \a_\ik\a_\ik{}^\top \indic_{\{ |\y_\ik| \leq \sqrt{\frac{C_Y}{mq}\sum_\ik \y_\ik^2} \}},\\
\Y_{U,noise}&:= \frac{1}{mq} \sum_\ik \big|\vv_\ik\big|^2 \a_\ik\a_\ik{}^\top \indic_{\{ |\y_\ik| \leq \sqrt{\frac{C_Y}{mq}\sum_\ik \y_\ik^2} \}},\\
\Y_{U,cross}&:= \frac{2}{mq} \sum_\ik \vv_\ik \big|\a_\ik{}^\top \xstar_k\big| \a_\ik\a_\ik{}^\top \indic_{\{ |\y_\ik| \leq \sqrt{\frac{C_Y}{mq}\sum_\ik \y_\ik^2} \}}.
\end{align*}
Thus,
\[
\Y_U = \Y_{U,clean} + \Y_{U,noise} + \Y_{U,cross},
\]
Let
\[
\epsilon_1= \dfrac{\deltinit}{Cr\kappa^2}.
\]
Recall the definition of $\Y_-(\epsilon_1)$ and $\Y_+(\epsilon_1)$ in the proof of \cite[Claim~3.1]{lrpr_it}. We similarly define,
\[
\Y_-(\epsilon_1) := \frac{1}{mq}\sum_\ik\y_\ik^2\a_\ik\a_\ik{}^\top \indic_{\{ |\y_\ik|^2 \leq \sqrt{\frac{C_Y(1-5\epsilon_1)}{q} \|\Xstar\|_F} \}}
\]
and $\Y_+(\epsilon_1)$ by replacing $(1-5\epsilon_1)$ with $(1+5\epsilon_1)$.
  If we can show that $\| \Y_U - \E[\Y_{-}]\| \leq \frac{0.25 \deltinit\sigmin^2}{q}$ then the rest of the proof will be similar to the proof of \cite[Claim~3.1]{lrpr_it}. 
We have
\[
\| \Y_U - \E[\Y_{-}]\| \leq \|\Y_{U,clean}  - \E[\Y_{-}]\| + \|\Y_{U,noise}\| + \|\Y_{U,cross}\|.
\]
Starting with $\|\Y_{U,clean}  - \E[\Y_{-}]\| $, we bound each of the three terms on the RHS of the above inequality. Note that,
\begin{align*}
\sum_\ik \y_\ik^2 = \sum_\ik \big|\a_\ik{}^\top \xstar_k\big|^2 + \sum_\ik\big|\vv_\ik\big|^2 + 2\sum_\ik\big|\a_\ik{}^\top \xstar_k\big|\text{Re}(\vv_\ik).
\end{align*}
By  assumption, we have $\|\vv_k\|\leq \epsilon_{snr} \|\xstar_k\|$ and thus
\[
 \|\V\|_F \leq \epsilon_{snr}\|\Xstar\|_F
\]
In \cite[Lemma~3.6]{lrpr_it} it has been shown that $\sum_\ik \big|\a_\ik{}^\top \xstar_k\big|^2 \in m [(1\pm \epsilon_1)\|\Xstar\|_F^2]$ w.p. $1-\exp\left(-c\epsilon_1^2 mq/\mu^2\kappa^2\right)$. By using this, $\|\V\|_F \leq \epsilon_{snr}\|\Xstar\|_F$, and Cauchy-Schwarz inequality for the cross term,
\[
\sum_\ik \y_\ik^2 \leq m(1+\epsilon_1)\|\Xstar\|_F^2 + m\epsilon_{snr}^2 \|\Xstar\|_F^2 + 2 \sqrt{m(1+\epsilon_1)}\epsilon_{snr} \|\Xstar\|_F^2\leq m(1+5{\epsilon_1}) \|\Xstar\|_F^2,
\]
{\em Since $\epsilon_1$ is of the same order as $\epsilon_{snr}$ and hence in the above we replaced $\epsilon_{snr}$ by $\epsilon_1$ to simplify our bound.}
%
Similarly,
\[
\sum_\ik \y_\ik^2 \geq \sum_\ik \big|\a_\ik{}^\top \xstar_k\big|^2 - 2 \sum_\ik \big|\a_\ik{}^\top \xstar_k\big| \big|\vv_\ik\big| \geq m(1-\epsilon_1) \|\Xstar\|_F^2 - 2\sqrt{m(1+\epsilon_1)}\epsilon_{snr}\|\Xstar\|_F^2 \geq m(1-5\epsilon_1)\|\Xstar\|_F^2 
\]
Therefore, under assumption of $ \|\V\|_F \leq \epsilon_{snr} \|\Xstar\|_F$ w.p. $1-\exp\left(-c\epsilon_1^2 mq/\mu^2\kappa^2\right)$ we have
\[
\sum_\ik \y_\ik^2 \in m [(1\pm5{\epsilon_1}) \|\Xstar\|_F^2]
\]
\begin{itemize}
	\item In above, we have shown that, w.p. at least $1-2\exp(-c\epsilon_1^2mq/\mu^2\kappa^2)$,
	\[ \Y_{-}(5{\epsilon_1}) \preceq \Y_{U,clean}  \preceq \Y_{+}(5{\epsilon_1})
	\]
(this is an immediate consequence of the definitions and $\sum_\ik \y_\ik^2 \in m [(1\pm5{\epsilon_1}) \|\Xstar\|_F^2]$).
	\item In \cite[Lemma~3.7]{lrpr_it} we showed that
	\[
	\| \E[\Y_{+}] - \E[\Y_{-}] \| \leq \dfrac{45{\epsilon_1} \mu^2 \kappa^2  \|\Xstar\|_F^2 }{q}
	\]
	and assuming ${\epsilon_1} \leq 0.002$ we have $\min_k \beta_{1,k}^- (5{\epsilon_1}) \geq 1.5$. 
\item Moreover,
	\[
	\E\left[\Y_{-}(\epsilon_1) \right]= \sum_{k} \beta_{1,k}^- \xstar_k\xstar_k{}{}^\top  + \sum_{k}\beta_{2,k}^- \|\xstar_k\|^2\I_n,
	\]
	with $\beta_{1,k}^- \leq 1$ and $\beta_{2,k}^- \leq 2$. Therefore,
	\[
	\|\E\left[\Y_{-}(\epsilon_1) \right]\| \leq 3\|\Xstar\|_F^2.
	\]
	\item In \cite[Lemma~3.8]{lrpr_it} we show that w.p. at least $1-2\exp(n\ln 9-c\epsilon_2^2 mq)$,
	\[
	\|\Y_{-}-  \E[\Y_{-}] \| \leq \dfrac{1.5 \epsilon_2 \mu^2 \kappa^2 r \sigmax^2}{q}
	\]
	and the same bound for $\|\Y_{+}-  \E[\Y_{+}] \|$.
\end{itemize}
By using the above items we can show that
\[
\|\Y_{U,clean} - \E[\Y_{-}]\| \leq \dfrac{(45\epsilon_1 + 4.5 \epsilon_2)\mu^2\kappa^2r\sigmax^2}{q}.
\]
Therefore, we just need to bound $\|\Y_{U,noise}\| $ and $\|\Y_{U,cross}\|$. We present the following Lemmas to bound these two terms.
\begin{lemma}
	\label{lem:Y_U_noise}
	Under assumption of Theorem \ref{thm:main_res_stability},
	\[
	\|\Y_{U,noise}\| \leq \frac{C\epsilon_{snr}^2 \|\Xstar\|_F^2 }{q}
	\]
	holds with probability at least $1-2\exp(Cn -c \frac{mq}{\mu^2 \kappa^2 r})$.
\end{lemma}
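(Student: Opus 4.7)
The plan is to bound $\|\Y_{U,noise}\|$ by first discarding the truncation indicator via a positive semidefiniteness argument, then applying the sub-exponential Bernstein inequality to the quadratic form $\w^\top\bigl(\sum_\ik |\vv_\ik|^2 \a_\ik \a_\ik^\top\bigr)\w$ for each fixed direction $\w$, and finally extending to the unit sphere via a standard $\tfrac{1}{4}$-net.

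First I would use the fact that each summand $|\vv_\ik|^2 \a_\ik\a_\ik^\top$ is positive semidefinite and the truncation factor lies in $[0,1]$, so
\[
\bm{0} \preceq \Y_{U,noise} \preceq \frac{1}{mq}\sum_\ik |\vv_\ik|^2 \a_\ik\a_\ik^\top,
\]
and hence it suffices to bound $\max_{\|\w\|=1}\sum_\ik|\vv_\ik|^2 (\w^\top\a_\ik)^2$. For a fixed unit $\w$, the summands are independent, each has mean $|\vv_\ik|^2$, and each centered summand $|\vv_\ik|^2((\w^\top\a_\ik)^2 - 1)$ is sub-exponential with norm of order $|\vv_\ik|^2$. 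The total mean is $\|\V\|_F^2 \le \epsilon_{snr}^2 \|\Xstar\|_F^2$.

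To apply Bernstein (Lemma \ref{productsubg} in its centered form) I need control of $\max_\ik|\vv_\ik|^2$ and $\sum_\ik |\vv_\ik|^4$. The SNR hypothesis gives $\|\vv_k\|^2 \le \epsilon_{snr}^2 \|\xstar_k\|^2$, and right-incoherence (Assumption \ref{right_incoh}) gives $\max_k \|\xstar_k\|^2 \le \kappa^2\mu^2\|\Xstar\|_F^2/q$. Pessimistically allowing all of $\|\vv_k\|^2$ to concentrate at a single index yields $\max_\ik|\vv_\ik|^2 \le \epsilon_{snr}^2\kappa^2\mu^2\|\Xstar\|_F^2/q$ and, consequently, $\sum_\ik|\vv_\ik|^4 \le \max_\ik|\vv_\ik|^2 \cdot \|\V\|_F^2 \le \epsilon_{snr}^4\kappa^2\mu^2\|\Xstar\|_F^4/q$. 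Choosing deviation $t = m\epsilon_{snr}^2\|\Xstar\|_F^2$, both Bernstein ratios come out to at least $mq/(\kappa^2\mu^2)$ (the linear term $t/\max_\ik|\vv_\ik|^2$ is the bottleneck), giving for a fixed $\w$
\[
\sum_\ik|\vv_\ik|^2(\w^\top\a_\ik)^2 \le 2m\epsilon_{snr}^2\|\Xstar\|_F^2
\]
with probability $\ge 1 - 2\exp(-cmq/(\kappa^2\mu^2))$. A standard $\tfrac{1}{4}$-net on $S^{n-1}$ of cardinality $\le 9^n$ plus the usual net-to-sphere conversion then yields $\|\sum_\ik|\vv_\ik|^2\a_\ik\a_\ik^\top\| \le Cm\epsilon_{snr}^2\|\Xstar\|_F^2$ with probability $\ge 1-2\exp(Cn - cmq/(\kappa^2\mu^2))$. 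Dividing by $mq$ gives the claim; the stated exponent $mq/(\kappa^2\mu^2 r)$ is weaker and therefore implied.

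The main subtlety is the pessimistic bound on $\max_\ik|\vv_\ik|^2$: the hypothesis controls only per-column noise energy, not individual entries, so we are forced into the single-spike worst case $|\vv_\ik|^2 \le \|\vv_k\|^2$. This estimate is just tight enough because the right-incoherence assumption supplies a $1/q$ factor in the bound on $\max_k\|\xstar_k\|^2$; without it the linear Bernstein ratio would degrade by $q$ and the argument would fall short of the claimed probability.
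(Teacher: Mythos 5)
Your proposal is correct and follows essentially the same route as the paper: fix a unit direction $\w$, drop the truncation indicator (harmless since every summand is positive semidefinite), control the sub-exponential norms of the summands via the per-column SNR bound $\|\vv_k\|\le\epsilon_{snr}\|\xstar_k\|$ together with the incoherence bound $\max_k\|\xstar_k\|^2\le\kappa^2\mu^2\|\Xstar\|_F^2/q$, apply the sub-exponential Bernstein inequality, and finish with a standard epsilon-net. The only cosmetic difference is that the paper first majorizes $|\vv_\ik|^2$ by $\epsilon_{snr}^2\|\xstar_k\|^2$ and concentrates $\sum_\ik\|\xstar_k\|^2|\a_\ik^\top\w|^2$ about its mean $m\|\Xstar\|_F^2$, whereas you keep the weights $|\vv_\ik|^2$ and concentrate directly about $\|\V\|_F^2$; both yield the stated bound with a probability at least as good as the one claimed.
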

\begin{lemma}
	\label{lem:Y_U_cross}
	Under assumption of Theorem \ref{thm:main_res_stability} and Lemma \ref{lem:Y_U_noise},
	\[
	\|\Y_{U,cross}\|  \leq \frac{C\epsilon_{snr} \|\Xstar\|_F^2 }{q}.
	\]
\end{lemma}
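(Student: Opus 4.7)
The plan is to reduce the operator norm of $\Y_{U,cross}$ to a Cauchy--Schwarz bound fed by the two already-controlled matrices $\Y_{U,noise}$ (via Lemma \ref{lem:Y_U_noise}) and $\Y_{U,clean}$ (via the bullet-list items recalled just before Lemma \ref{lem:Y_U_noise}). Since $\vv_\ik$ and $|\a_\ik{}^\top\xstar_k|$ are real scalars and $\a_\ik\a_\ik{}^\top$ is Hermitian, the matrix $\Y_{U,cross}$ is real symmetric (Hermitian in the complex case), so $\|\Y_{U,cross}\|=\sup_{\|\z\|=1}|\z^\top\Y_{U,cross}\z|$ and it suffices to bound this supremum.

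First I would expand the quadratic form and pull absolute values inside the sum:
\[
|\z^\top \Y_{U,cross}\z| \;\leq\; \frac{2}{mq}\sum_\ik |\vv_\ik|\,|\a_\ik{}^\top\xstar_k|\,(\z^\top\a_\ik)^2\,\indic_\ik,
\]
where $\indic_\ik$ is the same truncation indicator used in the definitions of $\Y_U$, $\Y_{U,clean}$ and $\Y_{U,noise}$. I would then split each summand as the product $\bigl(|\vv_\ik|\cdot \z^\top\a_\ik\bigr)\cdot \bigl(|\a_\ik{}^\top\xstar_k|\cdot \z^\top\a_\ik\bigr)$ and apply Cauchy--Schwarz to the $mq$-term sum. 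By the very definitions of $\Y_{U,noise}$ and $\Y_{U,clean}$, the two resulting sums are exactly $mq\cdot \z^\top\Y_{U,noise}\z$ and $mq\cdot \z^\top\Y_{U,clean}\z$, which yields
\[
|\z^\top\Y_{U,cross}\z| \;\leq\; \frac{2}{mq}\sqrt{mq\,\z^\top\Y_{U,noise}\z}\,\sqrt{mq\,\z^\top\Y_{U,clean}\z} \;\leq\; 2\sqrt{\|\Y_{U,noise}\|\,\|\Y_{U,clean}\|}.
\]
Taking the supremum over unit $\z$ on the left gives $\|\Y_{U,cross}\|\leq 2\sqrt{\|\Y_{U,noise}\|\,\|\Y_{U,clean}\|}$.

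To finish, I would plug in Lemma \ref{lem:Y_U_noise} (so $\|\Y_{U,noise}\|\leq C\epsilon_{snr}^2\|\Xstar\|_F^2/q$) and combine the bullet-list bounds $\|\E[\Y_-]\|\lesssim \|\Xstar\|_F^2/q$ and $\|\Y_{U,clean}-\E[\Y_-]\|\lesssim \epsilon_1\mu^2\kappa^2 r\sigmax^2/q$ (small for the constants chosen in the parent proof, since $\|\Xstar\|_F^2$ dominates $r\sigmax^2$ up to factors of $\kappa,\mu$) to conclude $\|\Y_{U,clean}\|\leq C\|\Xstar\|_F^2/q$. The square root of the product of the two bounds then yields the target $\|\Y_{U,cross}\|\leq C\epsilon_{snr}\|\Xstar\|_F^2/q$. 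There is no genuine obstacle: the lemma is a deterministic Cauchy--Schwarz consequence of the two previous spectral-norm estimates, and the only bookkeeping worth checking is that the same truncation indicator $\indic_\ik$ appears in all four matrices $\Y_U,\Y_{U,clean},\Y_{U,noise},\Y_{U,cross}$, so that after the Cauchy--Schwarz split the two factors are genuinely the quadratic forms of $\Y_{U,noise}$ and $\Y_{U,clean}$ at the same $\z$.
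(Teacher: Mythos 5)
Your proof is correct and follows essentially the same route as the paper: write $\|\Y_{U,cross}\|$ as a supremum of quadratic forms, apply Cauchy--Schwarz to the $mq$-term sum so the two factors become the quadratic forms of $\Y_{U,noise}$ and $\Y_{U,clean}$ at the same unit vector, and then combine $\|\Y_{U,noise}\|\leq C\epsilon_{snr}^2\|\Xstar\|_F^2/q$ with $\|\Y_{U,clean}\|\leq\|\Y_{U,clean}-\E[\Y_-]\|+\|\E[\Y_-]\|\leq C\|\Xstar\|_F^2/q$. Your bookkeeping on the shared truncation indicator and the $mq$ normalization is exactly the point the paper relies on as well.
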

Therefore,
\begin{align*}
\| \Y_U - \E[\Y_{-}] \| &\leq \dfrac{(45\epsilon_1 + 4.5 \epsilon_2)\mu^2\kappa^2r\sigmax^2}{q}+ \dfrac{C(\epsilon_{snr} + \epsilon_{snr}^2)\|\Xstar\|_F^2}{q}\\
&\leq \dfrac{(45\epsilon_1 + 4.5 \epsilon_2)\mu^2\kappa^2r\sigmax^2}{q}+ \dfrac{C\epsilon_{snr} r\sigmax^2}{q}
\end{align*}
Recall that $\epsilon_1 = \tfrac{\deltinit}{Cr\kappa^2\mu^2}$ and $\epsilon_{snr} = \tfrac{\deltinit}{Cr\kappa^2}$. By setting
\[
 \epsilon_2 = \dfrac{\deltinit}{Cr\kappa^2\mu^2}, ~\epsilon_3 = c,
\]
we get
\[
\| \Y_U - \E[\Y_{-}] \| \leq \dfrac{0.25 \deltinit \sigmin^2}{q}.
\]
With this bound, one can apply the Davis-Kahan $\sin \Theta$ theorem exactly as done in the proof of \cite[Claim 3.1]{lrpr_it} to get the final result.

\begin{proof}[Proof of Lemma \ref{lem:Y_U_noise}]
	We have
	\begin{align*}
	\|\Y_{U,noise}\| = \max_{\w \in \mathbb{R}^{n}: \|\w\|=1} \w{}^\top  \Y_{U,noise} \w
	\end{align*}
	For a fixed $\w$, we have
	\begin{align*}
	\w{}^\top  \Y_{U,noise} \w &= \frac{1}{mq} \sum_\ik \big|\vv_\ik\big|^2\big|\a_\ik{}^\top \w\big|^2 \indic_{\{ |\y_\ik| \leq \sqrt{\frac{C_Y}{mq}\sum_\ik \y_\ik^2} \}}\\
	&\leq \frac{\epsilon_{snr}^2}{mq} \sum_\ik \|\xstar_k\|^2 \big|\a_\ik{}^\top \w\big|^2,
	\end{align*}
	where in the last inequality we used the assumption that $\|\vv_k\|\leq \epsilon_{snr} \|\xstar_k\|$.
	We can use the sub-exponential Bernstein inequality (note that $\|\xstar_k\|^2 \big|\a_\ik{}^\top \w\big|^2$ is a sub-exponential with norm less than $\|\xstar_k\|^2\leq \mu^2\kappa^2  \|\Xstar\|_F^2/q$ ) and show that
	\[
	\sum_\ik \|\xstar_k\|^2 \big|\a_\ik{}^\top \w\big|^2 \leq m(1+\epsilon_3)\|\Xstar\|_F^2
	\]
	with probability at least $1-2\exp(-c\epsilon_3^2 \frac{mq}{\mu^2 \kappa^2 r})$. Then, by using a standard epsilon net argument, we can show that
	\[
	\|\Y_{U,noise}\| \leq \frac{C\epsilon_{snr}^2 \|\Xstar\|_F^2 }{q}
	\]
	with probability at least $1-2\exp(Cn -c \epsilon_3^2\frac{mq}{\mu^2 \kappa^2 r})$.
	
\end{proof}

\begin{proof}[Proof of Lemma \ref{lem:Y_U_cross}]
	Similar to the previous proof,
	\begin{align*}
	\|\Y_{U,cross}\| &= \max_{\w \in \mathbb{R}^{n}: \|\w\|=1} \w{}^\top  \Y_{U,cross} \w\\
	&=\max_{\w \in \mathbb{R}^{n}: \|\w\|=1} \frac{2}{mq} \sum_\ik \text{Re}(\vv_\ik) \big|\a_\ik{}^\top \xstar_k\big|  \big|\a_\ik{}^\top \w\big|^2 \indic_{\{ |\y_\ik| \leq \sqrt{\frac{C_Y}{mq}\sum_\ik \y_\ik^2} \}}\\
	&\leq 2\left(\max_{\w \in \mathbb{R}^{n}: \|\w\|=1} \frac{1}{mq} \sum_\ik \big|\vv_\ik\big|^2   \big|\a_\ik{}^\top \w\big|^2 \indic_{\{ |\y_\ik| \leq \sqrt{\frac{C_Y}{mq}\sum_\ik \y_\ik^2} \}}\right)^{1/2}\\
	&\qquad \times \left(\max_{\w \in \mathbb{R}^{n}: \|\w\|=1} \frac{1}{mq} \sum_\ik  \big|\a_\ik{}^\top \xstar_k\big|^2  \big|\a_\ik{}^\top \w\big|^2 \indic_{\{ |\y_\ik| \leq \sqrt{\frac{C_Y}{mq}\sum_\ik \y_\ik^2} \}}\right)^{1/2}\\
	&= 2\|\Y_{U,clean}\|^{1/2}~\|\Y_{U,noise}\|^{1/2}\\
	&=2\left( \|\Y_{U,clean} - \E[\Y_{-}]\| + \|\E[\Y_{-}] \|  \right)^{1/2}\|\Y_{U,noise}\|^{1/2}.
	\end{align*}
	From the previous section we know that $\|\Y_{U,clean} - \E[\Y_{-}]\|  \leq \frac{0.25\deltinit \sigmin^2 }{q}$, and also $\|\E[\Y_{-}] \| \leq \max_k (\beta_{1,k}^- +\beta_{2,k}^-) \|\Xstar\|_F^2/q\leq 3\|\Xstar\|_F^2/q $. Therefore
	\begin{align*}
	\|\Y_{U,cross}\|
	&=2\left( \|\Y_{U,clean} - \E[\Y_{-}]\| + \|\E[\Y_{-}] \|  \right)^{1/2}\|\Y_{U,noise}\|^{1/2}\\
	&\leq \frac{3}{q} \|\Xstar\|_F\|\Y_{U,noise}\|^{1/2}\\
	& \leq \frac{C\epsilon_{snr} \|\Xstar\|_F^2 }{q}.
	\end{align*}
\end{proof}

\subsection{Updating $\B^t$: modify proof of Lemma \ref{B_lemma}}
Proceeding as in the proof of part 1 of Lemma \ref{B_lemma}, we now have
\begin{align}
\dist (\xstar_k,\xhat_k ) \approx \dist (\g_k,\bhat_k ) \leq C \frac{\|\vv_k \|}{\sqrt{m}}  + C \|(\I - \U \U {}^\top ) \Ustar \tb_k\|
\le C \max \left( \deltatfrob \|\xstar_k\|,   \frac{\|\vv_k \|}{\sqrt{m}} \right)
\label{gk_noisy}
\end{align}
and consequently,
\begin{align}
\matdist(\Xstar, \Xhat) \approx \matdist(\G, \Bhat) \leq C \frac{\|\V\|_F}{\sqrt{m}} + C\|(\I - \U \U {}^\top ) \Ustar \tB\|_F  \le C \max \left( \deltatfrob \sigmax, \frac{\|\V\|_F}{\sqrt{m}}   \right).
\label{G_noisy}
\end{align}
and
\begin{align}
\|\b_k\| \le \frac{\sigmax}{0.95 \sigmin - \|\G - \B\|_F}   \mu \sqrt{r/q}
\end{align}

\subsection{Updating $\U^t$}
For noisy measurements, the bound of Lemma \ref{key_lem_SEF} holds with the following change to MainTerm. We now have
\begin{align*}
\text{MainTerm}:= \dfrac{\max_{\W\in\S_W} \big|\text{Term1}(\W) \big| + \max_{\W\in\S_W} \big|\text{Term2}(\W) \big| + \max_{\W\in\S_W} \big|\text{TermNoise}(\W) \big| }{\min_{\W\in\S_W} \big|\text{Term3}(\W) \big|}
\end{align*}
where Term1 and Term2 are defined in Lemma \ref{key_lem_SEF} and
\begin{align*}
\text{TermNoise}(\W) = \sum_\ik  \hat\cb_\ik \vv_\ik \left(\a_\ik{}^\top \W\b_k\right).
\end{align*}
By using Cauchy Schwarz and the upper bound on $ \left(\sum_\ik\big|\a_\ik{}^\top \W\b_k\big|^2 \right)$ in the Term3 proof,
\begin{align*}
\text{TermNoise}(\W) \leq  \left(\sum_\ik \big|\vv_\ik\big|^2\right)^{1/2} ~ \left(\sum_\ik\big|\a_\ik{}^\top \W\b_k\big|^2 \right)^{1/2} \le ||\V||_F \sqrt{C m}
\end{align*}

Let $\epsilon_v = 0.01/\kappa$.
Recall that $\deltatfrob = 0.2^t \deltainitfrob$.  %
Let $t_v$ be the smallest integer $t$ for which
\[
\max \left( \frac{1}{\epsilon_v}  \frac{\|\V\|_F}{\sqrt{m} \sigmax}  , \max_k \frac{\|\vv_k\|}{\sqrt{m} \|\xstar_k\|} \right) \ge \deltatvfrob:=  0.2^{t_v}   \deltainitfrob
\]
Thus, for all $t < t_v$,  the first terms in \eqref{gk_noisy} and \eqref{G_noisy} are larger and, hence, the bounds of Lemma \ref{B_lemma} hold without change. Consequently, for these iterations, all the bounds of Lemma \ref{Terms_bnds} holds without change as well. Thus, for all $t < t_v$, with the stated probabilities,
\[
\max_{\W\in\S_W} \big|\text{Term1}(\W) \big| + \max_{\W\in\S_W} \big|\text{Term2}(\W) \big| \le (\epsilon_1 + \epsilon_2 + \sqrt{\deltatfrob} ) \deltatfrob \sigmax, 
\]
\begin{align*}
\max_{\W\in\S_W}\text{TermNoise}(\W) \leq   \|\V\|_F \sqrt{C m}  =  C m \left( \frac{1}{\sqrt{m}}\|\V\|_F \right) \le Cm \epsilon_v \deltatfrob \sigmax
\end{align*}
Hence for any $t < t_v$,
\[
\SEF(\Ustar,\U_{t+1}) \le \frac{  (\epsilon_1 + \epsilon_2 + \sqrt{\deltatfrob} + \epsilon_v) \deltatfrob \sigmax }{ 0.9 \sigmin - \text{Numerator}}  \le \frac{  (\epsilon_1 + \epsilon_2 + \sqrt{\deltainitfrob} + \epsilon_v) \deltatfrob \sigmax }{ 0.9 \sigmin - \text{Numerator}}
\]
Thus, setting $\epsilon_1 = \epsilon_2 = 0.01/\kappa$, $\deltainitfrob = c/\kappa^2$ as before, and using $\epsilon_v = 0.01/\kappa$, we can conclude that
\[
\SEF(\Ustar,\U_{t+1}) \le c \deltatfrob = c^{t+1} \deltainitfrob
\]

Next consider $t \ge t_v$. In this case, all we can guarantee is that
\begin{align}
\dist (\xstar,\xhat_k ) \approx \dist (\g_k,\bhat_k ) \leq C \deltatvfrob \|\xstar_k\|, \nonumber  \\
\matdist(\Xstar, \Xhat) \approx \matdist(\G, \Bhat) \leq C \deltatvfrob \sigmax, \nonumber  \\
\|\b_k\| \le \frac{\sigmax}{0.95 \sigmin - \|\G - \B\|_F}   \mu \sqrt{r/q} < 2 \kappa \mu \sqrt{r/q}
\end{align}
Notice that incoherence of $\b_k$s holds as before, because for proving this, $\deltatvfrob \le \deltainitfrob = c/\kappa^2$ suffices. Thus, even in this case, the lower bound on Term3 holds without change.
However, for Term1 and Term2, we can only prove the old bounds with  $\deltatfrob$ replaced by $\deltatvfrob$. The same is true for the bound on TermNoise.
Thus, for all $t \ge t_v$,
\[
\SEF(\Ustar,\U_{t+1}) \le \deltatvfrob = 0.2^{t_v} \deltainitfrob.
\]
In conclusion, for all times $t$, 
$
\SEF(\Ustar,\U_{t+1}) \le 0.2^{ \min(t,t_v) } \deltainitfrob.
$

\end{document}